\documentclass[a4paper,onecolumn,unpublished,11pt]{quantumarticle}
\pdfoutput=1

\usepackage{aux_tikz}

\begin{document}

\renewcommand{\ref}[1]{{\color{red}[use \texttt{cref} not \texttt{ref}]}}

\title{Emergent causal order and time direction: bridging causal models and tensor networks}

\author{Carla Ferradini$^1$}
\author{Giulia Mazzola$^1$}
\author{V.\ Vilasini$^{1,2}$}
\affiliation{$^1$Institute for Theoretical Physics, ETH Zurich, 8093 Zürich, Switzerland\\
$^2$Université Grenoble Alpes, Inria, 38000 Grenoble, France}
\maketitle

\begin{abstract}
Can the direction of time and the causal structure of space-time be inferred from operational principles? Causal models and tensor networks offer complementary perspectives: the former encodes cause-effect relations via directed graphs, with intrinsic ordering; the latter describes multipartite systems on undirected graphs, without presupposing directionality. We construct two-way mappings between these two frameworks, linking direction agnostic correlation functions and operational notions of signalling. This clarifies the operational meaning of causal influence in tensor networks and introduces discrete ``space-time rotations'' of causal models which preserve signalling relations. Applying our framework to holographic tensor networks, we use tools from causal inference, like graph-separation, to analyse emergent causal structures. By permitting cyclic and indefinite causal structures, our results enable transfer of techniques across tensor networks and a range of causality frameworks.
\end{abstract}
\newpage
\tableofcontents

\newpage
\section{Introduction}

A central question in the foundations of physics is whether the direction of time --- and, more generally, properties of space-time --- can be inferred rather than assumed a priori. In standard causal reasoning, temporal directionality is built into the formalism: causes precede effects, interventions propagate forward, and explanatory structure reflects this asymmetry. Causal models~\cite{pearl_2009,Barrett_2019,Henson_2014,Barrett_2021,Quantum_paper,forre_2018,VilasiniRennerPRA,VilasiniRennerPRL,Costa_2016} (both classical and quantum) exemplify this viewpoint. They describe relations between processes using directed graphs whose arrows encode causal ordering, and whose building blocks possess a predefined input-output structure. Even in extensions allowing cycles~\cite{Bongers_2021, Barrett_2021,Quantum_paper}, a local orientation remains embedded in the components themselves, so that directionality is part of the description from the outset. Although causality in such models can be formalised from an information-theoretic perspective, without assuming a background space-time, this is done through the connectivity of channels and systems, whose distinction between inputs and outputs induces directionality.

In many areas of physics, however, causal structure plays a different role. In relativity and quantum field theory, causal relations are induced by the geometry of space-time, with light cones determining which events may influence one another. In candidate theories of quantum gravity, space-time itself is expected to be dynamical and possibly subject to quantum superposition, motivating the question of whether causal and geometric structure might emerge from more primitive quantum correlations. Tensor networks~\cite{Cotler_2018,Vidal_2003,Verstraete_2004,Levin_2007,Qi_2018}, since no causal structure is assumed as a primitive, provide a natural framework in which to explore this possibility. As graphical representations of many-body quantum states and processes, tensor networks are defined on undirected graphs and treat all tensor legs symmetrically, without presupposing any global or local time direction. They encode correlations and compositional structure while remaining agnostic about causal order. In this sense, tensor networks offer a setting in which space-time or causal structure, if present, must emerge rather than be imposed.

Each framework, however, faces certain challenges in addressing the operational emergence of time. In tensor networks, a notion of quantum causal influence based on correlation functions has been proposed~\cite{Cotler_2019}. While this is an intriguing proposal, its relation to established operational notions of causality and causal reasoning remains unclear. In particular, is not evident whether such definitions capture genuine causal structure or rather reflect patterns of correlation, nor whether they support the kind of operational reasoning characteristic of causal models. Causal models, by contrast, encode directionality at the structural level. While this feature underlies their explanatory power, it is sometimes regarded as conceptually delicate when applied to fundamental physics, where many underlying laws are time-reversal symmetric. From this perspective, the asymmetry of directed graphs may appear to reflect modelling choices --- such as the role of interventions or agents --- rather than fundamental physical structure. How to understand or adapt such directionality in scenarios with dynamical or indefinite causal order remains an open question.

Addressing these challenges calls for a framework that relates directionally agnostic descriptions of correlations to operational notions of causality. In this work, we establish a precise connection between tensor networks and quantum causal models by constructing explicit mappings between the two. These mappings translate correlation functions in tensor networks into probabilistic quantities in, possibly cyclic, causal models and relate causal influence to signalling relations. By linking a structure without predefined orientation to one in which causal direction is explicit, we provide a setting in which questions about the emergence, interpretation, and operational meaning of time direction can be formulated sharply and analysed systematically, while being able to apply tools from both frameworks consistently. In doing so, we aim to build a bridge between approaches to causal inference and questions on emergent space-time, and to clarify how operational notions of causality can arise within directionally agnostic quantum frameworks.

\begin{figure}[h!]
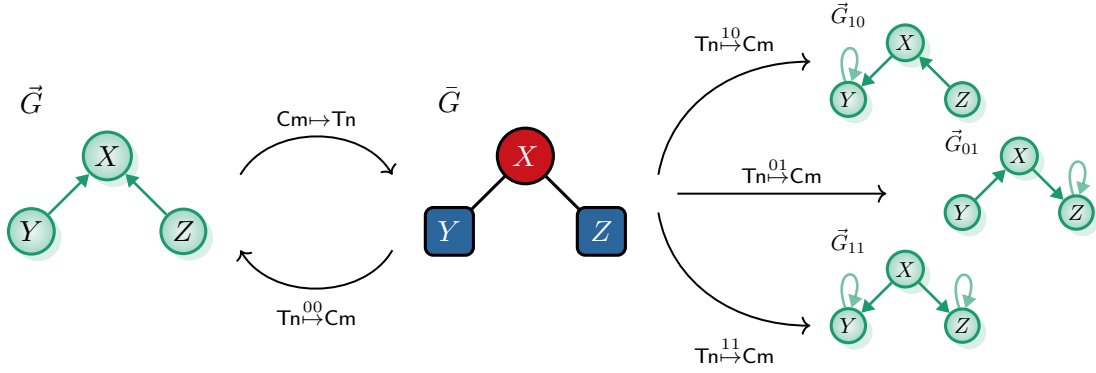

\centering
 $\centertikz{
        \node at (-6.5,0){};
        \begin{scope}[shift = {(-5.5,0)},transform shape]
            \node at (-1,0.75) {$\graphnamedir$};
            \node[unode] (x) at (0,0) {$X$};
            \node[unode] (y) at (-1,-1) {$Y$};
            \node[unode] (z) at (1,-1) {$Z$};
            \draw[qleg] (y) --node[left] {} (x);
            \draw[qleg] (z) --node[right] {} (x);
        \end{scope}  
        \draw[line width =0.75pt,->,color=black] (-3.75,-.25) to [in =120, out=60] node[above] {\indexstyle{\cm\mapsto\tn}}(-1.75,-0.25);
        \draw[line width =0.75pt,<-,color=black] (-3.75,-1.25) to [in=240, out=300] node[below] {\indexstyle{\tn\stackrel{00}{\mapsto}\cm}}(-1.75,-1.25);
        \begin{scope}[shift = {(0,0)},transform shape]  
            \node at (-1,0.75) {$\graphnameud$};
            \draw[mybaseline] (0,0) -- (-1,-1);
            \draw[mybaseline] (0,0) -- (1,-1);
            \node[tnode, fill=red, minimum size=21pt] (x) at (0,0){\color{white}{$X$}};
            \node[tnrect, fill=blue, minimum size=18pt] (y) at (-1,-1){\color{white}{$Y$}};
            \node[tnrect, fill=blue, minimum size=18pt] (z) at (1,-1){\color{white}{$Z$}};                  
        \end{scope}
        \draw[line width =0.75pt,->,color=black] (1.75,-.25) to [in =180, out=80] (3.75,1.25) node[above,xshift=-1cm] {\indexstyle{\tn\stackrel{10}{\mapsto}\cm}};
        
        \draw[line width =0.75pt,->,color=black] (2,-0.5) --node[above] {\indexstyle{\tn\stackrel{01}{\mapsto}\cm}} (4.75,-0.5);
        \draw[line width =0.75pt,->,color=black] (1.75,-.75) to [in =180, out=280] (3.75,-2.25)node[below,xshift=-1cm] {\indexstyle{\tn\stackrel{11}{\mapsto}\cm}};
        
        \begin{scope}[shift={(6.5,0)}]
            \begin{scope}[scale = 0.75, transform shape,shift={(-2,2)}]
            \node at (-1,0.5) {$\graphnamedir_{10}$};
                \node[unode] (x) at (0,0) {$X$};
                \node[unode] (y) at (-1,-1) {$Y$};
                \node[unode] (z) at (1,-1) {$Z$};
                \draw[qleg] (x) -- (y);
                \draw[qleg] (z) -- (x);
                \draw[qleg,opacity=0.6] (y.100) .. controls ($(y.100)+(-0.25,0.75)$) and ($(y.80)+(0.25,0.75)$) .. (y.80);
            \end{scope}
            \begin{scope}[scale = 0.75, transform shape]
                \node at (-1,0.25) {$\graphnamedir_{01}$};
                \node[unode] (x) at (0,0) {$X$};
                \node[unode] (y) at (-1,-1) {$Y$};
                \node[unode] (z) at (1,-1) {$Z$};
                \draw[qleg] (y) -- (x);
                \draw[qleg] (x) -- (z);
                \draw[qleg,opacity=0.6] (z.100) .. controls ($(z.100)+(-0.25,0.75)$) and ($(z.80)+(0.25,0.75)$) .. (z.80);
            \end{scope}
            \begin{scope}[scale = 0.75, transform shape,shift={(-2,-2)}]
            \node at (-1,0.5) {$\graphnamedir_{11}$};
                \node[unode] (x) at (0,0) {$X$};
                \node[unode] (y) at (-1,-1) {$Y$};
                \node[unode] (z) at (1,-1) {$Z$};
                \draw[qleg] (x) -- (y);
                \draw[qleg] (x) -- (z);
                \draw[qleg,opacity=0.6] (y.100) .. controls ($(y.100)+(-0.25,0.75)$) and ($(y.80)+(0.25,0.75)$) .. (y.80);
                \draw[qleg,opacity=0.6] (z.100) .. controls ($(z.100)+(-0.25,0.75)$) and ($(z.80)+(0.25,0.75)$) .. (z.80);
            \end{scope}  
        \end{scope}
    }$  
    \caption{{\bf Visual intuition behind main results}\quad Our work defines operationally motivated mappings between (possibly cyclic) quantum causal models and tensor networks, going in both directions. Through this, a causal model on the directed graph $\graphnamedir$ can be uniquely mapped to a tensor network on the undirected graph $\graphnameud$. From the tensor network on $\graphnameud$, we can obtain four different causal models, one for each choice of directions of edges. The choice associated with the directions of $\graphnamedir$ will lead again to the original causal model. While, other choices, which lead to causal models on the directed graphs on the right, might require self-loops to give a valid model. Such possibility can be easily checked through evaluating a partial trace. We show that the mappings preserve certain useful properties, allow to interpret alternative causal models obtained through the mapping as ``rotations'' of the original model, and to apply graph-separation theorems from causal models to analyse tensor networks. \vspace{-0.25cm}}
    \label{fig:one}
\end{figure}

\subsection{Summary of contributions and structure of the paper}
\label{sec: summary_contrib}
In this work, we present mappings from causal models to tensor networks and vice versa which link signalling relations in the former to causal influence in the latter. In~\Cref{sec: TNrev}, we provide a review on tensor network and the definition of causal influence proposed in~\cite{Cotler_2019}, collecting clarifications and open questions regarding it. In~\Cref{sec: CMrev}, we review a cyclic causal modelling framework presented in~\cite{Quantum_paper,Ferradini_2025C}, together with definitions of interventions and signalling. The main contributions of the paper are the following:
\begin{itemize}
    \item \textbf{Mappings between tensor networks and causal models:} In~\Cref{sec: mappings}, we propose concrete mappings which tightly link properties of the two frameworks (see \cref{fig:one} for an illustrated example). The features of these mappings naturally portray the absence of a predefined direction of inputs and outputs in tensor networks:  a given causal model can always be uniquely mapped to a tensor network on the same graph deprived of directions (\Cref{sec:CMtoTNmap}); while for tensor networks to causal models we present two mappings, the second being a strict generalisation of the first. The first mapping (\Cref{sec:TNtoCMmap}) preserves the graph structure of the tensor network up to augmenting it with direction, but, depending on the tensors of the network can lead to one, multiple or no causal models. The generalised mapping~(\Cref{sec:TNtoCMmapgen}), on the other hand, allows to construct a valid causal model for any choice of directions up to introducing cyclicity.
    Finally, in \Cref{sec:prop_maps} we investigate properties of these mappings.
    \item \textbf{Operational Causal Influence:} In~\Cref{sec: new ci}, we use insights from the mappings to analyse the definition of causal influence in a tensor network proposed in~\cite{Cotler_2019}, and propose a slightly modified version of it in~\Cref{sec:newCIdef}, called \textit{operational causal influence}, with a clear operational interpretation in terms of conditional probabilities.
    \item \textbf{Connections between signalling and operational causal influence:} In~\Cref{sec:signalling and CI}, we show that the operational causal influence in a given tensor network is preserved as signalling in the image causal model given by our mappings, and vice versa. Thus, we build a concrete connection between these two properties that allows the study of one using tools known for the other. Moreover, this shows that the causal influence definition of \cite{Cotler_2019}, up to the modification we propose, does in fact capture causal structure rather than mere correlations, in an operationally meaningful way.
     
    \item \textbf{Discrete rotations of causal models:} An interesting consequence of the mappings is explored in~\Cref{sec:rotCM}. The mappings show that from a given tensor network on a undirected graph one can define a set of genuinely distinct causal models by choosing different directions of edges in the undirected graph. While the connectivity of the directed graphs underlying such causal models differs from one to the other, they all share the same no-signalling relations. Thus, one can think of such class of causal models as space-time rotations of each other, in similar flavour to the concept of so-called \textit{dual unitaries}~\cite{Piroli_2020}.
    \item \textbf{Application to the holographic tensor network:} In~\Cref{sec:holography}, we apply the mappings to a specific tensor network modelling the holographic principle~\cite{Pastawski_2015}. We show how known results involving properties of directed graphs, such as the $d$-separation theorem~\cite{pearl_2009,Henson_2014}, can be applied to an image causal model of the holographic tensor network to infer causal influence. 
\end{itemize}

Finally,~\Cref{sec:conclusion} discusses future avenues of this work.

\subsection{Notation}
\label{sec:notation}
In the following, we denote with $\hilmap$ a finite-dimensional Hilbert space and with $\linops(\hilmap,\hilmap')$ the complex vector space of linear operators from $\hilmap$ to $\hilmap'$, and denote $\linops(\hilmap):=\linops(\hilmap,\hilmap)$. When a Hilbert space is associated to a system $S$, we denote it as $\hilmaparg{S}$.
The normalised and un-normalised maximally entangled state on $\hilmaparg{A}\cong\hilmaparg{A'}$ are respectively denoted as 
\begin{equation}
    \ket{\bellstate}_{AA'}=\frac{1}{\sqrt{d_A}}\sum_{i}\ket{i}_A\otimes\ket{i}_{A'} \quad \ket{\ubellstate}_{AA'}=\sum_{i}\ket{i}_A\otimes\ket{i}_{A'}.
\end{equation}
 We denote a superoperator $\chanmap \in \linops(\linops(\hilmaparg{A}),\linops(\hilmaparg{B}))$, as $\chanmap_{B|A}$. We denote sequential composition between superoperators $\mathcal{M}_{B|A}$ and $\mathcal{N}_{C|B}$ as $\mathcal{N}_{C|B}\circ\mathcal{M}_{B|A} \equiv \mathcal{T}_{C|A}$ and keep factors of identity implicit, i.e., $\mathcal{E}_{BD|A}$ and $\mathcal{F}_{C|BE}$ can be composed as $\mathcal{F}_{C|BE}\circ\mathcal{E}_{BD|A}\equiv \mathcal{T}'_{CD|AE}$ (where an identity on the $E$ system is implicit).
We use the acronyms: CP for completely positive, CPTP for completely positive and trace preserving, and POVM for positive operator valued measurement.

\paragraph{Graph notation}We denote with $\graphnamedir = \graphexpldir$ a directed graph, denoting directed edges as ordered pairs, i.e., $\edgename=(v,w)\in\edgesetdir$.
The incoming and outgoing edges to a vertex $\vertname\in\vertsetdir$ are denoted as
\begin{equation}
\begin{split}
    \inedges{\vertname} &= \biglset
    { e\in\edgesetdir }{ \exists \vertname'\in\vertsetdir \st e = \edgearg{\vertname'}{\vertname} }, \\
    \outedges{\vertname} &= \biglset
    { e\in\edgesetdir }{ \exists \vertname'\in\vertsetdir \st e = \edgearg{\vertname}{\vertname'} },
\end{split}
\end{equation}
while the parents and children of a vertex $\vertname\in\vertsetdir$ are denoted
\begin{equation}
\begin{split}
    \parnodes{\vertname} &= \biglset{ \vertname' \in \vertsetdir }{ (\vertname',\vertname) \in \edgesetdir}, \\
    \childnodes{\vertname} &= \biglset{ \vertname' \in \vertsetdir }{ (\vertname,\vertname') \in \edgesetdir }.
\end{split}
\end{equation}
We denote with $\graphnameud = \graphexplud$ an undirected graph, and undirected edges as two-element sets, i.e., $\edgename=\{v,w\}\in\edgesetud$, highlighting the absence of any ordering between elements of such pairs. The incident (or adjacent) edges to a vertex $v\in\vertsetud$ are denoted
\begin{equation}
    \incedges{\vertname}=\biglset{ \edgename\in\edgesetud}{\vertname\in \edgename }.
\end{equation}

We consider graphs, both directed and undirected, with no open edges, i.e., edges are always connecting two vertices.
\paragraph{Choi-Jamiołkowski isomorphism} The Choi-Jamiołkowski isomorphism~\cite{Choi1975,Jamiolkowski1972} connects linear superoperators in $\linops(\linops(\hilmaparg{A}),\linops(\hilmaparg{B}))$ to linear operators in $\linops(\hilmaparg{A'}\otimes\hilmaparg{B})$ where $\hilmaparg{A}\cong\hilmaparg{A'}$.

We denote the direct Choi-Jamiołkowski mapping as $\CJ_{B|A}: \linops\left(\linops(\hilmaparg{A}),\linops(\hilmaparg{B})\right)\mapsto \linops(\hilmaparg{A'}\otimes\hilmaparg{B})$, and define it as
\begin{equation}
\label{eq:choidirect}
    \CJ_{B|A}(\chanmap_{B|A})=(\chanmap_{B|A}\otimes \mathcal{I}_{A'})\left(\ketbra{\bellstate}_{AA'}\right).
\end{equation}
We denote the inverse as $\CJin_{B|A}:\linops(\hilmaparg{A'}\otimes\hilmaparg{B})\mapsto \linops(\linops(\hilmaparg{A}),\linops(\hilmaparg{B}))$. One can prove that the inverse of~\cref{eq:choidirect} acts on $S_A\in\linops(\hilmaparg{A})$ as
\begin{equation}
\CJin_{B|A}(M_{A'B})(S_{A})= d_{A} \Tr_{A'}\left(\mathcal{T}_{A'|A}(S_A)M_{A'B}\right)
\end{equation}
where \begin{equation}
    \mathcal{T}_{A'|A}(S_A) = \sum_{i,j} \ketbraa{i}{j}_{A'} \bra{j}S_A\ket{i}_A.
\end{equation}

We recall the following result characterising CPTP maps.
\begin{lemma}
    A superoperator $M_{B|A}\in\linops\left(\linops\left(\hilmaparg{A}\right),\linops\left(\hilmaparg{B}\right)\right)$ is \textup{completely positive} if and only if $\CJ_{B|A}(M_{B|A})$ is positive and it is \textup{trace preserving} if and only if $\Tr_{B}\left[\CJ_{B|A}(M_{A|B})\right] = \frac{\id_A}{d_{A}}$.
\end{lemma}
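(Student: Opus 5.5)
This is Choi's theorem with the normalised maximally entangled state, and I would prove the two parts separately, both using the explicit form of \(\CJin_{B|A}\) recalled above. Throughout, write \(J_{A'B}:=\CJ_{B|A}(M_{B|A})\). (The subscript in the trace-preservation condition should read \(M_{B|A}\), and \(\id_A/d_A\) is understood on the copy \(A'\cong A\).)

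\textbf{Complete positivity.} For the direction CP \(\Rightarrow\) \(J\geq 0\): \(\ketbra{\bellstate}_{AA'}\) is positive, and \(J=(M_{B|A}\otimes\mathcal{I}_{A'})(\ketbra{\bellstate}_{AA'})\). Complete positivity applied with ancilla \(A'\) then gives \(J\geq 0\) immediately.

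For the converse, assume \(J\geq0\). I would spectrally decompose \(J=\sum_k \ketbra{v_k}\) with \(\ket{v_k}\in\hilmaparg{A'}\otimes\hilmaparg{B}\). Each vector determines an operator \(K_k:\hilmaparg{A}\to\hilmaparg{B}\) through
\[
\ket{v_k}=(\id_{A'}\otimes K_k)\ket{\bellstate}_{A'A}\sqrt{d_A},
\]
using the transpose trick \((X\otimes\id)\ket{\ubellstate}=(\id\otimes X^T)\ket{\ubellstate}\). I would then insert this decomposition into the inverse formula \(M(S)=d_A\Tr_{A'}(\mathcal{T}_{A'|A}(S)J)\). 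The aim is to check, by expanding in the computational basis, that this yields \(M(S)=\sum_k K_k S K_k^\dagger\). A map in Kraus form is manifestly CP, since \((M\otimes\mathcal{I}_R)(\rho)=\sum_k(K_k\otimes\id)\rho(K_k\otimes\id)^\dagger\geq0\).

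\textbf{Trace preservation.} Here the key identity is
\[
\Tr_B J=\Tr_B\big[(M\otimes\mathcal{I})(\ketbra{\bellstate})\big]=\frac1{d_A}\sum_{ij}\Tr[M(\ketbra{i}{j})]\,\ketbra{i}{j}_{A'}.
\]
If \(M\) is TP, then \(\Tr M(\ketbra{i}{j})=\delta_{ij}\), so \(\Tr_B J=\id_{A'}/d_A\). Conversely, if \(\Tr_B J=\id_{A'}/d_A\), comparing matrix elements gives \(\Tr M(\ketbra{i}{j})=\delta_{ij}=\Tr\ketbra{i}{j}\). Since the \(\ketbra{i}{j}\) span \(\linops(\hilmaparg{A})\), linearity gives \(\Tr M(S)=\Tr S\) for all \(S\).

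\textbf{Main obstacle.} The only delicate step is the converse of the CP part. It requires tracking the \(\mathcal{T}_{A'|A}\) transpose convention and the \(d_A\) normalisation in \(\CJin_{B|A}\), so that the Kraus operators reconstructed from the eigenvectors of \(J\) reproduce \(M\) exactly rather than a transposed or rescaled version. I would verify this on basis elements \(S=\ketbra{i}{j}\) and extend by linearity.
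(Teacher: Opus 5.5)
Your argument is the standard proof of Choi's theorem, which is what the paper relies on (it only cites Section 5.3 of Renes), and both parts are sound, including your corrections $M_{A|B}\to M_{B|A}$ and $\id_A\to\id_{A'}$. There is one bookkeeping slip: with your normalisation $|v_k\rangle=\sqrt{d_A}\,(\id_{A'}\otimes K_k)|\Phi^+\rangle_{A'A}$, the inverse formula gives $M(S)=d_A\sum_k K_kSK_k^\dagger$, not $\sum_k K_kSK_k^\dagger$; either drop the $\sqrt{d_A}$ or rescale the $K_k$, and note that complete positivity is unaffected, since a positive multiple of a Kraus-form map is still completely positive.
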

\begin{proof}
    See Section 5.3 of~\cite{Renes_2022}.
\end{proof}

Further, we introduce the pure inverse Choi-Jamiołkowski isomorphism, denoted as $\CJp_{B|A}:\hilmaparg{A'}\otimes\hilmaparg{B} \mapsto \linops(\hilmaparg{A},\hilmaparg{B})$, defined as
\begin{equation}
      \CJp_{B|A} (\ket{\Psi}_{A'B})(\ket{\phi}_A)=\bra{\Phi^+}_{AA'}(\ket{\psi}_A\otimes \ket{\Psi}_{A'B})\in \hilmap_B.
\end{equation}
Notice that $ \CJp_{B|A} (\ket{\Psi}_{A'B})=\bra{\Phi^+}_{AA'} \ket{\Psi}_{A'B}$ is then a linear transformation from $\hilmap_A$ to $\hilmap_B$. The following lemma connects the inverse to the pure inverse Choi-Jamiołkowski.

\begin{restatable}[]{lemma}{invpurechoi}
\label{lemma:inv_choi}
    Suppose that $M_{A'B}=\sum_k p_k \ket{\Psi_k}\bra{\Psi_k}_{A'B}$  where each $\ket{\Psi_k}$ is normalised. Then,
   \begin{equation}
          \CJin_{B|A}(M_{A'B})=d_A^2\sum_k p_k   \CJp_{B|A} (\ket{\Psi_k}_{A'B})\otimes \Big(  \CJp_{B|A} (\ket{\Psi_k}_{A'B})\Big)^\dagger.
    \end{equation}
\end{restatable}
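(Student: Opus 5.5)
The plan is a direct computation in a fixed product basis. First, the symbol $\CJp_{B|A}(\ket{\Psi_k})\otimes(\CJp_{B|A}(\ket{\Psi_k}))^\dagger$ has to be read as the superoperator $S_A\mapsto K_k S_A K_k^\dagger$, where $K_k:=\CJp_{B|A}(\ket{\Psi_k}_{A'B})\in\linops(\hilmaparg{A},\hilmaparg{B})$. Under this reading the claim is an equality of superoperators, so it suffices to check it on an arbitrary $S_A\in\linops(\hilmaparg{A})$. Both sides are linear in $M_{A'B}$: $\CJin_{B|A}$ is linear, and the right-hand side is a $p_k$-weighted sum. I would therefore reduce to a single rank-one term $M_{A'B}=\ketbra{\Psi}_{A'B}$ and show
\begin{equation*}
\CJin_{B|A}(\ketbra{\Psi}_{A'B})(S_A)=d_A^2\, K S_A K^\dagger ,\qquad K:=\CJp_{B|A}(\ket{\Psi}_{A'B}).
\end{equation*}
Normalisation of $\ket{\Psi}$ plays no role here.

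Next, I expand $\ket{\Psi}_{A'B}=\sum_{i,j}c_{ij}\ket{i}_{A'}\ket{j}_B$ in the same computational basis used to define $\ket{\bellstate}$ and $\mathcal{T}_{A'|A}$. From the definition of $\CJp_{B|A}$ together with $\ket{\bellstate}=d_A^{-1/2}\sum_i\ket{i}\ket{i}$, we have $K\ket{\phi}_A=d_A^{-1/2}\sum_{i,j}c_{ij}\braket{i}{\phi}\ket{j}_B$. This gives
\begin{equation*}
K=d_A^{-1/2}\sum_{i,j}c_{ij}\ketbraa{j}{i}\quad\text{and}\quad K^\dagger=d_A^{-1/2}\sum_{k,l}\overline{c_{kl}}\ketbraa{k}{l}.
\end{equation*}
Consequently $d_A^2 K S K^\dagger = d_A\sum_{i,j,k,l}c_{ij}\overline{c_{kl}}\bra{i}S\ket{k}\ketbraa{j}{l}_B$.

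For the left-hand side, I use the stated formula $\CJin_{B|A}(M)(S)=d_A\Tr_{A'}(\mathcal{T}_{A'|A}(S)M)$. Observe that $\mathcal{T}_{A'|A}(S)$ is the transpose of $S$ carried to $A'$, so that $\bra{k}\mathcal{T}(S)\ket{i}=\bra{i}S\ket{k}$. Taking the partial trace over $A'$ of $\mathcal{T}(S)\ketbra{\Psi}$ then gives $\sum c_{ij}\overline{c_{kl}}\bra{k}\mathcal{T}(S)\ket{i}\ketbraa{j}{l}$. Multiplying by $d_A$ reproduces exactly the expression obtained for the right-hand side. Summing over $k$ with weights $p_k$ then yields the lemma.

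The only delicate point is bookkeeping. One must get the transpose and complex-conjugation conventions right: which index of $\mathcal{T}$ is contracted against $\ket{\Psi}$ versus $\bra{\Psi}$, and where the conjugates $\overline{c_{kl}}$ sit in $K^\dagger$. One must also track the powers of $d_A$, namely $d_A^{-1/2}$ from each $K$ and the explicit prefactor $d_A$ in $\CJin_{B|A}$, which together produce the factor $d_A^2$. I would verify these by carrying out the cyclic partial-trace step $\Tr_{A'}$ explicitly in components rather than relying on diagrammatic shortcuts.
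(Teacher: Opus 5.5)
Your proof is correct and takes essentially the paper's approach: a direct computation from the definitions of $\CJin_{B|A}$ and $\CJp_{B|A}$ in a fixed basis, with linearity reducing everything to a single rank-one term. The only cosmetic difference is that you expand $\ket{\Psi}$ in coordinates, whereas the paper evaluates both sides on matrix units $\ketbraa{m}{n}_A$ and rewrites the partial trace as a sandwich between unnormalised maximally entangled vectors, from which it reads off $K\ketbraa{m}{n}K^\dagger$ directly.
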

\begin{proof}
    See~\Cref{app:proofs}.
\end{proof}

\section{Tensor networks and causal influence}
\label{sec: TNrev}
Tensor networks are a versatile representation of tensors and sums of thereof which have found multiple applications in physics, e.g., in condensed matter, quantum information theory and quantum error correction, but also in data science and machine learning. Tensors, i.e., collections of complex numbers labelled by indices $T=\{T_{i_1,i_2,\dots,i_n}\}_{i_1,\dots,i_n}$, are represented through shapes having as many legs as the tensor indices
\begin{equation}
    T_{i_1,i_2,\dots,i_n} = \centertikz{
        \foreach \n in {1,...,8}{ 
            \begin{scope}[rotate =40*\n]
                 \node[] at (0,0.75) {\indexstyle{i_{\n}}};
            \end{scope}
            \draw[color=black,mybaseline,rotate=40*\n] (0,0) --(0,0.5);
        }
        \node at (0,0.5) {\indexstyle{\dots}};
        \node[tnode, fill=blue, minimum size=13pt] at (0,0) {};
    },
\end{equation}
summation between indices of tensors can be represented through contracting legs, for instance contracting
\begin{equation}
     c_i = \centertikz{
        \node[tnode, fill=red, minimum size=13pt] (a){};
        \node (i) [right=\chanvspace*0.5 of a] {\indexstyle{i}};
        \draw[mybaseline] (a.east) -- (i);
    } \textup{ and}  \quad M_{ij}=\centertikz{
        \node[tnode, fill= green, minimum size=13pt] (a){};
        \node (i) [left=\chanvspace*0.5 of a] {\indexstyle{i}};
        \draw[mybaseline] (a.west) -- (i);
        \node (o) [right=\chanvspace*0.5 of a] {\indexstyle{j}};
        \draw[mybaseline] (a.east) -- (o);
    } \textup{ leads to }  \centertikz{
        \node[tnode, fill = red, minimum size=13pt] (a) {};
        \node[tnode, fill = green, minimum size=13pt] (b) [right =\chanvspace*0.5 of a] {};
        \node (i) [right =\chanvspace*0.5 of b] {\indexstyle{j}};
        \draw[mybaseline] (b.east) -- (i);
        \draw[mybaseline] (a.east) -- (b.west);
    } = \sum_i^d c_i M_{ij} \equiv d_j.
\end{equation}

In principle, any system or dynamics which can be described in terms of summation of tensors can be represented through a tensor network. For instance, quantum mechanical states and channels can be easily embedded in a tensor network up to choosing a preferred basis. As the notation portrays, all indices of tensors in a tensor network are of the same type, meaning that there is no such thing as ``inputs'' or ``output'' of a tensor. Therefore, tensor networks are agnostic of any direction, for instance of time, that could underlie the dynamics that they describe. This feature is crucial for studying the emergence of space-time from quantum correlations. Indeed, using tensor networks we can describe a quantum mechanical system and dynamics without assuming any predefined notion of time nor of cause and effect, and infer properties from the connectivity of the network only.

\subsection{Definition of tensor network}
We use a compact way of defining a tensor network used, for instance, in~\cite{Cotler_2019}.

\begin{definition}[Tensor network]
\label{def:TN}
A tensor network is specified by
\begin{myitem}
    \item an undirected graph $\graphnameud=(\vertsetud,\edgesetud)$;
    \item a finite dimensional Hilbert space $\hilmaparg{\edgename}$ associated to each edge $\edgename\in\edgesetud$;\footnote{We use a notation where
            $\hilmaparg{\edgeset'} = \bigotimes_{\edgename \in \edgeset'} \hilmaparg{\edgename}$ for $\edgeset'\subseteq\edgesetud$.}
    \item a density operator $\projarg{\vertname}\in\linops\left(\hilmaparg{\incedges{\vertname}}\right)$ associated to each vertex $\vertname\in\vertsetud$.
\end{myitem}
We denote the triple defining a tensor network as $\tn=\left(\graphnameud,\{\hilmaparg{\edgename}\}_{\edgename\in\edgesetud}, \{\projarg{\vertname}\}_{\vertname\in\vertsetud}\right)$.
\end{definition}
For our results, it is sufficient to allow the operators specified in $3.$ to be un-normalised positive semi-definite operators; in~\cref{rem:densityop}, we clarify how any tensor network defined in terms of coefficients as above can be written in this form. Notice that tensors with open legs are not allowed by~\cref{def:TN}.

We define the \textit{link state} and \textit{total state} of a given tensor network, which allow us to compute correlation functions of the network.
\begin{definition}[Link state and total state]
\label{def:linktotalstate}
    Let $\tn=(\graphnameud,\{\hilmaparg{\edgename}\}_{\edgename\in\edgesetud},\{\projarg{\vertname}\}_{\vertname\in\vertsetud})$ be a tensor network. For each $\edgename=\{v,w\}\in\edgesetud$, let us introduce two Hilbert spaces denoted as $\hilmaparg{(\edgename,v)}$ and $\hilmaparg{(\edgename,w)}$, where $\hilmaparg{(\edgename,v)}\cong\hilmaparg{(\edgename,w)}\cong\hilmaparg{\edgename}$ and denote $\Htot=\bigotimes_{\vertname\in\vertsetud}\bigotimes_{\edgename\in\incedges{\vertname}}\hilmap_{(\edgename,\vertname)}\cong \hilmaparg{\edgesetud}^{\otimes 2}$. We define
    \begin{itemize}
        \item the link state $\linkstate\in \Htot$ as
    \begin{equation}
        \linkstate = \bigotimes_{\edgename\in\edgesetud}\ket{\ubellstate}_{\edgename}
    \end{equation}
    where for $\edgename=\{v,w\}$, the state $\ket{\ubellstate}_{\edgename}\in\hilmaparg{(\edgename,v)}\otimes \hilmaparg{(\edgename,w)}$ is the un-normalised maximally entangled state;
    \item the total state $\rhop\in\linops\left(\Htot\right)$
    \begin{equation}
        \rhop = \bigotimes_{\vertname\in\vertsetud} \projarg{\vertname}
    \end{equation}
    and $\projarg{\vertname}\in \linops\left(\otimes_{\edgename\in\incedges{\vertname}}\hilmaparg{(\edgename,v)}\right)$.
    \end{itemize}
\end{definition}

Notice that a tensor network is originally defined through associating a Hilbert space to each edge, i.e., $\hilmaparg{e}$ to $e\in\edgesetud$. However, in the former definition each edge is then associated with two Hilbert spaces isomorphic to $\hilmaparg{e}$ so that one can define a maximally entangled state over them. This doubling ensures that no direction of the edge is implicitly induced in defining the tensor contraction below.

We now define the tensor contraction which allows to compute correlation functions of a tensor network.

\begin{definition}[Tensor contraction]
\label{def: tn contraction}
   Let $\tn=(\graphnameud,\{\hilmaparg{\edgename}\}_{\edgename\in\edgesetud},\{\projarg{\vertname}\}_{\vertname\in\vertsetud})$ be a tensor network. The tensor contraction is given by
   \begin{equation}
       \bra{L}\rhop\ket{L}\in\mathbb{R}.
   \end{equation}
   For any two linear operators $Q_1,Q_2\in\linops\left(\Htot\right)$, we define the corresponding correlation function as
   \begin{equation}
       \bra{L}Q_1\rhop Q_2\ket{L}\in \mathbb{C}.
   \end{equation}
\end{definition}
An explicit computation of the tensor contraction is provided in the example below.
\begin{remark}
\label{rem:densityop}
     While tensors are generally defined as a collection of numbers, in~\cref{def:TN} a tensor is instead specified by a density operator. We now clarify how one can embed a tensor into a possibly un-normalised density operator and argue that the normalization can be appropriately neglected when studying observable properties of the network. Consider a tensor $T_{i_1,\dots, i_n}$ for $i_k=1,\dots, d_k$, and define the vector
\begin{equation}
\label{eq:pure_enc}
    \ket{\psi_T}= \sum_{i_1,\dots,i_n}^{d_1,\dots,d_n} T_{i_1,\dots, i_n} \ket{i_1}\otimes \dots \otimes \ket{i_n}
\end{equation}
in a Hilbert space $\bigotimes_{j=1}^{n} \hilmaparg{j}$ where $\dim(\hilmaparg{j}) = d_j$. Then, the operator associated to the tensor $T$ is $\rho_T=\ketbra{\psi_T}$, which, up to a real and positive constant representing the normalisation of the state, is a density operator. 
Hence, to an arbitrary collection of numbers, we can always associate a density operator up to a real and positive constant. Such a constant only acts as a global scaling of the tensor, which affects all correlation functions equally. Thus, for the study of correlation functions, one can without loss of generality set such a constant to be one. In our case, the mappings formulated in~\Cref{sec: mappings} also apply to un-normalised density matrices without fixing this constant to one, and the definitions of causal influence remain well defined, since they involve a renormalisation (see~\cref{def: causal influence old,def: causal influence}).

\paragraph{Example} Consider the following example, where
\begin{equation}
\label{eq:example_TN}
\{c_i\}_{i=1}^{d_1}=\centertikz{
\node[tnode, fill=red, minimum size=13pt] (a) at (0,0){};
\draw[mybaseline] (a) --++(0,.5);
},
\quad
\{M_{ij}\}_{i,j=1}^{d_1,d_2}=\centertikz{
\node[tnode, fill=green, minimum size=13pt] (a) at (0,0){};
\node[] at (0,0.75){\indexstyle{}};
\node[] at (0.75,0){\indexstyle{}};
\draw[mybaseline] (a) --++(0,.5);
\draw[mybaseline] (a) --++(0.5,0);
},
\quad
    \centertikz{
        \node[tnode, fill=red, minimum size=13pt] (a) at (0,0){};
        \node[tnode, fill=green, minimum size=13pt] (b) at (-1,0){};
        \draw[mybaseline] (a) -- (b);
        \draw[mybaseline] (b) --++ (0,0.5);
    } = \left\{\sum_{i,j} c_i M_{ij}\right\}_{j} \equiv \tilde{c}.
\end{equation}
with associated states
\begin{equation}
    \ket{\psi_c}_1=\sum_{i}c_i\ket{i}_1, \quad \ket{\psi_M}_{2,3}=\sum_{i,j}M_{ij}\ket{i}_2\ket{j}_3,
\end{equation}
where $1,2,3$ are system labels.
Through the inner product with a maximally entangled state, we obtain
\begin{equation}
    \bra{\ubellstate}_{12}\ket{\psi_c}_{1}\otimes\ket{\psi_M}_{2,3} = \sum_{i,j} c_i M_{ij}\ket{j}_3 \equiv \ket{\psi_{\tilde{c}}}_3.
\end{equation}
Further, if we consider another tensor
\begin{equation}
\label{eq:contracted_TN}
\{b_j\}_{j=1}^{d_2}=\centertikz{
\node[tnode, fill=blue, minimum size=13pt] (a) at (0,0){};
\node[] at (0,0.75){\indexstyle{}};
\draw[mybaseline] (a) --++(0,.5);
}\mapsto \ket{\psi_b}_4 = \sum_kb_k\ket{k}_4
\textup{ and }
\quad
    \centertikz{
        \node[tnode, fill=red, minimum size=13pt] (a) at (0,0){};
        \node[tnode, fill=green, minimum size=13pt] (b) at (-0.75,1.){};
        \node[tnode, fill=blue, minimum size=13pt] (c) at (0.75,1.){};
        \draw[mybaseline] (a) -- (b);
        \draw[mybaseline] (c) -- (b);
    } = \sum_{i,j} c_i M_{ij} b_j.
\end{equation}
We obtain
\begin{equation}
    \label{eq:example_finalsum}\bra{\ubellstate}_{3,4} \ket{\psi_{\tilde{c}}}_3\otimes\ket{\psi_b}_4 = \sum_{i,j} c_iM_{ij}b_j
\end{equation}
corresponding to the tensor summation.

As a consequence of this way of embedding tensors into pure states, we remark that the tensor network contraction of~\cref{def: tn contraction} equals the square of the tensor summation, a quantity which is used to compute correlation functions. For more details we refer to~\cite{Cotler_2019}.
\end{remark}

Using the compact notation of~\cref{def:TN}, the example above is equivalently defined by:
\begin{myitem}
    \item the undirected graph $\centertikz{
        \node[circle,draw,line width = 0.75pt] (a) at (-1.,0){\scriptsize$v_c$};
        \node[circle,draw,line width = 0.75pt] (b) at (0,0){\scriptsize$v_M$};
        \node[circle,draw,line width = 0.75pt] (c) at (1.,0){\scriptsize$v_b$};
        \draw[mybaseline] (a) -- (b);
        \draw[mybaseline] (c) -- (b);
    }$, with vertex set $\vertsetud=\{v_c,v_M,v_b\}$ and edge set $\edgesetud=\{\{v_M,v_c\}\equiv e_1, \{v_M,v_b\}\equiv e_2 \}$;
    \item Hilbert spaces associated to $\edgename_1$ and $\edgename_2$ respectively with dimensions $d_1$ and $d_2$, i.e., $\hilmaparg{1}=\textup{span}\{\ket{i}\}_{i=1}^{d_1}$ and $\hilmaparg{2}=\textup{span}\{\ket{j}\}_{j=1}^{d_2}$;
    \item the following linear operators
    \begin{equation}
        \projarg{c} =\ketbra{\psi_c}_{1}, \quad \projarg{M} =\ketbra{\psi_M}_{1,2}, \quad \projarg{b} =\ketbra{\psi_b}_{2}.
    \end{equation}
\end{myitem}
Notice that the undirected graph used to define a tensor network is the same underlying the network itself in the rightmost diagram in~\cref{eq:contracted_TN}.

To define the link state and total state let us define isomorphic Hilbert spaces $\hilmaparg{1_c}\cong \hilmaparg{1_M}\cong\hilmaparg{1}$ and $\hilmaparg{2_b}\cong \hilmaparg{2_M}\cong\hilmaparg{2}$. Then we define
\begin{equation}
    \rhop = \underbrace{\ketbra{\psi_c}_{1_c}}_{\projarg{c}}\otimes \underbrace{\ketbra{\psi_b}_{2_b}}_{\projarg{b}} \otimes \underbrace{\ketbra{\psi_M}_{1_M,2_M}}_{\projarg{M}}
\end{equation}
and 
\begin{equation}
    \linkstate = \ket{\ubellstate}_{1_c,1_M}\otimes \ket{\ubellstate}_{2_b,2_M} = \sum_{i=1}^{d_1} \ket{i}_{1_c}\otimes\ket{i}_{1_M} \otimes \sum_{j=1}^{b_2} \ket{j}_{2_b}\otimes\ket{j}_{2_M}.
\end{equation}
Then we evaluate the tensor contraction as:
\begin{equation}
\begin{split}
    &\bra{L}\rhop\ket{L}\\
    &=\bra{\ubellstate}_{1_c,1_M}\otimes \bra{\ubellstate}_{2_b,2_M}\left(\ketbra{\psi_c}_{1_c}\otimes\ketbra{\psi_b}_{2_b}\otimes \ketbra{\psi_M}_{1_M,2_M}\right)\ket{\ubellstate}_{1_c,1_M}\otimes \ket{\ubellstate}_{2_b,2_M}\\
    &=\left|\sum_{ij=1}^{d_1,d_2}c_iM_{ij}b_j\right|^2,
\end{split}
\end{equation}
thus showing that we obtain the square of the tensor summation in~\cref{eq:example_finalsum}.

\subsection{Causal influence as defined in~\cite{Cotler_2019}}
\label{sec:tn causal influence intro}
In~\cite{Cotler_2019}, a definition of causal influence in tensor networks is proposed. Given two disjoint subsets of edges of the network, $\region A$ and $\region B$, one says that there is causal influence from $\region A$ to $\region B$ if correlations between a Hermitian observable inserted in $\region B$ and a unitary operation applied in $\region A$ depend on which unitary is chosen in $\region A$. In other words, causal relations are inferred by probing how local changes in one region of the network affect correlation functions in another.

\begin{restatable}[Quantum causal influence --- \cite{Cotler_2019}]{definition}{oldci}
\label{def: causal influence old}
    Let $\tn = \left(\graphnameud,\{\hilmaparg{\edgename}\}_{\edgename\in\edgesetud},\{\projarg{\vertname}\}_{\vertname\in\vertsetud}\right)$ be a tensor network and consider an ordered pair of disjoint subsets of edges, i.e., $(\region{A},\region{B})$ such that $\region{A},\region{B}\subset\edgesetud$ and $\region{A}\cap\region{B}=\emptyset$, which we call \textup{regions}. Define the correlation function
    \begin{equation}
        M_{\textup{CHQY}}(U_\region{A}:O_\region{B}) = \frac{\bra{L}(U_\region{A}\otimes O_\region{B})\rhop (U_\region{A}^\dagger\otimes O_\region{B}^\dagger)\ket{L}}{\bra{L}(U_\region{A}\otimes \id_\region{B})\rhop (U_\region{A}^\dagger\otimes \id_\region{B})\ket{L}},
    \end{equation}
    where $U_\region{A}\in\linops(\hilmaparg{\region{A}})$ is a unitary and $O_\region{B}\in\linops(\hilmaparg{\region{B}})$ an Hermitian operator such that $O_\region{B}^\dagger O_\region{B}\leq \id_{\region{B}}$.
    The quantum causal influence from $\region{A}$ to $\region{B}$ is zero if and only if 
    \begin{equation}
        M_{\textup{CHQY}}(U_\region{A}:O_\region{B}) = M_{\textup{CHQY}}(U'_\region{A}:O_\region{B})
    \end{equation}
    for all $U_\region{A}$ and $U_\region{A}'$ unitaries and for all $O_\region{B}$ Hermitian operators.
\end{restatable}

As discussed in the remark of the previous section, notice that eventual normalization factors of the operators $\projarg{\vertname}$ do not play any role in~\cref{def: causal influence old} because of the denominator.
\paragraph{Example}
Let us study causal influence in the simple tensor network that we used as an example in the previous section (\cref{eq:contracted_TN}):
\begin{equation}
    \centertikz{
        \node[tnode, fill=red, minimum size=13pt] (a) at (0,0){};
        \node[tnode, fill=green, minimum size=13pt] (b) at (-0.75,1.){};
        \node[tnode, fill=blue, minimum size=13pt] (c) at (0.75,1.){};
        \node[above of = a,node distance=13pt] {$c$};
        \node[above of = c,node distance=13pt] {$b$};
        \node[above of = b,node distance=13pt] {$M$};
        \draw[mybaseline] (a) -- (b);
        \draw[mybaseline] (c) -- (b);
    } = \sum_{i,j} c_i M_{ij} b_j,
\end{equation}
and let $\region{A}=\{e_1=\{v_M,v_c\}\}$ and $\region{B}=\{e_2=\{v_M,v_b\}\}$. Then, the causal influence involves the study of correlation functions
\begin{equation}
\begin{split}
     &\bra{L}(U_\region{A}\otimes O_\region{B})\rhop (U_\region{A}^\dagger\otimes O_\region{B}^\dagger)\ket{L}\\
     &=\left(\sum_{i,j=1}^{d_1,d_2}\bra{i}_{1_c}\otimes\bra{i}_{1_M}\otimes\bra{j}_{2_b}\otimes\bra{j}_{2_M}\right) \left(U_{1_c}\otimes O_{2_b}\otimes \id_{1_M,2_M}\right)\\
    &\quad\left(\ketbra{\psi_c}_{1_c}\otimes\ketbra{\psi_b}_{2_b}\otimes \ketbra{\psi_M}_{1_M,2_M}\right)\\
    &\quad\quad\left(U^\dagger_{1_c}\otimes O^\dagger_{2_b}\otimes \id_{1_M,2_M}\right)\left(\sum_{k,l=1}^{d_1,d_2}\ket{k}_{1_c}\otimes\ket{k}_{1_M}\otimes\ket{l}_{2_b}\otimes\ket{l}_{2_M}\right).
\end{split}
\end{equation}
The Hilbert spaces associated to each edge are doubled in defining the total state and link state. This creates an ambiguity in~\cref{def: causal influence old} for the spaces on which the unitary and Hermitian operators act on the doubled Hilbert space $\Htot$. In the next section, we argue that for determining whether there is causal influence or not, one can choose arbitrarily on which space to apply the operation because of the structure of the link state.

\subsection{Remarks and open questions regarding this notion causal influence}
\label{sec:open questions ci}
The notion of causal influence of~\cref{def: causal influence old} is extensively motivated in~\cite{Cotler_2019}. 
Here we summarize the main arguments clarifying the definition and we collect open questions regarding it. 
In~\Cref{sec: new ci}, we will continue the discussion on causal influence and motivate the introduction of a modified version of it.

\paragraph{On which Hilbert spaces are the operators $U_{\region{A}}$ and $O_{\region{B}}$ acting?} The doubling of Hilbert spaces, necessary to define the total state and link state, leads to an ambiguity on where the operators $U_{\region{A}}$ and $O_{\region{B}}$ act. Indeed, the total and link states are defined on the Hilbert space $\Htot=\bigotimes_{\vertname\in\vertsetud}\bigotimes_{\edgename\in\incedges{\vertname}}\hilmap_{(\edgename,\vertname)}\cong \hilmaparg{\edgesetud}^{\otimes 2}$ thus there are two subspaces isomorphic to $\hilmaparg{\region{A}}$ and two subspaces isomorphic to $\hilmaparg{\region{B}}$. \Cref{def: causal influence old} does not specify on which of these spaces the unitary and Hermitian operator act on. However, for determining whether there is causal influence or not one can arbitrarily choose on which space to apply the unitary as long as the choice is consistent in all evaluations of the quantity $M$ for different unitaries and Hermitian operators. More details, including a proof of this statement, can be found in~\Cref{app:clarification on causal influence}.

\paragraph{Why are unitaries and Hermitian operators introduced?}
Tensor networks do not have a predefined direction, in particular no predefined notion of input and output. This property should be preserved when modifying the network to test information flow. In particular, the regions in which we introduce modifications correspond to undirected edges. Therefore, any operation inserted in these regions must be implemented in a way that does not impose an artificial direction on those edges. For this reason, the authors propose modifications of the network that are unitary operators, which remain unitary if we reverse input and output, and Hermitian operators which are invariant up to reversing inputs and outputs.

\paragraph{What is the role of the denominator?}

We aim to quantify the influence from $\region{A}$ to $\region{B}$ in the tensor network. This amounts to modifying locally the network in $\region{A}$ and $\region{B}$ and evaluate correlation functions to quantify how modifications in $\region{A}$ affect the network in $\region{B}$. However, it might be that a modification in $\region{A}$, generally affects the network contraction but not necessarily in $\region{B}$. Explicitly, it might be that for two unitaries $U_\region{A}$ and $V_\region{A}$
    \begin{equation}
       \bra{L}(U_\region{A}\otimes \id_\region{B})\rhop (U_\region{A}^\dagger\otimes \id_\region{B})\ket{L}\neq\bra{L}(V_\region{A}\otimes \id_\region{B})\rhop (V_\region{A}^\dagger\otimes \id_\region{B})\ket{L}.
    \end{equation}
    The role of the denominator is to normalize the network contraction so that the measure $M_{\textup{CHQY}}$ solely quantifies the influence in $\region{B}$ and not in the global network contraction. In addition, the denominator ensures that correlations functions are not affected by possible normalisation factors that one introduces to describe a given tensor as a density operator.

\paragraph{Open questions}
\Cref{def: causal influence old} provides a notion of causality in a tensor network based on correlation functions. This naturally raises questions concerning both its interpretation and its relation to existing approaches to causality.

A first question is whether and how this notion can be connected to causal modelling frameworks such as~\cite{Quantum_paper,Barrett_2021,Henson_2014}. These frameworks typically assume a predefined distinction of inputs and outputs of their building blocks, and within them several techniques for analysing causal structure are known. Establishing such a connection could therefore allow techniques developed in one formalism to be transferred to the other.

A second, closely related question concerns the interpretation of defining causal influence purely in terms of correlation functions: it is not immediately clear to what extent \cref{def: causal influence old} captures genuine causation rather than mere correlations. This issue is fundamental, as distinguishing causation from correlation is one of the main motivations behind causal modelling frameworks, where this distinction is addressed in a formal and rigorous way, typically through the notion of interventions.

Moreover, although the role of the denominator in \cref{def: causal influence old} is motivated as described above, it is not evident that its particular form fully accomplishes this purpose. In the absence of a clear operational interpretation, it remains ambiguous whether the normalization indeed isolates causal influence rather than simply rescaling correlations. In~\Cref{sec:newCIdef}, we propose a modified definition which replaces the denominator with a term that admits a direct operational meaning in terms of conditional probabilities of outcomes under freely chosen interventions. This ensures that the resulting quantity captures genuine causal effects rather than mere statistical correlations.

Finally, these considerations point to the need for a clearer operational interpretation of causal influence in tensor networks. In particular, one would like to understand why the proposed definition should be regarded as describing causality and how it relates to operational procedures.
As we will see in~\Cref{sec: new ci,sec:newCIdef}, these questions lead to a modified version of~\cref{def: causal influence old}, which can be connected to known notions of causality and, through this connection, provides tensor networks with a clearer operational meaning.

\section{Cyclic causal models and signalling}
\label{sec: CMrev}
In this section, we briefly review the causal modelling framework presented in~\cite{Quantum_paper}. This framework allows modelling of finite dimensional causal models on cyclic graphs, through formulating a probability rule which is well-defined in all but a handful of pathological models. 
Different quantum causal modelling frameworks have been proposed~\cite{Barrett_2019,Henson_2014}, however, these do not allow for arbitrary models on generally cyclic graphs. As we will see in the following sections, tensor networks, which do not have a predefined direction of time or cause-and-effect, can generally lead to cyclic causal explanation. Thus, a generally cyclic framework is  necessary to accommodate this possibility.

In what follows, we first introduce the notation and definition of a causal model, and present the probability rule. We refer to~\cite{Quantum_paper} for all the proofs and additional results related to the framework. Then, we introduce the notion of interventions and define signalling for these causal models. Among others, the topics of this section are the content of a follow-up work focussed on intervention in cyclic scenarios~\cite{Interventions_paper}.

\subsection{Causal modelling framework}
Causal models rely on a graphical representation of cause and effect relations through so-called \textit{causal graphs}. 

\begin{definition}[Causal graph]
\label{def: causal graph}
    A causal graph is a directed graph $\graphnamedir = (\vertsetdir,\edgesetdir)$ such that $\vertsetdir$ can be partitioned into $\vertsetdir = \overtset \cup \uvertset$, where $\vertname \in \overtset$ are called the \textup{observed} vertices, denoted as $\centertikz{\node[onode] {$\vertname$};}$, and $\vertname \in \uvertset$ are called the \textup{unobserved} vertices and are denoted as $\centertikz{\node[unode] {$\vertname$};}$.
    
\end{definition}

A causal model assigns to a causal graph specific causal mechanisms for each vertex.
These causal mechanisms have to match the type of the vertex, e.g., if the vertex is observed, the causal mechanism has to specify how the observed outcome is obtained.

\begin{definition}[Causal model on causal graph]
\label{def:causal model}
    A causal model on a causal graph is specified by the following items:
    \begin{myitem}
        \item a causal graph $\graphnamedir=(\vertsetdir,\edgesetdir)$;
        \item a finite-dimensional Hilbert space $\hilmaparg{\edgename}$ is associated to each edge $\edgename\in\edgesetdir$;\footnote{We use a notation where
            $\hilmaparg{\edgeset'} = \bigotimes_{\edgename \in \edgeset'} \hilmaparg{\edgename}$ for $\edgeset'\subseteq\edgesetdir$.}
        \item a quantum operation $\mathcal{T}^\vertname\subseteq\linops(\linops(\hilmaparg{\inedges{\vertname}}),\linops(\hilmaparg{\outedges{\vertname}}))$ associated to each vertex $\vertname\in\vertsetdir$ as follows:
        \begin{itemize}
            \item for unobserved $\vertname\in\uvertset$ a single CPTP map $\mathcal{T}^{\vertname}=\{\chanmaparg{\vertname}\}$,
          
            \item for observed $\vertname\in\overtset$ an instrument $\mathcal{T}^\vertname=\{\measmaparg{\outcome}{\vertname}\}_{\outcome\in\outcomemaparg{\vertname}}$, i.e., a collection of CP maps, each acting on $\rho\in\linops(\hilmaparg{\inedges{\vertname}})$ as
            \begin{equation}
                \measmaparg{\outcome}{\vertname}(\rho) = \Tr[\povmelarg{\outcome}{\vertname} \rho] \bigotimes_{\edgename\in\outedges{\vertname}} \ketbra{\outcome}_{\edgename}
            \end{equation} where $\left\{\povmelarg{\outcome}{\vertname}\in\linops(\inedges{\vertname})\right\}_{\outcome\in\outcomemaparg{\vertname}}$ is a POVM associated with a random variable $X_\vertname$ taking values from a finite set $x_\vertname\in\outcomemaparg{\vertname}$. We denote the CPTP map obtained through marginalisation as $\chanmaparg{\vertname}=\sum_x\measmaparg{\outcome}{\vertname}$
        \end{itemize}
    \end{myitem}
    We denote the triple defining a causal model as $\cm=(\graphnamedir,\{\hilmaparg{\edgename}\}_{\edgename\in\edgesetdir},\{\mathcal{T}^\vertname\}_{\vertname\in\vertsetdir})$.
\end{definition}
The definition above is a simplified version of definition 2 in~\cite{Quantum_paper} phrased to highlight the analogy with~\cref{def:TN} of a tensor network. \Cref{def:causal model} retains all the defining features of a causal model which are needed here, while neglecting features which do not play any role in what follows. For instance, the dimensions of Hilbert spaces associated to outgoing edges of observed vertices have to match the cardinality of the finite set associated to the vertex. Notice that for an exogenous vertex $\vertname$, the associated maps are elements of $\linops(\linops(\mathbb{C}),\linops(\hilmaparg{\outedges{\vertname}}))\cong\linops(\mathbb{C},\linops(\hilmaparg{\outedges{\vertname}}))\cong \linops(\hilmaparg{\outedges{\vertname}})$. We refer to definition 2 from~\cite{Quantum_paper} for more details. 

We also note that there exist several distinct frameworks of quantum causal models in the literature, differing in their underlying assumptions, mathematical frameworks, and regime of validity. The relation between the present framework and alternative approaches is discussed in detail in~\cite{Quantum_paper}. Conceptually, the definition adopted here can be viewed as a cyclic extension of the causal models introduced in~\cite{Henson_2014}, restricted to quantum theory.\footnote{\cite{Henson_2014} defined a causal modelling approach for acyclic graphs, but applicable to a general class of operational probabilistic theories, including quantum theory.}

\subsection{Probability rule}
Causal modelling formalisms include a prescription of how to evaluate probabilities over the outcomes associated to observed vertices of a causal graph, using the specified causal mechanisms (here, CP/CPTP maps, POVMs etc).
Within the framework of~\cite{Quantum_paper}, a probability rule valid for arbitrary finite-dimensional causal models, i.e., those given by~\cref{def:causal model}, is provided. For a complete analysis of the probability rule we refer to section 4 of~\cite{Quantum_paper}.

Here, we present a formulation of the probability rule in~\cite{Quantum_paper} which relies on a special type of channel composition called \textit{self-cycle composition}. This formulation is equivalent to the probability rule of~\cite{Quantum_paper} (see~\Cref{app:cyctoacyc}).

Firstly, we introduce the \textit{self-cycle composition} for quantum channels~\cite{VilasiniRennerPRA}, which, in the finite-dimensional setting considered here, is equivalent to loop composition presented in~\cite{Portmann_2017}.

\begin{definition}[Self-cycle composition]
\label{def:selfcycle_v3}
    Let $\hilmaparg A \cong \hilmaparg C$ with $d = \dim(\hilmap_A)=\dim(\hilmap_C)$. Further, let $\{\ket{k}_A\}_{k=1}^{d}$ and $\{\ket{l}_A\}_{l=1}^{d}$ be any orthonormal bases of $\hilmaparg A$ and $\{\ket{k}_C\}_{k=1}^{d}$ and $\{\ket{l}_C\}_{l=1}^{d}$ be the corresponding bases of $\hilmaparg{C}$, i.e., $\ket{k}_A\cong \ket{k}_C$ and $\ket{l}_A\cong \ket{l}_C$ for all $k,l=1,\dots,d$.
    Then for any linear map $\mathcal{M}_{AB|CD} : \linops(\hilmaparg C\otimes \hilmaparg D) \mapsto \linops(\hilmaparg A\otimes \hilmaparg B)$, we define the self-cycle composition $\selfcycle_{A,C}(\mathcal{M}_{AB|CD}):\linops( \hilmaparg D) \mapsto \linops(\hilmaparg B)$ as follows
    \begin{align}
        \selfcycle_{A,C}(\mathcal{M}_{AB|CD}) = \sum_{k,l=1}^{d} \bra{k}_A \mathcal M_{AB|CD}(\ketbraa{k}{l}_C) \ket{l}_A.
    \end{align}
\end{definition}

It is worth noting that completely positive maps are closed under loop composition, whereas completely positive trace-preserving maps are not: composing a CPTP map with a loop may yield a map that remains CP but is no longer trace-preserving.

Notice that if the spaces $\hilmaparg{B}$ and $\hilmaparg{D}$ are trivial, $\selfcycle_{A,C}(\mathcal{M}_{A|C})\in\mathbb{C}$. In this case, we will omit the subscript and denote the composition as $\selfcycle(\mathcal{M}_{A|C})$. In addition, we might sometimes use the same label for the isomorphic systems $A$ and $C$, and denote the composition as $\selfcycle_A$.

Let us define a collection of CP maps which compactly contains the maps associated to vertices of a given causal model.

\begin{definition}[Total maps of a causal model]
    \label{eq:def etot}
    Let $\cm=(\graphnamedir,\{\hilmaparg{\edgename}\}_{\edgename\in\edgesetdir},\{\mathcal{T}^\vertname\}_{\vertname\in\vertsetdir})$ be a causal model, whose causal graph has sets $\overtset$ and $\uvertset$ of observed and unobserved vertices. 
    We denote a joint observed event associated with $\overtset$ as $\outcome := \{\outcome_\vertname \in \outcomemaparg{\vertname}\}_{\vertname\in\overtset}$, and, introducing $\hilmap_{\edgesetdir_{\textup{in}}}\cong\hilmap_{\edgesetdir_\textup{out}}\cong \hilmaparg{\edgesetdir}$, define the collection of CP maps, called total maps,
    \begin{align}
        \bar{\etot}=\Big\{\etot^\outcome : \textstyle\linops\left(\hilmap_{\edgesetdir_{\textup{in}}}\right) \mapsto \linops\left(\hilmap_{\edgesetdir_{\textup{out}}}\right) \Big\}_{\outcome}
    \end{align} 
    through
    \begin{align}
        \etot^\outcome = \bigotimes_{\vertname\in\overtset} \measmaparg{\outcome_\vertname}{\vertname}\bigotimes_{\vertname'\in \uvertset}\chanmaparg{\vertname'},
    \end{align}
    where $\measmaparg{\outcome_\vertname}{\vertname}$ and $\chanmaparg{\vertname'}$ refer to the maps of $\cm$ (see \cref{def:causal model}). Notice that by~\cref{def:causal model}, $ \etot^\outcome$ has input in $\hilmaparg{\edgesetdir_{\textup{in}}}=\cup_{\vertname\in \vertset} \inedges{\vertname}$ and output in $\hilmaparg{\edgesetdir_{\textup{out}}}=\cup_{\vertname\in \vertset} \outedges{\vertname}$, and both spaces are isomorphic to the space $\hilmaparg{\edgesetdir}$. Furthermore, we will refer to $\etot:=\sum_x\etot^x$ as the marginalised total map of $\cm$.
\end{definition}

Now, we define the probability rule through self-cycle composition of the total map of a causal model as follows.

\begin{definition}[Probability distribution of a causal model]
    \label{prop:probs as self cycles}
    Consider a causal model $\cm=(\graphnamedir,\{\hilmaparg{\edgename}\}_{\edgename\in\edgesetdir},\{\mathcal{T}^\vertname\}_{\vertname\in\vertsetdir})$ and let a joint observed event associated with $\overtset$ be denoted as $\outcome := \{\outcome_\vertname \in \outcomemaparg{\vertname}\}_{\vertname\in\overtset}$.
    If $\sum_{x} \selfcycle(\etot^x) \neq 0$, the probability $\prob(x)_{\graphnamedir}$ is defined as
    \begin{align}
    \label{eq:cyclic probs_v2}
        \prob(x)_{\graphnamedir} = \frac{\selfcycle(\etot^x)}{\sum_x \selfcycle(\etot^x)}.
    \end{align}
    If $\sum_{x} \selfcycle(\etot^x) = 0$, we say that the causal model is inconsistent and the probabilities $ \prob\left(\outcome\right )_{\graphnamedir}$ are undefined.
\end{definition}
In~\Cref{app:cyctoacyc}, we recall results from~\cite{Quantum_paper} showing that the probability rule of~\cref{prop:probs as self cycles} can be equivalently obtained using the acyclic probability rule and post-selection. The denominator appearing in~\cref{prop:probs as self cycles} is related to the success probability of the post-sections. 
As shown in~\cite{Quantum_paper}, in case the causal graph is acyclic, the denominator is $1$ and the numerator is the usual sequential composition of channels which leads to the Born rule. For (possibly cyclic) causal models associated to valid process matrices~\cite{Oreshkov_2012, Barrett_2021}, the denominator is also $1$ and the numerator gives the generalised Born rule defined for processes \cite{Quantum_paper}. Thus,~\cref{prop:probs as self cycles} recovers known probability rules in existing frameworks. 
For more details we refer to~sections $4$, $6$ and $7$ of~\cite{Quantum_paper}. 

For connecting tensor networks to causal models we need the framework in~\cite{Quantum_paper}, as the absence of a predefined causal direction generally leads to cycles. Hence, building up the connection within our framework ensures that such possibility is not a priori ruled out.

\paragraph{Example}Consider the following directed graph as an example
\begin{equation}
\label{eq:example_cycle_overview}
    \graphnamedir=\centertikz{
        \node[unode,minimum size =22pt] (v1) at (-1, 0) {$v_3$};
        \node[unode, minimum size =22pt] (v2) at (1, 0) {$v_4$};
        \node[onode, minimum size =22pt] (v4) at (2.5, 0) {$v_2$};
        \node[onode, minimum size =22pt] (v3) at (-2.5, 0) {$v_1$};
        \draw[qleg] (v1.north) to[in = 120, out = 60] node[above] {$e_3$} (v2.north);
        \draw[qleg] (v2.south) to[in = 300, out = 240] node[below] {$e_4$}(v1.south);
        \draw[qleg] (v2) -- node[above] {$e_2$} (v4);
        \draw[qleg] (v1) -- node[above] {$e_1$} (v3);
    }.
\end{equation}
The edges are labelled as $e_1 = (v_3,v_1)$,  $e_2 = (v_4,v_2)$, $e_3 = (v_3,v_4)$ and $e_4 = (v_4,v_3)$.  We define a causal model by first associating finite dimensional Hilbert spaces to each edge: $\hilmaparg{i}$ to $e_i$. Then, we associate CPTP maps
$\chanmap^{3}:\linops\left(\hilmap_{4}\right)\mapsto \linops\left(\hilmap_{3}\otimes \hilmaparg{1}\right)$ to $\vertname_3$ and $\chanmap^{4}:\linops\left(\hilmap_{3}\right)\mapsto \linops\left(\hilmap_{4}\otimes \hilmaparg{2}\right)$ to $v_4$. To the observed vertices, $\vertname_1$ and $\vertname_2$, we associate collections of CP maps acting on $\rho_{1}\in\linops(\hilmaparg{1})$ and $\rho_{2}\in\linops(\hilmaparg{2})$, as
\begin{equation}
    \measmaparg{x_1}{(1)} (\rho_1) = \Tr(\povmarg{(1)}_{x_1}\rho_1) 
   \textup{ and } \measmaparg{x_2}{(2)} (\rho_2) = \Tr(\povmarg{(2)}_{x_2}\rho_2) 
\end{equation} 
where 
\begin{equation}
    \povmarg{(1)}=\left\{\povmelarg{\outcome_1}{(1)}\in\linops\left(\hilmaparg{1}\right)\right\}_{\outcome_1\in\outcomemaparg{1}}  
   \textup{ and } \povmarg{(2)}=\left\{\povmelarg{\outcome_2}{(2)}\in\linops\left(\hilmaparg{2}\right)\right\}_{\outcome_2\in\outcomemaparg{2}}
\end{equation} 
are POVMs, and $\outcomemaparg{1}$ and $\outcomemaparg{2}$ are finite sets.

The total map associated to the observed event $(x_1,x_2)\in \outcomemaparg{1}\times\outcomemaparg{2}$ is
\begin{equation}
    \etot^{x_1,x_2} = \measmaparg{x_1}{(1)}\otimes\measmaparg{x_2}{(2)} \otimes \chanmap^{3} \otimes \chanmap^{4}
\end{equation}
The probability of a joint observed event associated with outcomes $x_1$ and $x_2$, is
\begin{equation}
    \prob(x_1,x_2)_{\graphnamedir} = \frac{\selfcycle\left(\etot^{x_1,x_2}\right)}{\sum_{y_1,y_2}\selfcycle\left(\etot^{y_1,y_2}\right)}.
\end{equation}

\subsection{Interventions and signalling}
\label{sec:signalling}
Signalling refers to the possibility for one party (or agent), $\labarg{A}$, to transmit information to another, $\labarg{B}$, by means of their interventions on physical systems. In the language of our causal models, these physical systems are represented by the edges of the graph, and signalling between $\labarg{A}$ and $\labarg{B}$ corresponds to information being transmitted along the subset of systems (edges) that link them. In the causal modelling literature, signalling is probed through so-called \textit{interventions}. These are modification of the causal model at hand, which introduce a set of possible operations on the edges associated to $\labarg{A}$ and an observable on the edges associated to $\labarg{B}$. 

\begin{definition}[Intervention]
\label{def:intervention}
    Let $\cm=(\graphnamedir,\{\hilmaparg{\edgename}\}_{\edgename\in\edgesetdir},\{\mathcal{T}^\vertname\}_{\vertname\in\vertsetdir})$ be a causal model and consider an ordered pair of disjoint subsets of edges, i.e., $(\labarg{A},\labarg{B})$ such that $\labarg{A}, \labarg{B}\subseteq \edgesetdir$ and $\labarg{A}\cap\labarg{B}=\emptyset$, which we call \emph{labs}. An intervention is defined by:
    \begin{itemize}
        \item a collection of CPTP maps labelled by $a\in\outcomemaparg{A}$
        \begin{equation}
            \chanmap_{\labarg{A}}=\left\{\chanmapsetting{\labarg{A}}{a}: \linops\left(\hilmaparg{\labarg{A}}\right)\mapsto \linops\left(\hilmaparg{\labarg{A}}\right)\right\}_{a\in \outcomemaparg{A}}
        \end{equation}
        associated to the lab $\labarg{A}$;
        \item a collection of quantum instruments labelled by $b\in\outcomemaparg{B}$ and each having outcome $y\in\outcomemaparg{Y}$  
        \begin{equation}
            \instrmap_{\labarg{B}}=\left\{\left\{\instrmapsetting{\labarg{B}}{y|b}: \linops\left(\hilmaparg{\labarg{B}}\right)\mapsto \linops\left(\hilmaparg{\labarg{B}}\right)\right\}_{y\in\outcomemaparg{Y}}\right\}_{b\in \outcomemaparg{B}}
        \end{equation}
        associated to the lab $\labarg{B}$.
    \end{itemize}
\end{definition}
Signalling is established by observing correlations between $\labarg{A}$’s freely chosen interventions and the observable outcomes at $\labarg{B}$. The freedom in $\labarg{A}$’s choice is essential: only correlations between such independent choices and $\labarg{B}$’s outcomes can witness signalling, thereby distinguishing genuine causal influence from mere pre-existing correlations. Thus, we define the statistics of an intervention through considering the joint probability of an observed event in the causal model and the outcome of the instrument in $\labarg{B}$, and signalling.

\begin{definition}[Statistics of an intervention]
\label{def:statisticsint}
    Let $\cm=(\graphnamedir,\{\hilmaparg{\edgename}\}_{\edgename\in\edgesetdir},\{\mathcal{T}^\vertname\}_{\vertname\in\vertsetdir})$ be a causal model, consider an intervention with the ordered pair $(\labarg{A},\labarg{B})$ as labs and operations $\chanmap_{\labarg{A}}$ and $\instrmap_{\labarg{B}}$. Let us denote a joint observed event associated with the vertices $\overtset$ as $\outcome := \{\outcome_\vertname \in \outcomemaparg{\vertname}\}_{\vertname\in\overtset}$, and let the total map of $\cm$ be
    \begin{align}
        \bar{\etot}=\Big\{\etot^\outcome : \textstyle\linops\left(\hilmap'_{\edgesetdir}\right) \mapsto \linops\left(\hilmap_{\edgesetdir}\right) \Big\}_{\outcome}
    \end{align} 
    where $\hilmap'_{\edgesetdir}\cong\hilmap_{\edgesetdir}$.
    Then, if $\sum_{x,y}\selfcycle\left(\left(\instrmapsetting{\labarg{B}}{y|b}\otimes \chanmapsetting{\labarg{A}}{a}\right) \circ \etot^\outcome \right)\neq 0$, the statistics of the intervention labelled by $(a,b)\in\outcomemaparg{\labarg{A}}\times\outcomemaparg{\labarg{B}}$ is defined as
    \begin{equation}
        \prob(x,y|a,b)_I := \frac{\selfcycle\left(\left(\instrmapsetting{\labarg{B}}{y|b}\otimes \chanmapsetting{\labarg{A}}{a}\right)\circ \etot^\outcome\right)}{\sum_{x,y}\selfcycle\left(\left(\instrmapsetting{\labarg{B}}{y|b}\otimes \chanmapsetting{\labarg{A}}{a}\right)\circ \etot^\outcome\right)}
    \end{equation}
    otherwise we say that the intervention is inconsistent and the statistics is left undefined.
\end{definition}

The above definition can be equivalently obtained through defining a causal model for each choice of intervention on $\labarg{A}$ and $\labarg{B}$ on a graph obtained from $\graphnamedir$ but with the addition of intervention vertices for $\labarg{A}$ and $\labarg{B}$ (see~\cref{eq:example_noint,eq:example_int1,eq:example_int2} of the example below) and use the probability rule in~\cref{prop:probs as self cycles} to evaluate the statistics of intervention. This procedure is described in a follow-up work in preparation~\cite{Interventions_paper}. 

\begin{definition}[Signalling]
\label{def: signalling}
    Let $\cm=(\graphnamedir,\{\hilmaparg{\edgename}\}_{\edgename\in\edgesetdir},\{\mathcal{T}^\vertname\}_{\vertname\in\vertsetdir})$ be a causal model, consider an intervention with the ordered pair $(\labarg{A},\labarg{B})$ as labs and operations 
    \begin{equation}
        \chanmap_{\labarg{A}}=\left\{\chanmapsetting{\labarg{A}}{a}\right\}_{a\in\outcomemaparg{A}} \textup{ and } \instrmap_{\labarg{B}}=\left\{\left\{\instrmapsetting{\labarg{B}}{y|b}\right\}_{y\in\outcomemaparg{Y}}\right\}_{b\in\outcomemaparg{B}}.
    \end{equation} 
    We say that $\labarg{A}$ signals to $\labarg{B}$ in $\cm$ relative to the interventions $\chanmap_{\labarg{A}}$ and $\instrmap_{\labarg{B}}$, if there exists $y\in \outcomemaparg{Y}$, $b\in \outcomemaparg{B}$ and $a,a'\in \outcomemaparg{A}$, such that 
    \begin{align}
        \prob(y|a,b)_{I}\neq \prob(y|a',b)_{I}.
    \end{align}
\end{definition}
\paragraph{Example} Let us show how signalling can be studied in the cyclic example that we introduced before:
\begin{equation}
\label{eq:example_noint}
    \graphnamedir=\centertikz{
        \node[unode] (v1) at (-1, 0) {$v_3$};
        \node[unode] (v2) at (1, 0) {$v_4$};
        \node[onode] (v4) at (2.5, 0) {$v_2$};
        \node[onode] (v3) at (-2.5, 0) {$v_1$};
        \draw[qleg] (v1.north) to[in = 120, out = 60] node[above] {$e_3$} (v2.north);
        \draw[qleg] (v2.south) to[in = 300, out = 240] node[below] {$e_4$}(v1.south);
        \draw[qleg] (v2) -- node[above] {$e_2$} (v4);
        \draw[qleg] (v1) -- node[above] {$e_1$} (v3);
    },
\end{equation}
consider an intervention where $\labarg{A}=\{e_3,e_4\}$ and $\labarg{B}={e_1}$, i.e., 
\begin{equation}
    \graphnamedir=\centertikz{
        \node at (0.5,1.15) {$\labarg{A}$};
        \node at (-1.65,0.5) {$\labarg{B}$};
        \draw[line width=5mm,red!20,line cap=round] (0,1.15) -- (0,-1.15);
        \draw[line width=5mm,yellow!20,line cap=round] (-1.65,0.02) -- (-1.65,-0.02);
        \node[unode] (v1) at (-1, 0) {$v_3$};
        \node[unode] (v2) at (1, 0) {$v_4$};
        \node[onode] (v4) at (2.5, 0) {$v_2$};
        \node[onode] (v3) at (-2.5, 0) {$v_1$};
        \draw[qleg] (v1.north) to[in = 120, out = 60]  (v2.north);
        \draw[qleg] (v2.south) to[in = 300, out = 240] (v1.south);
        \draw[qleg] (v2) --  (v4);
        \draw[qleg] (v1) --  (v3);
    },
\end{equation}
For simplicity, we can consider a causal graph which accounts for the addition of these intervention nodes. From the definition of interventions, we treat $\labarg{A}$ as unobserved (CPTP map) and $\labarg{B}$ as observed (instruments):
\begin{equation}
\label{eq:example_int1}
    \graphnamedir_{I_1}=\centertikz{
        \node[unode] (v3) at (-1.5, 0) {$v_3$};
        \node[unode] (v4) at (1.5, 0) {$v_4$};
        \node[onode] (v2) at (3, 0) {$v_2$};
        \node[onode] (v1) at (-4.5, 0) {$v_1$};
        \node[unode] (A) at (0,0) {$\labarg{A}$};
        \node[onode] (B) at (-3,0) {$\labarg{B}$};

        \draw[qleg] (v4) -- (v2);
        \draw[qleg] (v3) -- (B);
        \draw[qleg] (B) -- (v1);
        \draw[qleg] (v4) to[in = 300, out = 240] (A);
        \draw[qleg] (A) to[in = 300, out = 240] (v3);
        
        \draw[qleg] (v3) to[in = 120, out = 60] (A);
        \draw[qleg] (A) to[in = 120, out = 60] (v4);
    }.
\end{equation}
One can easily convince themself that applying operations in $\labarg{A}$ will in general affect the distribution in $\labarg{B}$, unless there is fine-tuning in the map associated to $\vertname_3$. Differently, if we consider interventions where $\labarg{A}={e_2}$ and $\labarg{B}={e_1}$, i.e.,
\begin{equation}
\label{eq:example_int2}
\begin{split}
     \graphnamedir=&\centertikz{
      \node at (1.65,0.5) {$\labarg{A}$};
        \node at (-1.65,0.5) {$\labarg{B}$};
        \draw[line width=5mm,red!20,line cap=round] (1.65,0.02) -- (1.65,-0.02);
        \draw[line width=5mm,yellow!20,line cap=round] (-1.65,0.02) -- (-1.65,-0.02);
        \node[unode] (v1) at (-1, 0) {$v_3$};
        \node[unode] (v2) at (1, 0) {$v_4$};
        \node[onode] (v4) at (2.5, 0) {$v_2$};
        \node[onode] (v3) at (-2.5, 0) {$v_1$};
        \draw[qleg] (v1.north) to[in = 120, out = 60]  (v2.north);
        \draw[qleg] (v2.south) to[in = 300, out = 240] (v1.south);
        \draw[qleg] (v2) --  (v4);
        \draw[qleg] (v1) --  (v3);
    },\\
    \graphnamedir_{I_2}=&\centertikz{
        \node[unode] (v3) at (-0.75, 0) {$v_3$};
        \node[unode] (v4) at (0.75, 0) {$v_4$};
        \node[onode] (v2) at (3.5, 0) {$v_2$};
        \node[onode] (v1) at (-3.5, 0) {$v_1$};
        \node[onode] (B) at (-2,0) {$\labarg{B}$};
        \node[unode] (A) at (2,0) {$\labarg{A}$};        
        \draw[qleg] (v3.north) to[in = 120, out = 60]  (v4.north);
        \draw[qleg] (v4.south) to[in = 300, out = 240] (v3.south);
        \draw[qleg] (v3) --  (B);
        \draw[qleg] (B) --  (v1);
        \draw[qleg] (v4) -- (A);
        \draw[qleg] (A) --  (v2);
    },
\end{split}
\end{equation}
we see that modifications in $\labarg{A}$ will at most affect the distribution in $\vertname_2$ but not the distribution in $\labarg{B}$.
The fact that we have no-signalling in this case is immediate from analysing graph separation properties (see~\Cref{app:graphsep}) of 
$\graphnamedir_{I_1}$. In particular, within the framework of~\cite{Quantum_paper} this follows directly from the notion of $p$-separation, which provides a systematic graphical criterion to infer conditional independencies—and hence no-signalling relations—from the structure of the intervention causal graph. For finite-dimensional quantum causal models on cyclic graphs, $p$-separation has been shown to be sound and complete \cite{Quantum_paper}, meaning that whenever two vertices are $p$-separated given a third, the corresponding conditional independence holds in the observed probability distribution. In the present example, one can show that $\labarg A$ and $\labarg B$ are $p$-separated, and therefore conclude immediately that there is no-signalling from $\labarg A$ to $\labarg B$. A more detailed analysis of interventions and the use of graph separation in quantum cyclic causal models will be presented in upcoming work focused specifically on this topic~\cite{Interventions_paper}.

\section{Mappings between tensor networks and cyclic causal models}
\label{sec: mappings}
Tensor networks and causal models, reviewed in the previous sections, are both graphical frameworks for describing relations between processes, but with different aims and assumptions. In this section, we connect them concretely by formulating mappings from one to the other. 

\subsection{Mapping causal models to tensor networks}
\label{sec:CMtoTNmap}
Let us first formulate the mapping from a causal model to a unique tensor network.

\begin{definition}[Mapping $\cm$ to $\tn$]
\label{def:mapCMtoTNv2}

Let $\cm=(\graphnamedir,\{\hilmaparg{\edgename}\}_{\edgename\in\edgesetdir}, \{\mathcal{T}^\vertname\}_{\vertname\in\vertsetdir})$ be a causal model. We map it to a tensor network $\tn=\left(\graphnameud,\{\hilmaparg{\edgename}\}_{\edgename\in\edgesetud},\{\projarg{\vertname}\}_{\vertname\in\vertsetud}\right)$ on an undirected graph $\graphnameud=(\vertsetud,\edgesetud)$, as follows:
\begin{myitem}
    \item the undirected graph $\graphnameud$ has the same vertex and edge set as $\graphnamedir$, where the edges are deprived of ordering\footnote{It might be that $\graphnamedir$ has edges $(v,w)$ and $(w,v)$ forming a cycle of two vertices. In this case, add to $\graphnameud$ two edges $\{v,w\}$ and associate the Hilbert spaces according to those of the causal model.}, i.e., to each ordered pair $(v,w)\in\edgesetdir$ corresponds a set $\{v,w\}\in\edgesetud$;
    \item the Hilbert space associated to $\edgename\in\edgesetud$ is the same associated to the corresponding directed edge in $\graphnamedir$ in $\cm$;
    \item to each vertex $\vertname\in\vertsetud$ we associate $\projarg{\vertname}=\CJ_{\outedges{\vertname}|\inedges{\vertname}}\left(\chanmaparg{\vertname}\right)$ if $\vertname$ is unobserved in the causal graph of $\cm$, and $\projarg{\vertname}=\CJ_{\outedges{\vertname}|\inedges{\vertname}}\left(\sum_{x\in\outcomemaparg{\vertname}}\measmaparg{\vertname}{\outcome}\right)$ if $\vertname$ is observed.
    
    Thus, 
    \begin{equation}
        \rhop=\bigotimes_{\vertname\in\vertsetud}\projarg{\vertname}  = \CJ_{\edgesetdir_{\textup{out}}|\edgesetdir_{\textup{in}}}\left(\sum_x\etot^x\right),
    \end{equation}
    where $\hilmaparg{\edgesetdir_{\textup{in}}}\cong\hilmaparg{\edgesetdir_{\textup{out}}}\cong\hilmaparg{\edgesetdir}$ and $\etot^x$ is the total map of $\cm$.
\end{myitem}
     We denote the image tensor network of a given causal model as $\mapCMtoTN(\cm)=\tn$.
\end{definition}
Notice that the mapping gives a well-defined tensor network as for all $\vertname\in\vertsetud$, $\projarg{\vertname}$ is a valid density operator. Indeed, this follows by noticing that the sum over the CP maps of observed vertices is CPTP and by~\cref{lemma:inv_choi}.

\begin{remark}
\label{rem:obsunobs}
    
The mapping of~\cref{def:mapCMtoTNv2} is defined for arbitrary causal models through marginalising over the set of CP maps associated to observed vertices before applying the Choi-Jamiołkowski isomorphism. Equivalently, one can formulate the mapping for causal models with only unobserved vertices as the results involving signalling and causal influence are unaffected by the way we model observed variables. Indeed, according to~\cref{def: signalling} observed variables, which are not arising from interventions, are marginalised.

From a mathematical perspective, considering only causal models with unobserved vertices is not restrictive, as observed outcomes and their statistics can be studied explicitly through interventions. However, from a fundamental perspective, one can think of purely unobserved causal models as describing underlying evolutions in a theory that are not being changed through interventions of the agents under consideration.
\end{remark}

\paragraph{Acyclic example}
Let us consider as an example the causal model of a Bell scenario, on the graph
\begin{equation}
    \graphnamedir=\centertikz{
    \begin{scope}[xscale=2.2,yscale=1.5]
        \node[onode] (x) at (0,0) {$X$};
        \node[onode] (y) at (1,0) {$Y$};
        \node[onode] (a) at (0,1) {$A$};
        \node[onode] (b) at (1,1) {$B$};
        \node[unode] (l) at (0.5,0.2) {$L$};
        \draw[qleg] (x) -- (a);
        \draw[qleg] (y) -- (b);
        \draw[qleg] (l) -- (a);
        \draw[qleg] (l) -- (b);
    \end{scope}
    }.
\end{equation}
Let us label the Hilbert spaces as follows: $\hilmaparg{(X,A)} = \hilmaparg{X}$, $\hilmaparg{(L,A)} = \hilmaparg{L_1}$, $\hilmaparg{(L,B)} = \hilmaparg{L_2}$, and $\hilmaparg{(Y,B)} = \hilmaparg{Y}$.
To the exogenous vertices we associate the states
\begin{equation}
    \rho_{X}^x =p_x^X\ketbra{x}_X \quad \rho_{Y}^y =p_y^Y\ketbra{y}_Y \quad \rho_L = \ketbra{\bellstate}_{L_1L_2}
\end{equation} 
and to the observed vertices $A$ and $B$ the maps 
\begin{equation}
    \measmaparg a A(\cdot) = \Tr_{L_1X}\left[\povmelarg{a}{A}(\cdot)\right] \quad \textup{and} \quad  \measmaparg b B(\cdot) = \Tr_{L_2Y}\left[\povmelarg{b}{B}(\cdot)\right]. 
\end{equation}

The image tensor network is defined on the graph
\begin{equation}
\centertikz{
        \node[tnode, fill=green, minimum size=13pt] (l) at (0,0){\color{white}{$L$}};
        \node[tnrect, fill=red, minimum size=13pt] (a) at (-0.75,1.){\color{white}{$A$}};
        \node[tnrect, fill=red, minimum size=13pt] (b) at (0.75,1.){\color{white}{$B$}};
        \node[tnrect, fill=blue, minimum size=16pt] (x) at (-1.25,-0.25){\color{white}{$X$}};
        \node[tnrect, fill=blue, minimum size=16pt] (y) at (1.25,-0.25){\color{white}{$Y$}};
        \draw[mybaseline] (x) -- (a);
        \draw[mybaseline] (y) -- (b);
        \draw[mybaseline] (l) -- (a);
        \draw[mybaseline] (l) -- (b);
}
\end{equation}
and has tensors:
\begin{equation}
    \projarg{X} =\sum_x p_x^X\ketbra{x}_X, \quad \projarg{Y} =\sum_yp_y^Y\ketbra{y}_Y, \quad \projarg{L} = \ketbra{\bellstate}_{L_1L_2}
\end{equation}
and 
\begin{equation}
    \projarg{A} =\CJ_{|XL_1}\left(\sum_a\measmaparg a A\right), \quad  \projarg{B} =\CJ_{|YL_2}\left(\sum_b\measmaparg b B\right).
\end{equation}

\paragraph{Cyclic example} Let us consider a causal model whose underlying graph is cyclic. We have
\begin{equation}
   \graphnamedir= \centertikz{
        \begin{scope}[xscale=1.7,yscale=2.1]
            \node[unode] (l1) at (0,0) {$L_1$};
            \node[unode] (l2) at (1,0) {$L_2$};
            \draw[qleg] (l1) to[out=30,in=150] node[above] {\small$A$} (l2);
            \draw[qleg] (l2) to[out=210,in=330] node[below] {\small$B$} (l1);
        \end{scope}
    }
\end{equation}
where we labelled the Hilbert spaces as $\hilmaparg{(L_1,L_2)}=\hilmaparg{A}$ and $\hilmaparg{(L_2,L_1)}=\hilmaparg{B}$. We associate to the vertices the CPTP maps
\begin{equation}
    \begin{split}
        \chanmaparg{L_1}:\hilmaparg{B}\mapsto\hilmaparg{A} \quad \textup{and} \quad \chanmaparg{L_2}:\hilmaparg{A}\mapsto\hilmaparg{B}.
    \end{split}
\end{equation}
The image tensor network is defined on the graph
\begin{equation}
\centertikz{
        \node[tnode, fill=red, minimum size=25pt] (l1) at (-1,0){\color{white}{$L_1$}};
        \node[tnode, fill=blue, minimum size=25pt] (l2) at (1,0){\color{white}{$L_2$}};
        \draw[mybaseline] (l1) to[out=30,in=150](l2);
        \draw[mybaseline] (l2) to[out=210,in=330] (l1);
}
\end{equation}
and has tensors
\begin{equation}
    \begin{split}
        \projarg{L_1}=\CJ_{A|B}\left(\chanmaparg{L_1}\right) \quad \textup{and} \quad \projarg{L_2}=\CJ_{B|A}\left(\chanmaparg{L_2}\right).
    \end{split}
\end{equation}

\subsection{Mapping tensor networks to causal models}
\label{sec:TNtoCMmap}
Now we present the mapping in the opposite direction, from tensor networks to causal models. The goal is to construct a mapping which always allows to obtain a causal model from a given tensor network. We achieve this in two steps: first we present a mapping which preserves the graph of the tensor network, up to introducing directionality, but does not provide a causal model for arbitrary directions, then we generalise it showing that, up to introducing further cyclicity, each choice of direction in the tensor networks maps to a valid causal model.

\subsubsection{Mapping for a restricted case}

Firstly, we define how to construct a causal graph associated to a given tensor network.

\begin{definition}[Causal graph of $\tn$]
\label{def:graphundirtodir}
    Let us consider an undirected graph $\graphnameud=(\vertsetud,\edgesetud)$ with vertices labelled by distinct integers\footnote{Such labelling can always be arbitrarily chosen.}. Given a binary string $D = \{d_\edgename\}_{\edgename\in\edgesetud}$, $d_\edgename\in\{0,1\}$, we construct a directed causal graph $\graphnamedir_D=(\vertsetdir,\edgesetdir)$ as follows:
    \begin{myitem}
        \item the vertex set $\vertsetdir$ of $\graphnamedir_D$ equals those of $\graphnameud$, i.e., $\vertsetdir=\vertsetud$, and all vertices are unobserved;
        \item for each edge $\edgename=\{v_i,v_j\}\in\edgesetud$ with $i<j$, add to $\edgesetdir$ the ordered pair $(v_i,v_j)$ if $d_\edgename=0$ or the ordered pair $(v_j,v_i)$ if $d_\edgename=1$.
    \end{myitem}
\end{definition}

In words, the string $D$ encodes a choice of direction $d_\edgename$ for each undirected edge in the graph $\graphnameud$. The directed graph $\graphnamedir_D$ is then obtained implementing such choice. For each causal graph constructed in this way, we can now attempt to define a causal model.

\begin{definition}[Mapping $\tn$ to $\cm$]
\label{def: TNtoCM}
    Let $\tn=(\graphnameud,\{\hilmaparg{\edgename}\}_{\edgename\in\edgesetud},\{\projarg{\vertname}\}_{\vertname\in\vertsetud})$ be a tensor network, and consider a binary string $D\in\{0,1\}^{|\edgesetud|}$. Let $\graphnamedir_D$ be the directed graph constructed through~\cref{def:graphundirtodir}. If the density operators of the tensor network satisfy
    \begin{equation}
    \label{eq: trace_condition}
        \Tr_{\outedges{\vertname}}[\projarg{\vertname}] = \frac{\id_{\inedges{\vertname}}}{d_{\inedges{\vertname}}}
    \end{equation}
     for all $\vertname\in\vertsetdir$ where $\outedges{\vertname}$ and $\inedges{\vertname}$ are relative to $\graphnamedir_D$and where $d_{\inedges{\vertname}}=\dim(\hilmaparg{\inedges{\vertname}})$, we map the tensor network to a causal model $\cm=(\graphnamedir_D,\{\hilmaparg{\edgename}\}_{\edgename\in\edgesetdir}, \{\mathcal{T}^\vertname\}_{\vertname\in\vertsetdir})$ obtained as follows:
    \begin{myitem}
        \item to each directed edge $\edgename=(v,w)\in\edgesetdir$ associate the Hilbert space associated with the corresponding undirected edge in $\graphnameud$, $\hilmap_\edgename$;
        \item to each vertex $\vertname\in\vertsetdir$, unobserved by construction, associate the CPTP map 
        \begin{equation}
            \chanmaparg{\vertname} = \CJin_{\outedges{\vertname}|\inedges{\vertname}}\left(\projarg{\vertname}\right).
        \end{equation}
    \end{myitem}
    We denote the image causal model of a given tensor network as $\mapTNtoCM_D(\tn)$.
\end{definition}

Given a choice of orientation of the undirected graph underlying the tensor network, the above definition defines a causal model only if the trace condition in~\cref{eq: trace_condition} is satisfied. Such condition ensures that the maps associated to vertices of the causal model are CPTP, and as a consequence, it ensures that the tensor network is mapped to a well-defined causal model. According to this construction, a tensor network can be mapped to multiple causal models or none depending on the maps. For instance, consider the tensor network
\begin{equation}
\centertikz{
        \node[tnrect, fill=red, minimum size=18pt] (A) at (-1,0){\color{white}{$A$}};
        \node[tnrect, fill=blue, minimum size=18pt] (B) at (1,0){\color{white}{$B$}};
        \draw[mybaseline] (A) -- (B);
}
\end{equation}
where $\projarg{A}$ and $\projarg{B}$ are two density operators different from the maximally mixed state. Since there is only one edge, two choices of $D$ are possible, leading to the directed graphs
\begin{equation}
    D={0}:\quad \graphnamedir_0= \centertikz{
        \node[unode] (A) at (-1,0){$A$};
        \node[unode] (B) at (1,0){$B$};
        \draw[qleg] (A) -- (B);
} \quad D={1}:\quad \graphnamedir_1 =\centertikz{
        \node[unode] (A) at (-1,0){$A$};
        \node[unode] (B) at (1,0){$B$};
        \draw[qleg] (B) -- (A);
}.
\end{equation}
One can define a causal model on $\graphnamedir_0$ and $\graphnamedir_1$, if respectively
\begin{equation}
    \begin{split}
        D={0}:\quad \Tr(\projarg{A})=1 \textup{ and } \projarg{B}=\frac{\id}{d}\\
        D={1}:\quad \projarg{A}=\frac{\id}{d} \textup{ and } \Tr(\projarg{B})=1
    \end{split}
\end{equation}
since we assumed that $\projarg{A}$ and $\projarg{B}$ are both different from the maximally mixed state, the tensor network cannot be mapped to a causal model. 
Notice that if we assume that only one between $\projarg{A}$ and $\projarg{B}$ is equal to the maximally mixed state, we obtain exactly one causal model associated to the tensor network. While if $\projarg{A}=\projarg{B}=\frac{\id}{d}$, both choices lead to a valid causal model.

In the next paragraph, we propose a generalisation of the mapping that allows us to map a given tensor network to a causal model for each choice of $D$.
In~\Cref{sec:holography}, we will study more complex examples of the mapping and their application, specifically in the case of holographic tensor network~\cite{Pastawski_2015}.
\subsubsection{Mapping for the general case}
\label{sec:TNtoCMmapgen}

In~\cref{def: TNtoCM}, we only map a tensor network to causal models which fulfils the partial trace condition in~\cref{eq: trace_condition}. If such a condition is not satisfied in $\graphnamedir_{D}$ for any $D\in\{0,1\}^{|\edgesetud|}$, then the mapping of~\cref{def: TNtoCM} is not defined for such a tensor network. In this section, we provide a more general mapping which allows us to map a given tensor network to valid causal models for all choices of direction $D$.
The idea is to construct another directed graph $\graphnamedir_D^{\circlearrowleft}$ which introduces self-cycles for all vertices that do not fulfil~\cref{eq: trace_condition}. Then, by using the results of \cite{jean_2025} we can map the tensor network to a valid causal model on $\graphnamedir_D^{\circlearrowleft}$.
\begin{restatable}{lemma}{TNtoCMgen}
    \label{lemma: TNtoCMgeneral}
    Let $\tn=(\graphnameud,\{\hilmaparg{\edgename}\}_{\edgename\in\edgesetud},\{\projarg{\vertname}\}_{\vertname\in\vertsetud})$ be a tensor network, $D\in\{0,1\}^{|\edgesetud|}$ a binary string and $\graphnamedir_D$ the associated directed graph constructed according to~\cref{def:graphundirtodir}, with ingoing and outgoing edges to $\vertname\in\vertsetdir_D$ denotes as $\inedges{\vertname}$ and $\outedges{\vertname}$. 
    For all $\vertname\in \vertsetud$, the density operator $\projarg{\vertname}\in\linops\left(\hilmaparg{\incedges{\vertname}}\right)\cong\linops\left(\hilmaparg{\inedges{\vertname}}\otimes\hilmaparg{\outedges{\vertname}}\right)$, satisfies
    \begin{equation}
        \label{eq: cycle_TNtoCMgen}
        \CJin_{\outedges{\vertname}|\inedges{\vertname}}\left(\projarg{\vertname}\right)=\alpha_\vertname\selfcycle_{A_\vertname}(\chanmaparg{\vertname}),
        \end{equation}
   where $\alpha_v$ is a real strictly positive constant and $\chanmaparg{\vertname}$ is a CPTP map
    \begin{equation}
        \label{eq: CPTP_TNtoCMgen}
            \chanmaparg{\vertname}: \linops(\hilmaparg{A_\vertname}\otimes\hilmaparg{\inedges{\vertname}}) \mapsto \linops(\hilmaparg{A_\vertname}\otimes \hilmaparg{\outedges{\vertname}}),
        \end{equation}
    and $\hilmaparg{A_\vertname}$ is an ancillary Hilbert space of dimension $\dim_{A_\vertname}= 2\dim_{\inedges{\vertname}}$.
\end{restatable}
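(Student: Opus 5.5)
Write $I=\inedges{\vertname}$, $O=\outedges{\vertname}$, $d=d_I$, and $\mathcal{M}:=\CJin_{O|I}(\projarg{\vertname})$. By the CJ lemma, positivity of $\projarg{\vertname}$ makes $\mathcal{M}$ completely positive, but in general it is not trace preserving. The plan is to show that every CP map $\mathcal{M}:\linops(\hilmaparg{I})\to\linops(\hilmaparg{O})$ is, up to a positive scalar, the self-cycle composition of a CPTP map on an enlarged system. This is the content of the result of~\cite{jean_2025}. I would reprove it explicitly with an ancilla $\hilmaparg{A_\vertname}=\mathbb{C}^2\otimes\hilmaparg{I'}$, $\hilmaparg{I'}\cong\hilmaparg{I}$, which gives $\dim_{A_\vertname}=2d$.

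\textbf{Step 1 (rescaling to a trace non-increasing map).} Set $\alpha_\vertname:=\|\mathcal{M}^\dagger(\id_O)\|_\infty$, assuming $\mathcal{M}\neq 0$. Then $\mathcal{M}':=\mathcal{M}/\alpha_\vertname$ is CP and trace non-increasing. Let $F:=\id_I-\mathcal{M}'^\dagger(\id_O)\geq 0$.

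\textbf{Step 2 (realising the identity through a loop).} Use the standard loop identity. Consider a channel that, on input $\hilmaparg{A}\otimes\hilmaparg{I}$, swaps the $I$ part into the loop register and reads the previous loop content out. A self-cycle of a swap reproduces the identity (or a transpose, depending on conventions), so with $\hilmaparg{I'}$ in the loop we can route the input state through the loop and apply $\mathcal{M}'$ at the exit. The qubit factor $\mathbb{C}^2$ acts as a flag. On flag $\ket{0}$, the channel applies $\mathcal{M}'$ to the input and outputs flag $\ket{0}$. On the failure branch, the Kraus operators $\sqrt{F}$ and a completing isometry, the channel writes flag $\ket{1}$ together with an arbitrary state on the loop register. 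The aim is that contributions where the flag enters the loop as $\ket{k}$ but exits as $\ket{l}$ with $k\neq l$, or exits in the failure state, are killed by the diagonal structure of $\selfcycle$ (the terms $\bra{k}\cdot\ket{l}$). Only the successful branch survives, and it gives exactly $\mathcal{M}'$ on $I\to O$.

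Concretely, I would define Kraus operators $K_0=\ketbraa{0}{0}_{\text{flag}}\otimes \id_{I'}\otimes K^{\mathcal{M}'}_i$ and $K_1=\ketbraa{1}{0}\otimes(\cdot)\otimes \sqrt{F}\otimes\ket{\phi}_O$. For flag input $\ket{1}$ I would choose any CPTP completion, for example flag $\ket{1}\to\ket{0}$ with a fixed output, so that it contributes off-diagonally and vanishes under the loop. I would then check $\sum K^\dagger K=\id$, so that $\chanmaparg{\vertname}$ is CPTP, and compute $\selfcycle_{A_\vertname}(\chanmaparg{\vertname})=\mathcal{M}'$. Equivalently, one can work at the Choi level: $\projarg{\vertname}$ is obtained from the Choi operator of $\chanmaparg{\vertname}$ by contracting the ancilla with $\ket{\ubellstate}$, in the same way as the link-state contraction of \cref{def: tn contraction}.

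\textbf{Main obstacle.} The delicate step is making the loop contraction produce exactly $\mathcal{M}'$, with no extra terms and no transposes, while keeping the map CPTP. Two constraints pull against each other. The failure branch must make $\chanmaparg{\vertname}$ trace preserving, yet vanish under $\selfcycle$. Routing must also be arranged so that trace preservation holds for all ancilla inputs, including the flag-$\ket{1}$ sector. I would first try the simplest version, using a flag qubit alone with $\selfcycle$ keeping only $\ket{0}\to\ket{0}$. If that is insufficient because the failure branch needs to depend on the input, I would add the $\hilmaparg{I'}$ copy to absorb the input, which is the reason for $\dim=2d$. Finally, $\alpha_\vertname>0$ by construction. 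The degenerate case $\projarg{\vertname}=0$ is excluded, or handled separately, since otherwise no strictly positive $\alpha_\vertname$ exists for a nonzero self-cycle.
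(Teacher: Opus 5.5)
Your proposal is correct, and it takes a genuinely different and simpler route than the paper.

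\textbf{The paper's route.} It spectrally decomposes $\projarg{\vertname}=\sum_k p_k\ketbra{\Psi_k}$ and turns each pure component into an operator $C_k$ via the pure inverse Choi map (\cref{lemma:inv_choi}). It then realises each $C_k$, up to a scalar, as $\bra{\ubellstate}V_k\ket{\ubellstate}$ for an isometry $V_k$ adapted from \cite{jean_2025}:
\begin{itemize}
\item a SWAP routes the input through a $d$-dimensional loop register so the basis index $i$ can be read off;
\item a controlled qubit rotation, whose loop trace is $2a_i$, encodes the column norms;
\item an isometry $W_i:\ket{i}\mapsto\ket{\psi_i}$ supplies the column directions.
\end{itemize}
The rescaled $V_k$ then become the Kraus operators of a single channel.

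\textbf{Your route.} You treat the CP map $\CJin_{\outedges{\vertname}|\inedges{\vertname}}(\projarg{\vertname})$ as a whole. You rescale it to be trace non-increasing and complete it with Kraus operators that flip the flag. The fact that makes your check a one-liner is \cref{lemma: cycle_PCTC} applied Kraus-wise: if $\chanmap$ has Kraus operators $K_j$, then $\selfcycle_A(\chanmap)$ has Kraus operators $\bra{\ubellstate}_{AA'}K_j\ket{\ubellstate}_{AA'}=\Tr_A K_j$. Every flag block $\ketbraa{l}{k}$ with $k\neq l$ has zero partial trace, so only the success branch survives. State this explicitly rather than appealing to the ``diagonal structure'' of $\selfcycle$.

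\textbf{What each route buys.} Yours needs no spectral decomposition and no routing through the loop, so the SWAP discussion and the worries in your last paragraph can be dropped. It also avoids the paper's footnoted embedding of the output space when $d_{\mathrm{in}}>d_{\mathrm{out}}$. It shows that a single flag qubit already suffices, with $\hilmaparg{I'}$ serving only as idle padding to reach $\dim_{A_\vertname}=2d$. The paper's route instead gives a channel whose Kraus operators are rescaled isometries, each realising one pure component, in direct correspondence with \cite{jean_2025}.

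\textbf{Three small repairs.}
\begin{itemize}
\item As written, $\sqrt F\otimes\ket{\phi}_O$ leaves an extra output copy of $I$. Use the family $\ketbraa{1}{0}\otimes\id_{I'}\otimes\ket{\phi}_O\bra{m}_I\sqrt F$ over a basis $\{\ket{m}\}$ of $\hilmaparg{I}$. On the flag-$\ket{1}$ sector use $\ketbraa{0}{1}\otimes\id_{I'}\otimes\ket{\phi}_O\bra{m}_I$. Then $\sum K^\dagger K=\id$.
\item Keeping $\id_{I'}$ in the success branch gives $\selfcycle_{A_\vertname}(\chanmaparg{\vertname})=d^2\mathcal{M}'$, not $\mathcal{M}'$. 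So take $\alpha_\vertname=\|\mathcal{M}^\dagger(\id_O)\|_\infty/d^2$.
\item The case $\projarg{\vertname}=0$ needs no exclusion. A channel that only flips the flag is CPTP with vanishing self-cycle, so any $\alpha_\vertname>0$ works.
\end{itemize}
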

\begin{proof}
    The proof uses results from~\cite{jean_2025}, see~\Cref{app:proofs}.
\end{proof}

\begin{definition}[General mapping $\tn$ to $\cm$]
\label{def: TNtoCMgeneral}
   Let $\tn=(\graphnameud,\{\hilmaparg{\edgename}\}_{\edgename\in\edgesetud},\{\projarg{\vertname}\}_{\vertname\in\vertsetud})$ be a tensor network and $D\in \{0,1\}^{|\edgesetud|}$  a binary string. Consider the directed graph $\graphnamedir_D = (\vertsetdir,\edgesetdir)$ constructed from $\graphnameud$ and $D$ through~\cref{def:graphundirtodir}, and let $\vertset^\neg\subseteq\vertsetdir$ be the set of vertices for which~\cref{eq: trace_condition} is not satisfied, i.e., $\vertname\in\vertset^\neg$ if and only if
    \begin{equation}
        \Tr_{\outedges{\vertname}}[\projarg{\vertname}] \neq \frac{\id_{\inedges{\vertname}}}{d_{\inedges{\vertname}}}.
    \end{equation}
    Then, define the directed graph $\graphnamedir_D^{\circlearrowleft}= (\vertsetdir^\circlearrowleft,\edgesetdir^\circlearrowleft)$ from $\graphnamedir_D$ as follows: $\vertsetdir^\circlearrowleft=\vertsetdir$ and $\edgesetdir^\circlearrowleft = \edgesetdir\cup\{(\vertname,\vertname)\}_{\vertname\in\vertset^\neg}$, i.e., the graphs are the same up to adding self-loops for all vertices in $\vertset^{\neg}$. We map the tensor network $\tn$ to a causal model on the directed graph $\graphnamedir^{\circlearrowleft}_D$ as follows:
    \begin{myitem}
        \item to each directed edge $\edgename=(v,w)\in\edgesetdir\subseteq \edgesetdir^{\circlearrowleft}$ associate in $\graphnamedir_D^\circlearrowleft$ the Hilbert space associated with the corresponding undirected edge in $\graphnameud$, $\hilmap_\edgename$; 
        \item to each vertex $\vertname\in \vertsetdir\backslash \vertset^{\neg}$ associate the CPTP map\footnote{Notice that this is CPTP since \cref{eq: trace_condition} is satisfied by construction}
         \begin{equation}
            \chanmaparg{\vertname} = \CJin_{\outedges{\vertname}|\inedges{\vertname}}\left(\projarg{\vertname}\right);
        \end{equation}
        
        \item to each directed edge $\edgename=(v,v)\in\edgesetdir^{\circlearrowleft}\backslash \edgesetdir$, associate the Hilbert space $\hilmaparg{A_\vertname}$ of the ancilla system $A_\vertname$ as defined in \cref{lemma: TNtoCMgeneral};
        \item to each vertex $\vertname\in\vertsetdir^{\neg}$, associate the CPTP map 
        \begin{equation}
            \chanmaparg{\vertname}: \linops\left(\hilmaparg{A_\vertname}\otimes \hilmaparg{\inedges{\vertname}}\right) \mapsto \linops\left(\hilmaparg{A_\vertname}\otimes \hilmaparg{\outedges{\vertname}}\right)
        \end{equation}
        satisfying 
        \begin{equation}
            \selfcycle_{A_\vertname}(\chanmaparg{\vertname})=\frac{1}{\alpha_\vertname}\CJin_{\outedges{\vertname}|\inedges{\vertname}}\left(\projarg{\vertname}\right),
        \end{equation}
       for some positive non-zero constant $\alpha_v$. Such map and the constant $\alpha_v$ are explicitly constructed in the proof of \cref{lemma: TNtoCMgeneral}~(see~\Cref{app:proofs}).
    \end{myitem}
    We denote the image causal model of a given tensor network as $\mapTNtoCMgen_D(\tn)$. 
    
\end{definition}

Notice that the generalised mapping trivially reduces to the one of~\cref{def: TNtoCM} if the partial trace condition in~\cref{eq: trace_condition} is satisfied for all $\vertname\in\vertsetud$. Indeed, in this case the set $\vertset^\neg$ is empty, and the associations corresponding to the first two items above are equal to those of~\cref{def: TNtoCM}.

\begin{remark}
    The generalised mapping presented in~\cref{def: TNtoCMgeneral} explicitly allows to construct the relevant CPTP map, up to introducing an ancillary system having twice the dimension of the input space. Specifically, the ancilla contains a system isomorphic to the input space tensored with a qubit space. A different result of~\cite{jean_2025} proves the existence of a CPTP map which also satisfies~\cref{eq: cycle_TNtoCMgen} with ancilla space isomorphic to the input space only. This, however, is an existence argument, and does not fully specify the map $\chanmaparg{\vertname}$. 
Thus, one could modify the above definition of the mapping and obtain an existence statement instead of a constructing one, with ancillas of half the size. 
In either case, the result is a causal model on $\graphnamedir_D^\circlearrowleft$ where~\cref{eq: cycle_TNtoCMgen} holds for every vertex $\vertname\in\vertsetud$, but the exact maps and dimension of the Hilbert space associated to the additional self-cycle edges $(\vertname,\vertname)$ would differ. 

\end{remark}

\paragraph{Example} Let us clarify the generalised mapping using the example of the previous section on the tensor network
\begin{equation}
\graphnameud=\centertikz{
        \node[tnrect, fill=red, minimum size=18pt] (A) at (-1,0){\color{white}{$A$}};
        \node[tnrect, fill=blue, minimum size=18pt] (B) at (1,0){\color{white}{$B$}};
        \draw[mybaseline] (A) -- (B);
}
\end{equation}
where $\projarg{A}$ and $\projarg{B}$ are two density operators different from the maximally mixed state. As we showed before, both choices of direction for the unique edge, i.e.,
\begin{equation}
    \graphnamedir_{D=0}:\quad \centertikz{
        \node[unode] (A) at (-1,0){$A$};
        \node[unode] (B) at (1,0){$B$};
        \draw[qleg] (A) -- (B);
} \quad  \graphnamedir_{D=1}:\quad \centertikz{
        \node[unode] (A) at (-1,0){$A$};
        \node[unode] (B) at (1,0){$B$};
        \draw[qleg] (B) -- (A);
}
\end{equation}
cannot be mapped to a causal model using~\cref{def: TNtoCM}. 
Through the generalised mapping, we construct causal models on the directed graphs:
\begin{equation}
    \graphnamedir^\circlearrowleft_{D=0}:\quad \centertikz{
        \node[unode] (A) at (-1,0){$A$};
        \node[unode] (B) at (1,0){$B$};
        \draw[qleg] (A) -- (B);
        \draw[qleg,rounded corners=15pt] (B) -- ++(0.6,-0.8) -- ++(0.6,0.8) -- ++(-0.6,0.8) -- (B);
} \quad  \graphnamedir^\circlearrowleft_{D=1}:\quad \centertikz{
        \node[unode] (A) at (-1,0){$A$};
        \node[unode] (B) at (1,0){$B$};
        \draw[qleg] (B) -- (A);
        \draw[qleg,rounded corners=15pt] (A) -- ++(-0.6,-0.8) -- ++(-0.6,0.8) -- ++(0.6,0.8) -- (A);
}.
\end{equation}
To give an explicit example, consider a tensor network on $\graphnameud$ with a qubit Hilbert space associated to the unique edge, and let us denote the system as $S$, i.e., $\hilmap_e\equiv\hilmap_S = \mathbb{C}^2$, and take $P_A=\ketbra{+}$ and $P_B=\ketbra{0}$.
Then, for $D=0$ the image causal model on $\graphnamedir^\circlearrowleft_{D=0}$ is given by associating a qubit space to the ancilla system, which we denote as $H$, and maps $\rho_{A}=P^A$ and $\chanmaparg{B}:\linops(\hilmap_S\otimes \hilmaparg{H})\mapsto\linops(\hilmap_S\otimes \hilmaparg{H})$ where $\chanmaparg{B}=\textup{CNOT}_{S\to H}$ with control on the system $S$.
Indeed, we have
\begin{equation}
    \CJin_{|S}(\ketbra{0})(\rho_S) = d_H \sum_{i,j}\braaket{0}{i}\bra{j}\rho_S\ket{i}\braaket{j}{0} = d_H\bra{0}\rho_S\ket{0} \ketbra{0}_S
\end{equation}
and on the other hand
\begin{equation}
\selfcycle_H(\text{CNOT}_{S\to H})(\rho_S)= \sum_{i,j}\bra{i}(\text{CNOT}_{S\to H}(\rho_S\otimes \ketbraa{i}{j})\ket{j} = \bra{0}\rho_S\ket{0} \ketbra{0}_S.
\end{equation}
For $D=1$, the result is analogous up to applying a Hadamard gate before the CNOT.

The intuition behind this construction is that the instances where the ancilla system is flipped lead to a null post-selection probability in the self-cycle. Thus, this effectively reduces the CPTP map on the system and ancilla to a CP map on the system which represents the action in the case where the ancilla is not flipped (in the CNOT, this corresponds to projecting the system, which is the control, on $\ket{0}$).

\subsection{Properties of the mappings}
\label{sec:prop_maps}
In this section, we provide properties of the mappings that we have just constructed (\cref{def:mapCMtoTNv2,def: TNtoCM,def: TNtoCMgeneral}). The proofs of the results in this section can be found in~\Cref{app:proofs}.

\paragraph{Composition of mappings}
A first relevant property involves the composition of the two mappings, which is the content of the following lemma.
\begin{restatable}[]{lemma}{compmaps}
\label{lemma:composition_maps}
     Let $\cm=(\graphnamedir,\{\hilmaparg{\edgename}\}_{\edgename\in\edgesetdir}, \{\mathcal{T}^\vertname\}_{\vertname\in\vertsetdir})$ be a causal model such that $\overtset=\emptyset$ and assume that the vertices of $\graphnamedir$ are labelled by distinct integers. Define the binary string $D_{\graphnamedir} = \{d_e\}_{\edgename\in\edgesetdir}$ where for $\edgename=(\vertname_i,\vertname_j)\in\edgesetdir$, $d_\edgename=0$ if $j\geq i$ and $d_\edgename=1$ if $j<i$, i.e., the sting $D_{\graphnamedir}$ simply records the direction of each edge in the directed graph $\graphnamedir$. Then, it holds:
    \begin{equation}
        \label{eq:isomfromCM}\mapTNtoCM_{D_{\graphnamedir}}\circ\mapCMtoTN(\cm)=\cm,
    \end{equation}
    where $\mapCMtoTN$ is as in~\cref{def:mapCMtoTNv2} and $\mapTNtoCM$ as in~\cref{def: TNtoCM}.
    Similarly, consider a tensor network $\tn = \left(\graphnameud,\{\hilmaparg{\edgename}\}_{\edgename\in\edgesetud},\{\projarg{\vertname}\}_{\vertname\in\vertsetud}\right)$ and any $D\in \{0,1\}^{|\edgesetud|}$ such that the mapping $\mapTNtoCM_{D}$ is well defined, then it holds:
    \begin{equation}
    \label{eq:isomfromTN}
        \mapCMtoTN\circ\mapTNtoCM_{D}(\tn)=\tn.
    \end{equation}
\end{restatable}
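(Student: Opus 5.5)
The plan is to check the two identities component by component, using the fact that $\CJ_{B|A}$ and $\CJin_{B|A}$ are mutually inverse linear isomorphisms between $\linops(\linops(\hilmaparg{A}),\linops(\hilmaparg{B}))$ and $\linops(\hilmaparg{A'}\otimes\hilmaparg{B})$. Each mapping acts independently on three pieces of data: the graph, the Hilbert spaces on edges, and the operator attached to each vertex. It therefore suffices to show that each piece is returned unchanged.

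For \cref{eq:isomfromCM}, start with the graph. By \cref{def:mapCMtoTNv2}, $\mapCMtoTN(\cm)$ lives on $\graphnameud$, which has the same vertices as $\graphnamedir$ and one undirected edge $\{v_i,v_j\}$ for each $(v_i,v_j)\in\edgesetdir$. Applying \cref{def:graphundirtodir} with $D_{\graphnamedir}$ then orients $\{v_i,v_j\}$, $i<j$, as $(v_i,v_j)$ when $d_e=0$ and as $(v_j,v_i)$ when $d_e=1$. By construction of $D_{\graphnamedir}$, this is exactly the original edge. So $\graphnamedir_{D_{\graphnamedir}}=\graphnamedir$, and in particular $\inedges{\vertname}$ and $\outedges{\vertname}$ coincide with those of $\cm$.

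Two edge cases need a separate word. For 2-cycles, the footnote duplicates the undirected edge, and each copy is oriented according to its own original direction. Self-loops require $j=i$, and $d_e=0$ handles them consistently. Since $\overtset=\emptyset$, every vertex of $\cm$ is unobserved, which matches the convention of \cref{def:graphundirtodir}.

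The Hilbert spaces are carried along unchanged by both maps. At each vertex we have $\projarg{\vertname}=\CJ_{\outedges{\vertname}|\inedges{\vertname}}(\chanmaparg{\vertname})$. We must check that the trace condition \cref{eq: trace_condition} holds, so that $\mapTNtoCM_{D_{\graphnamedir}}$ is defined. This follows from the CPTP characterisation lemma, since $\chanmaparg{\vertname}$ is trace preserving. Then $\CJin_{\outedges{\vertname}|\inedges{\vertname}}(\projarg{\vertname})=\chanmaparg{\vertname}$ by invertibility, which gives \cref{eq:isomfromCM}.

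For \cref{eq:isomfromTN}, the argument runs in reverse. Given that $\mapTNtoCM_D(\tn)$ is defined, all of its vertices are unobserved and its maps are $\CJin(\projarg{\vertname})$. Forgetting the orientation of $\graphnamedir_D$ gives back $\graphnameud$ with the same edge spaces, and $\CJ(\CJin(\projarg{\vertname}))=\projarg{\vertname}$. Since the input and output edge sets used in both directions are the same ones induced by $D$, the isomorphism $\hilmaparg{\incedges{\vertname}}\cong\hilmaparg{\inedges{\vertname}}\otimes\hilmaparg{\outedges{\vertname}}$ used in the two definitions matches.

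The main obstacle is bookkeeping rather than any substantial estimate. One point to get right is the identification of primed input copies and ordering of tensor factors, so that $\CJ$ and $\CJin$ are genuinely inverse on the same spaces. Another is the multigraph and self-loop edge cases in the graph round trip, which should be stated explicitly.
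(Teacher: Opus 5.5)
Your proposal is correct and follows the paper's proof. Both check that the graph round trip returns $\graphnamedir$ (respectively $\graphnameud$), that the edge Hilbert spaces are carried along unchanged, and that $\CJin_{\outedges{\vertname}|\inedges{\vertname}}\circ\CJ_{\outedges{\vertname}|\inedges{\vertname}}$ and $\CJ_{\outedges{\vertname}|\inedges{\vertname}}\circ\CJin_{\outedges{\vertname}|\inedges{\vertname}}$ act as the identity at every vertex. Your explicit check of the trace condition via the CPTP characterisation, and your remarks on 2-cycles and self-loops, are details the paper leaves implicit; it only notes that each $\mathcal{T}^\vertname$ is CPTP because $\overtset=\emptyset$.
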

\begin{proof}
  See~\Cref{app:proofs}.  
\end{proof}

Notice that, when keeping track of the direction of the initial directed graph, the mappings define an isomorphism. This has to be expected as a given tensor network is uniquely mapped into a causal model only for a fixed and allowed $D$. This property is analogous to the Choi-Jamiołkowsi isomorphism, which is indeed an isomorphism only for a fixed choice of input and output in the inverse Choi. \Cref{eq:isomfromCM} is valid only for causal models with no observed vertices. This is not restrictive for our purpose, as already argued in~\cref{rem:obsunobs}. Indeed, as will see in the following section, for studying signalling properties of the mappings, the relevant quantities always involve marginalization over the observed outcomes of the model.\\

In the case of the generalised mapping,~\cref{eq:isomfromCM} is trivially satisfied by replacing $\mapTNtoCM_{D_{\graphnamedir}}$ with $\mapTNtoCMgen_{D_{\graphnamedir}}$ since all partial trace conditions are satisfied by construction, thus the two mappings are the same for such choice of direction $D_{\graphnamedir}$. However,~\cref{eq:isomfromTN} is not exactly satisfied for the generalised mapping because of the addition of self-loops in the causal model. The following holds instead:
\begin{restatable}[]{lemma}{compmapsgen}
\label{lemma: compositiongen}
     Consider a tensor network $\tn = \left(\graphnameud,\{\hilmaparg{\edgename}\}_{\edgename\in\edgesetud},\{\projarg{\vertname}\}_{\vertname\in\vertsetud}\right)$ and $D\in \{0,1\}^{|\edgesetud|}$, then it holds:
    \begin{equation}
    \label{eq:isomfromTNgen}
        \mapCMtoTN\circ\mapTNtoCMgen_{D}(\tn)=\left(\graphnameud^{\circlearrowleft},\{\hilmaparg{\edgename}^{\circlearrowleft}\}_{\edgename\in\edgesetud^{\circlearrowleft}},\{\projarg{\vertname}^{\circlearrowleft}\}_{\vertname\in\vertsetud}\right),
    \end{equation}
    where
    \begin{myitem}
        \item the undirected graph $\graphnameud^{\circlearrowleft}$ has the same vertex and edge set as $\graphnamedir^{\circlearrowleft}$, the directed graph of the causal model $\mapTNtoCMgen_{D}(\tn)$ according to~\cref{def: TNtoCMgeneral}, where the edges are deprived of ordering, i.e., to each ordered pair $(v,w)\in\edgesetdir^{\circlearrowleft}$ corresponds a set $\{v,w\}\in\edgesetud^{\circlearrowleft}$;
        \item $\hilmaparg{\edgename}^{\circlearrowleft} = \hilmaparg{\edgename}$ for all $\edgename\in\edgesetud\subseteq \edgesetud^{\circlearrowleft}$;
        \item for all $\vertname\in\vertsetud$ it holds
        \begin{equation}
            \bra{\ubellstate}_{\edgesetud(v)}\projarg{\vertname}^{\circlearrowleft}\ket{\ubellstate}_{\edgesetud(v)} = \beta_\vertname\projarg{\vertname}
        \end{equation}
        where $\beta_\vertname>0$ is a real constant and $\edgesetud(v)$ is the set of edges adjacent to $\vertname$ in $\graphnameud^{\circlearrowleft}$ but not in $\graphnameud$, i.e.,      $\edgesetud(v)=\incedges{v}_{\graphnameud^{\circlearrowleft}}\setminus\incedges{v}_{\graphnameud}$.
    \end{myitem}
\end{restatable}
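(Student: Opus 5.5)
The first two items follow directly from unwinding the definitions. By \cref{def:mapCMtoTNv2}, the image tensor network has the same vertex and edge set as the directed graph of the causal model, with edges deprived of ordering. For $\mapTNtoCMgen_D(\tn)$, that directed graph is $\graphnamedir_D^{\circlearrowleft}$, so the undirected graph is $\graphnameud^{\circlearrowleft}$. For each $\vertname\in\vertset^\neg$ this graph contains one extra edge $\{\vertname,\vertname\}$, carrying the ancilla space $\hilmaparg{A_\vertname}$; as in the footnote of \cref{def:mapCMtoTNv2}, this self-loop edge is attached to $\vertname$ with two legs. Hilbert spaces on the original edges are unchanged in both mappings, which gives $\hilmaparg{\edgename}^{\circlearrowleft}=\hilmaparg{\edgename}$ for all $\edgename\in\edgesetud$. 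Hence $\edgesetud(v)=\emptyset$ for $\vertname\notin\vertset^\neg$, and $\edgesetud(v)$ is the self-loop for $\vertname\in\vertset^\neg$.

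For the third item I will split into two cases. If $\vertname\notin\vertset^\neg$, the map is $\chanmaparg{\vertname}=\CJin_{\outedges{\vertname}|\inedges{\vertname}}(\projarg{\vertname})$. Then $\projarg{\vertname}^{\circlearrowleft}=\CJ(\CJin(\projarg{\vertname}))=\projarg{\vertname}$, exactly as in the proof of \cref{lemma:composition_maps}, so the claim holds with $\beta_\vertname=1$ and an empty contraction.

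If $\vertname\in\vertset^\neg$, then $\projarg{\vertname}^{\circlearrowleft}=\CJ_{A_\vertname\outedges{\vertname}|A_\vertname\inedges{\vertname}}(\chanmaparg{\vertname})$. This is an operator on the input copy $A'_\vertname$ and output copy $A_\vertname$ of the ancilla, together with the original legs of $\vertname$. The key step is a Choi-level identity for the self-cycle composition: contracting the input and output ancilla copies of the Choi operator with the unnormalised Bell state $\ket{\ubellstate}$ gives, up to a positive factor, the Choi operator of $\selfcycle_{A_\vertname}(\chanmaparg{\vertname})$. To establish it, I will expand $\CJ(\chanmaparg{\vertname})=\frac{1}{d_{A}d_{\inedges{\vertname}}}\sum_{k,l,i,j}\chanmaparg{\vertname}(\ketbraa{k}{l}_{A}\otimes\ketbraa{i}{j})\otimes\ketbraa{k}{l}_{A'}\otimes\ketbraa{i}{j}$. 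Sandwiching between $\bra{\ubellstate}_{A A'}$ and $\ket{\ubellstate}_{AA'}$ produces $\sum_{k,l,m,n}\bra{m}_A\chanmaparg{\vertname}(\ketbraa{k}{l}\otimes\cdot)\ket{n}_A\,\braaket{m}{k}\braaket{l}{n}$. This collapses to $\sum_{k,l}\bra{k}\chanmaparg{\vertname}(\ketbraa{k}{l}\otimes\cdot)\ket{l}$, which by \cref{def:selfcycle_v3} is $\selfcycle_{A_\vertname}(\chanmaparg{\vertname})$. The result is therefore $\frac{d_{\inedges{\vertname}}}{d_{A_\vertname}d_{\inedges{\vertname}}}\,\CJ_{\outedges{\vertname}|\inedges{\vertname}}(\selfcycle_{A_\vertname}(\chanmaparg{\vertname}))=\frac{1}{d_{A_\vertname}}\CJ(\selfcycle_{A_\vertname}(\chanmaparg{\vertname}))$. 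I then substitute $\selfcycle_{A_\vertname}(\chanmaparg{\vertname})=\frac{1}{\alpha_\vertname}\CJin(\projarg{\vertname})$ from \cref{def: TNtoCMgeneral} (equivalently \cref{lemma: TNtoCMgeneral}) and use $\CJ\circ\CJin=\mathrm{id}$. This gives the claim with $\beta_\vertname=1/(\alpha_\vertname d_{A_\vertname})>0$, which is positive because $\alpha_\vertname>0$.

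The main obstacle is the bookkeeping of the index conventions rather than any conceptual step. I must check that the Bell contraction pairs the input copy $A'$ with the output copy $A$ in the orientation that reproduces $\bra{k}\cdots\ket{l}$ and not its transpose. Because $\ket{\ubellstate}$ is symmetric under swapping its two factors and the computational bases are fixed as in \cref{def:selfcycle_v3}, either pairing yields the same trace-like contraction. I would verify this by writing both sides in a fixed product basis. I also have to track the normalisation factor $d_{A}$ coming from the normalised Bell state in $\CJ$, since it only rescales $\beta_\vertname$.
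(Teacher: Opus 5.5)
Your proposal is correct and follows the paper's proof essentially step by step. You use the same case split on $\vertset^\neg$, the identity $\CJ\circ\CJin=\mathrm{id}$ (giving $\beta_\vertname=1$ with an empty contraction) for vertices satisfying the trace condition, and for the remaining vertices the contraction of the ancilla legs of the Choi operator with $\ket{\ubellstate}$, which yields $\frac{1}{d_{A_\vertname}}\CJ\bigl(\selfcycle_{A_\vertname}(\chanmaparg{\vertname})\bigr)$ and hence the same constant $\beta_\vertname=1/(\alpha_\vertname d_{A_\vertname})$; the only cosmetic difference is that you verify the Bell-contraction/self-cycle identity by direct index expansion, whereas the paper cites \cref{lemma: cycle_PCTC}, whose proof is that same computation.
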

\begin{proof}
  See~\Cref{app:proofs}. 
\end{proof}
There is nevertheless a natural sense in which the original tensor network is recovered. Starting from the right-hand side of~\cref{eq:isomfromTNgen}, one may perform only a subset of the tensor network contractions, i.e., those corresponding to the added self-cycle edges, while leaving all other edges uncontracted.
This suggests introducing the notion of a reduced tensor network, analogous to the concept of reduced process matrices or, more generally, to partial composition of cyclic quantum networks as discussed in~\cite{VilasiniRennerPRA}. By a reduced tensor network, we mean a network obtained by contracting only a specified subset of edges of a larger network, thereby producing a new tensor network on the corresponding subgraph. In this sense, the original tensor network is recovered as the reduced tensor network on the subgraph $\graphnameud$ of $\graphnameud^\circlearrowleft$, obtained by contracting precisely the self-cycle edges that were introduced.

\paragraph{Tensor contraction and self-cycle}The following lemma proves that two quantities which are relevant respectively for tensor networks and causal models, namely the tensor network contraction and the self-cycle, are linked through the mappings.
This property is crucial in connecting causal influence to known notions of causality.

\begin{restatable}{lemma}{contselfcyclegen}
\label{lemma: TNtoCMgeneral_correl}
Let us consider a causal model on $\graphnamedir$ and a tensor network on $\graphnameud$, $\cm$ and $\tn$, such that $\mapCMtoTN(\cm)=\tn$ or $\mapTNtoCMgen_D(\tn)=\cm$ for some $D\in\{0,1\}^{|\edgesetud|}$. The marginalised total map of $\cm$, $\sum_x \etot^x =\etot$, and the link and total state of $\tn$, $\ket{L}$, $\rhop$, satisfy:
\begin{equation}
    \beta \;\selfcycle\left(\etot\right) = \bra{L}\rhop\ket{L},
\end{equation}
for some real constant $\beta>0$ determined solely by the constants $\alpha_v$ of~\cref{def: TNtoCMgeneral} and the dimensions of $\hilmaparg{\edgesetud}$.
\end{restatable}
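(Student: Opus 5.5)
The core identity I would establish is that, for any linear map $\mathcal{M}_{\edgesetdir_{\textup{out}}|\edgesetdir_{\textup{in}}}$ with $\hilmaparg{\edgesetdir_{\textup{in}}}\cong\hilmaparg{\edgesetdir_{\textup{out}}}\cong\hilmaparg{\edgesetdir}$, the self-cycle equals a ``trace'' of its Choi operator against the link state:
\begin{equation}
    \selfcycle(\mathcal{M}) = \frac{1}{d_{\edgesetdir}}\bra{L}\CJ_{\edgesetdir_{\textup{out}}|\edgesetdir_{\textup{in}}}(\mathcal{M})\ket{L},
\end{equation}
up to the identification of $\Htot$ with $\hilmaparg{\edgesetdir_{\textup{in}}'}\otimes\hilmaparg{\edgesetdir_{\textup{out}}}$ (each edge $e=(v,w)$ contributes one copy at $(e,v)$, playing the output, and one at $(e,w)$, playing the Choi reference of the input). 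To verify it, expand $\ketbra{\bellstate}=\frac{1}{d}\sum_{k,l}\ketbraa{k}{l}\otimes\ketbraa{k}{l}$ in \cref{eq:choidirect}, so $\CJ(\mathcal{M})=\frac1d\sum_{k,l}\mathcal{M}(\ketbraa{k}{l})\otimes\ketbraa{k}{l}$. Sandwiching with $\ket{L}=\sum_i\ket{i}\ket{i}$ edge by edge gives $\frac1d\sum_{k,l}\bra{k}\mathcal{M}(\ketbraa{k}{l})\ket{l}$, which is exactly \cref{def:selfcycle_v3}. The only care needed is that the basis in $\ket{L}$ matches the one used in both the Choi map and the self-cycle; possible transposition conventions are harmless because the self-cycle is basis independent. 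I would also check that the tensor-factor ordering per vertex agrees with the ordering of $\projarg{\vertname}$ on $\incedges{\vertname}$.

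\emph{Case $\mapCMtoTN(\cm)=\tn$.} By \cref{def:mapCMtoTNv2}, $\rhop=\CJ(\etot)$, so the identity gives the claim directly with $\beta=d_{\edgesetdir}$.

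\emph{Case $\mapTNtoCMgen_D(\tn)=\cm$.} Here the total map of $\cm$ lives on the enlarged edge set $\edgesetdir\cup\{(\vertname,\vertname)\}_{\vertname\in\vertset^\neg}$. The first step is to note that the full self-cycle can be done in stages: contracting the self-loop ancillas $A_\vertname$ first and then the remaining edges. This holds because $\selfcycle$ over a product of systems factorizes as iterated partial self-cycles, as is clear from the sum over basis elements. After contracting the ancillas, $\etot$ becomes $\bigotimes_{\vertname\notin\vertset^\neg}\chanmaparg{\vertname}\otimes\bigotimes_{\vertname\in\vertset^\neg}\selfcycle_{A_\vertname}(\chanmaparg{\vertname})$. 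By \cref{def: TNtoCMgeneral} and \cref{lemma: TNtoCMgeneral}, this equals $\prod_{\vertname\in\vertset^\neg}\alpha_\vertname^{-1}\,\CJin(\bigotimes_\vertname\projarg{\vertname})=\prod\alpha_\vertname^{-1}\CJin(\rhop)$; for the other vertices we use $\CJ\circ\CJin=\mathrm{id}$. Applying the identity then gives $\selfcycle(\etot)=\frac{1}{d_{\edgesetud}}\prod\alpha_\vertname^{-1}\bra{L}\rhop\ket{L}$, i.e.\ $\beta=d_{\edgesetud}\prod_{\vertname\in\vertset^\neg}\alpha_\vertname$.

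The main obstacle is the bookkeeping of the first step, namely verifying that $\CJin$ and $\CJ$ applied vertexwise commute with the tensor product across vertices and are consistent with the doubling in \cref{def:linktotalstate}. I would also check that the stagewise self-cycle (partial, then full) is legitimate for non-CPTP linear maps. I expect this to follow from linearity, by expanding in bases.
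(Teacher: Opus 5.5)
Your proposal is the paper's proof: the link-state sandwich of a Choi operator is a self-cycle (with $\rhop=\CJ(\etot)$ in the first case), and in the generalised case you first contract the ancilla loops, so that each vertex contributes $\selfcycle_{A_v}(\chanmaparg{v})$ with Choi operator $P_v/\alpha_v$ (taking $\alpha_v=1$ for $v\notin\vertset^\neg$), and then apply the same identity. The one error is in the constant. Your own expansion gives $\bra{L}\CJ(\mathcal{M})\ket{L}=\frac{1}{d}\selfcycle(\mathcal{M})$, not the displayed $\selfcycle(\mathcal{M})=\frac{1}{d}\bra{L}\CJ(\mathcal{M})\ket{L}$, so the correct values are $\beta=1/d_{\edgesetud}$ and $\beta=\prod_v\alpha_v/d_{\edgesetud}$, as in the paper, rather than $d_{\edgesetud}$ and $d_{\edgesetud}\prod_v\alpha_v$. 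This does not affect the lemma, which only asserts that some such $\beta>0$ exists.
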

\begin{proof}
    See~\Cref{app:proofs}.
\end{proof}

Then, the following corollary follows straightforwardly.

\begin{restatable}[]{corollary}{cordifferentD}
\label{cor:differentD}
Given a tensor network $\tn$ on $\graphnameud$ and two choices of directions $D_1,D_2\in\{0,1\}^{|E|}$, then the marginalised total maps of $\mapTNtoCMgen_{D_1}(\tn)$ and $\mapTNtoCMgen_{D_2}(\tn)$ satisfy
\begin{equation}
    \beta\selfcycle\left(\etot_1\right) = \selfcycle\left(\etot_2\right),  
\end{equation}
for some real constant $\beta>0$ determined by the constants $\alpha_v$ of~\cref{def: TNtoCMgeneral} and the dimensions of $\hilmaparg{\edgesetud}$.
\end{restatable}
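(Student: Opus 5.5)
The plan is to apply \cref{lemma: TNtoCMgeneral_correl} twice, once for each choice of direction, and then compare the two resulting identities. Both $\mapTNtoCMgen_{D_1}(\tn)$ and $\mapTNtoCMgen_{D_2}(\tn)$ are image causal models of the same tensor network $\tn$. The lemma's hypothesis $\mapTNtoCMgen_D(\tn)=\cm$ therefore holds with $D=D_1$ and with $D=D_2$. Applying it in each case gives constants $\beta_1,\beta_2>0$ such that
\begin{equation}
    \beta_1\,\selfcycle(\etot_1)=\bra{L}\rhop\ket{L}=\beta_2\,\selfcycle(\etot_2).
\end{equation}
The right-hand side is the tensor contraction of $\tn$. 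It does not depend on $D$, because the link state and total state of \cref{def:linktotalstate} are built only from $\graphnameud$, the edge spaces and the operators $\projarg{\vertname}$.

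Dividing through, we get $\selfcycle(\etot_2)=(\beta_1/\beta_2)\,\selfcycle(\etot_1)$. We then set $\beta:=\beta_1/\beta_2$. This constant is strictly positive and real, since each $\beta_i$ is. Each $\beta_i$ is determined only by the constants $\alpha_\vertname$ of the respective generalised mapping and by the dimensions of $\hilmaparg{\edgesetud}$. Hence so is $\beta$: it depends on the $\alpha_\vertname$ arising for $D_1$ and for $D_2$, which may differ, together with those dimensions. This is exactly the form the corollary asks for.

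The only point needing care is the division by $\beta_2$, which is legitimate because \cref{lemma: TNtoCMgeneral_correl} guarantees $\beta_2>0$. One should also note that no inconsistency can break the relation. If the contraction $\bra{L}\rhop\ket{L}$ vanishes, then both self-cycles vanish and the identity holds trivially. Otherwise both are nonzero and the identity holds with the ratio above. Beyond this, there is no real obstacle: all the substantive work, namely relating the self-cycle of the marginalised total map to the contraction through the Choi isomorphism and the self-loop ancillas, is already contained in \cref{lemma: TNtoCMgeneral_correl}.
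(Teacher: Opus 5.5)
Your proposal is correct and matches the paper's approach: the paper only remarks that the corollary follows straightforwardly from \cref{lemma: TNtoCMgeneral_correl}, which is exactly your argument of applying that lemma to $\mapTNtoCMgen_{D_1}(\tn)$ and to $\mapTNtoCMgen_{D_2}(\tn)$, equating through the $D$-independent contraction $\bra{L}\rhop\ket{L}$, and setting $\beta=\beta_1/\beta_2>0$. A small aside: in the lemma's proof each $\beta_i$ equals $\prod_{\vertname}\alpha_{\vertname}^{(D_i)}/d_E$, so the dimension factor cancels in the ratio.
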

This property has an interesting feature: a given causal model $\cm_1$ can be first mapped into a tensor network through $\mapCMtoTN(\cm_1)=\tn$, then the image tensor network can be mapped to a different causal model from the original one, $\cm_2$. Because of~\cref{cor:differentD}, it holds:
\begin{equation}
     \selfcycle\left(\etot_1\right) = \beta\selfcycle\left(\etot_2\right),  
\end{equation}
where $\etot_i$ is the marginalised total map of the causal model $\cm_i$ and $\beta=1$ if the mapping from $\tn$ to $\cm_2$ is associated with a valid choice of directions according to~\cref{def: TNtoCM}, $D$, while generally $\beta\in\mathbb{C}$.
Thus, one can use the mapping to construct a family of causal models that are closely related to each other. In the following, we will explore this feature even further, arguing that this class of causal models, arising from the same tensor network can be understood as space-time rotations of each other.

\section{Signalling in causal models vs causal influence in tensor networks}
\label{sec:signallingandci}
In this section, we show a connection between causal influence in tensor networks and signalling in causal models through the mappings defined in~\Cref{sec: mappings}.

First, we analyse the definition of causal influence in tensor networks and present a slight modification of it, providing it with a more operational interpretation, which will allow us to definitively conclude that it does capture causation in an operational sense, as opposed to generic correlations. 

\subsection{Operational analysis of causal influence proposed in~\cite{Cotler_2019}}
\label{sec: new ci}

The mappings of~\Cref{sec: mappings} highlight connections between the tensor network contraction and the self-cycle operation. 
These two quantities are relevant respectively in defining causal influence in a tensor network (see~\cref{def: causal influence old}) and probabilities in a casual model (see~\cref{prop:probs as self cycles}). Thus, we can use the connection between probabilities, which are operational quantities, and self-cycle to provide causal influence with an operational motivation (see \Cref{sec:open questions ci}). 

First, we recall the definition of causal influence provided in~\cite{Cotler_2019} and analyse its operational interpretation in connection with probability theory.

\oldci*

The arguments leading to the above definition have been discussed in~\Cref{sec:tn causal influence intro}. Now we know that the correlation functions appearing in~\cref{def: causal influence old} are connected to the self-cycle operation, which is related to probabilities in cyclic causal models~\cref{lemma: TNtoCMgeneral_correl}. We can thus rephrase the arguments leading to~\cref{def: causal influence old} in terms of probabilities. For now, we will make these arguments in terms of ``virtual'' measurements, whose physicality is unclear. Thus, the following are to be understood as intuitive arguments clarifying~\cref{def: causal influence old}, these will be made precise though the connection between causal influence and signalling in causal models (\Cref{app:cyctoacyc,rem:virtualmmt}):
\begin{itemize}
    \item \textbf{The tensor network contraction:} The tensor network contraction of \cref{def: tn contraction} can be written as $\bra{L}\rhop\ket{L}=\Tr\left(\ketbra{L}\rhop\right)\in\mathbb{C}$. If we define a binary measurement $\{\ketbra{L},\id-\ketbra{L}\}$, associated with outcomes $\{L,\neg L\}$, we can understand the tensor network contraction as a probability of observing outcome corresponding to the POVM element $\ketbra{L}$ on the state $\rhop$, and denote it with $P(L)_{\rhop}$.
    This ``virtual'' measurement on the $\rhop$ will then be linked to observable probabilities in acyclic casual models via our mapping from tensor networks to causal models (see~\Cref{app:cyctoacyc,rem:virtualmmt}).
    \item \textbf{The denominator:} The role of the denominator is intuitively explained in~\cite{Cotler_2019} by saying that it is needed in order to isolate the influence on $\region{B}$ of modifications in $\region{A}$ from the influence that these have on the whole network contraction. With the interpretation given above, we can understand the denominator as a probability of observing outcome corresponding to the POVM element $\ketbra{L}$ on the state $U_\region{A}\rhop U^\dagger_{\region{A}}$, i.e., $P(L)_{U_\region{A}\rhop U^\dagger_{\region{A}}}$.  

    We remark that the denominator may be zero, in case $P(L)_{U_\region{A}\rhop U^\dagger_{\region{A}}}=0$.

    \item \textbf{The operation inserted in $\region{B}$:} The operator $O_{\region{B}}$ inserted in $\region{B}$ is required to be Hermitian, so that no predefined direction is induced, and such that $O^\dagger_{\region{B}}O_{\region{B}}\leq \id_\region{B}$. Consider now a quantum instrument, corresponding to a measurement of a quantum system including the quantum evolution of the measured state. This is described by a linear, completely positive and trace preserving map, which can be decomposed in the Kraus form as
    \begin{equation}
        \mathcal{M}_{\region{B}X|\region{B}}(\rho_{\region{B}}) = \sum_x \ketbra{x}_X\otimes O^x_\region{B}\rho_{\region{B}} {O^{x}_{\region{B}}}^\dagger
    \end{equation}
    for some set of positive linear operators, $\mathcal{O}_\region{B}=\{O^x_{\region{B}}\in\linops(\hilmaparg{\region{B}})\}_{x}$, which we call Kraus operators and satisfy $\sum_{x} {O^{x}_{\region{B}}}^\dagger O^x_\region{B}=\id_{\region{B}}$. The un-normalised state of the system $\region{B}$ after the instrument is applied is
    \begin{equation}
        \mathcal{M}^{X=x}_{\region{B}|\region{B}}(\rho_{\region{B}}) = O^x_\region{B}\rho_{\region{B}} {O^{x}_{\region{B}}}^\dagger,
    \end{equation} 
    the above contains the probability associated to observing the outcome $X=x$, which is given by $\Tr( O^x_\region{B}\rho_{\region{B}} {O^{x}_{\region{B}}}^\dagger)$.
    Notice that each $O^x_{\region{B}}$ satisfies ${O^{x}_{\region{B}}}^\dagger O^x_\region{B}\leq\id_{\region{B}}$.
    
    Thus, combining this with the previous points, we can understand the operation inserted in $\region{B}$ in \cref{def: causal influence old} as being a Kraus operator corresponding to a specific outcome of a quantum instrument. The correlation function is then related to the probability of observing such outcome, i.e., $P(L,x|\mathcal{O}_\region{B})_{U_\region{A}\rhop U^\dagger_{\region{A}}}$.
    With this analogy in mind, we can think of the action of modifying the network as applying a quantum instrument, associated with an observable outcome, in the region $\region{B}$ and observing whether the probability of such outcome is affected by local modifications in $\region{A}$. 
\end{itemize}

Thus, by putting together the above points we find that we can understand the quantity $M_{\textup{CHQY}}$ as
\begin{equation}
    M_{\textup{CHQY}}(U_\region{A}:O_\region{B})= \frac{\bra{L}(U_\region{A}\otimes O_\region{B})\rhop (U_\region{A}^\dagger\otimes O_\region{B}^\dagger)\ket{L}}{\bra{L}(U_\region{A}\otimes \id_\region{B})\rhop (U_\region{A}^\dagger\otimes \id_\region{B})\ket{L}}=\frac{P(L,\bar{x}|\mathcal{O}_{\region{B}})_{U_\region{A}\rhop U^\dagger_{\region{A}}}}{P(L)_{U_\region{A}\rhop U^\dagger_{\region{A}}}},
\end{equation}
for an arbitrary set $\mathcal{O}_{\region{B}}=\{O^x_{\region{B}}\}_{x\in\mathcal{X}}$ such that $O_\region{B}=O_\region{B}^{\bar{x}}\in \mathcal{O}_\region{B}$ for some $\bar{x}\in\mathcal{X}$ provided in~\cref{def: causal influence old}.
This definition should have the goal of isolating the effect of $U_{\region{A}}$ on the outcome of the measurement in $\region{B}$ from its effect on the outcome of the tensor network contraction, $L$. However, this effect, which would be naturally portrayed through conditional probabilities, cannot be justified operationally. Indeed, in general, $\sum_xP(L,x|\mathcal{O}_{\region{B}})_{U_\region{A}\rhop U^\dagger_{\region{A}}} \neq P(L)_{U_\region{A}\rhop U^\dagger_{\region{A}}}$, thus the quantity $ M_{\textup{CHQY}}$ is not a probability distribution and ~\cref{def: causal influence old} does not have a straightforward operational interpretation in terms of probabilities.

In addition,~\cref{def: causal influence old} does not account for the possibility that the denominator is zero, in which case $M_{\textup{CHQY}}(U_\region{A}:O_\region{B})$ is ill-defined.

\subsection{Defining operational causal influence in tensor networks}
\label{sec:newCIdef}
Let us now provide a different definition of causal influence in a tensor network and show its operational meaning. 

\begin{definition}[Operational causal influence in a tensor network]
\label{def: causal influence}
Consider a tensor network $\tn = \left(\graphnameud,\{\hilmaparg{\edgename}\}_{\edgename\in\edgesetud},\{\projarg{\vertname}\}_{\vertname\in\vertsetud}\right)$ and an ordered pair of disjoint subsets of edges, i.e., $(\region{A},\region{B})$ such that $\region{A},\region{B}\subset\edgesetud$ and $\region{A}\cap\region{B}=\emptyset$, which we call \textup{regions}. Let $U_\region{A}\in\linops(\hilmaparg{\region{A}})$ be a unitary and $\mathcal{O}_\region{B}=\{O^y_\region{B}\in\linops(\hilmaparg{\region{B}})\}_{y\in\mathcal{Y}}$ a quantum instrument with Hermitian Kraus operators $O^y_\region{B}$. We define the correlation function
    \begin{equation}
        M(U_\region{A}:O^y_\region{B}|\mathcal{O}_\region{B}) = \frac{\bra{L}(U_\region{A}\otimes O^y_\region{B})\rhop (U_\region{A}^\dagger\otimes {O^y_\region{B}}^\dagger)\ket{L}}{\sum_{y}\bra{L}(U_\region{A}\otimes O^y_\region{B})\rhop (U_\region{A}^\dagger\otimes  {O^y_\region{B}}^\dagger)\ket{L}},
    \end{equation}
    if $\sum_{y}\bra{L}(U_\region{A}\otimes O^y_\region{B})\rhop (U_\region{A}^\dagger\otimes  {O^y_\region{B}}^\dagger)\ket{L}\neq 0$. Otherwise, we say that the pair $U_\region{A}$ and $\mathcal{O}_\region{B}$ are inconsistent relative to the tensor network.
    The operational quantum causal influence from $\region{A}$ to $\region{B}$ is zero if and only if 
    \begin{equation}
        M(U_\region{A}:O^y_\region{B}|\mathcal{O}_\region{B}) = M(U'_\region{A}:O^y_\region{B}|\mathcal{O}_\region{B})
    \end{equation}
    for all unitaries $U_\region{A}$ $U_\region{A}'$ and quantum instruments $\mathcal{O}_\region{B}=\{O_\region{B}^y\}_y$ such that $(U_\region{A},\mathcal{O}_\region{B})$ and $(U_\region{A}',\mathcal{O}_\region{B})$ are both consistent pairs.
\end{definition}
Note that above definition of causal influence depends not just on the particular instrument element $\mathcal{O}_\region{B}^y$ but on the the entire instrument $\mathcal{O}_\region{B}=\{\mathcal{O}_\region{B}^y\}_y$, which we have made explicit by introducing a conditioning on $\mathcal{O}_\region{B}$ in the notation for the quantity $M$.
Furthermore, the above definition also takes care of situations where the denominator is zero, thus providing a well-defined notion of the quantity $M$ and causal influence. In addition, we can interpret $M$ as
\begin{equation}
\label{eq:virtualmmt}
\begin{split}
     M(U_\region{A}:O^y_\region{B}|\mathcal{O}_\region{B}) &= \frac{\bra{L}(U_\region{A}\otimes O^y_\region{B})\rhop (U_\region{A}^\dagger\otimes {O^y_\region{B}}^\dagger)\ket{L}}{\sum_{y}\bra{L}(U_\region{A}\otimes O^y_\region{B})\rhop (U_\region{A}^\dagger\otimes  {O^y_\region{B}}^\dagger)\ket{L}}\\&=\frac{P(L,y|\mathcal{O}_{\region{B}})_{U_\region{A}\rhop U^\dagger_{\region{A}}}}{\sum_{y}P(L,y|\mathcal{O}_{\region{B}})_{U_\region{A}\rhop U^\dagger_{\region{A}}}}
     = P(y|L,\mathcal{O}_{\region{B}})_{U_\region{A}\rhop U^\dagger_{\region{A}}},
\end{split}
\end{equation}
by following the same interpretation of correlation functions as probabilities.
Conditional probabilities are precisely the object that allows us to isolate the contribution to the observable in $\region{B}$ from the contribution to the whole tensor network contraction. Indeed, it evaluates the distribution in $\region{B}$ for fixed $L$. We refer to~\cref{def: causal influence} as ``operational causal influence'', however, when clear from context we may simply call this measure ``causal influence''.

Going back to the open questions presented in~\Cref{sec:open questions ci}, we can now lift the doubt on whether one can define a notion of causal influence in tensor networks which genuinely portrays causation and not solely correlations. Indeed,~\cref{def: causal influence} provides an asymmetric correlation function where the observed distribution associated to the affected region $\region{B}$ is evaluated for different choices of operations in $\region{A}$. This is analogous to definitions of signalling in causal models, as we will further investigate in the next section.
\begin{remark}
    Generally, the two definitions of causal influence differ from each other. Indeed, while one only depends on one Kraus operator of a quantum instrument, the other depends on the whole set of operators. Thus, the two differ when different completions of the Kraus operator affect the network contraction in the denominator. The setups where the definition coincide can be studied using linearity results valid for causal models~\cite{Quantum_paper} and the mappings of the previous section. For instance, we expect the two notions of causal influence to coincide for tensor networks whose image causal model through the mappings of~\Cref{sec: mappings} is a process matrix~\cite{Barrett_2021}. Indeed, these are known to be a linear subset of generally cyclic causal models, which implies, thanks to results of~\cite{Quantum_paper}, that the denominators appearing in the two definitions of causal influence are constant and equal.
\end{remark}

\subsection{Bridging signalling and operational causal influence}
\label{sec:signalling and CI}
In this section, we show that the mappings of~\Cref{sec: mappings} link operational causal influence in a tensor network~(\cref{def: causal influence}) to signalling with unitaries in causal model as in~\cref{def: signalling}.
Firstly, let us define a time-reversible intervention on a given causal model.

\begin{definition}[Time-reversible intervention]\label{def:timerevint} Let $\cm=(\graphnamedir,\{\hilmaparg{\edgename}\}_{\edgename\in\edgesetdir},\{\mathcal{T}^\vertname\}_{\vertname\in\vertsetdir})$ be a causal model and consider an ordered pair of labs $(\labarg{A},\labarg{B})$, i.e., disjoint subsets of edges such that $\labarg{A}, \labarg{B}\subseteq \edgesetdir$ and $\labarg{A}\cap\labarg{B}=\emptyset$.
We call \textup{time-reversible intervention} an intervention defined by:
    \begin{itemize}
        \item the set of all unitary channels on $\labarg{A}$ labelled by a continuous variable $a\in\outcomemaparg{A}$ associated to the lab $\labarg{A}$, i.e.,
        \begin{equation}
            \{\mathcal{U}_{\labarg{A}}^{a}: \linops\left(\hilmaparg{\labarg{A}}\right)\mapsto \linops\left(\hilmaparg{\labarg{A}}\right)\}_{a\in \outcomemaparg{A}},
        \end{equation} 
        such that for all $\rho\in\linops\left(\hilmaparg{\labarg{A}}\right)$ and $a\in\outcomemaparg{A}$
        \begin{equation}
            \mathcal{U}_{\labarg{A}}^{a}(\rho)=U^a\rho {U^a}^\dagger
        \end{equation}
       where $U^a\in\linops(\hilmaparg{\labarg{A}})$ is unitary;
        \item the set of all Hermitian quantum instruments with discrete outcome $y\in\outcomemaparg{Y}$, labelled by a continuous variable $b\in\outcomemaparg{B}$, associated to the lab $\labarg{B}$, i.e.,
        \begin{equation}
            \left\{\left\{\instrmapsetting{\labarg{B}}{y|b}: \linops\left(\hilmaparg{\labarg{B}}\right)\mapsto \linops\left(\hilmaparg{\labarg{B}}\right)\right\}_{y\in\outcomemaparg{Y}}\right\}_{b\in \outcomemaparg{B}}
        \end{equation}
        such that for all $\rho\in\linops(\hilmaparg{\labarg{B}})$, $b\in\outcomemaparg{B}$ and $y\in\outcomemaparg{Y}$
        \begin{equation}
            \mathcal{M}^{y|b}_{\labarg{B}}(\rho_\labarg{B}) = O^{y|b}_\labarg{B}\rho_{\labarg{B}} {O^{y|b}_{\labarg{B}}}^\dagger,  
        \end{equation}
        where $O^{y|b}_\labarg{B}\in\linops(\hilmaparg{\labarg{B}})$ are positive, Hermitian and $\sum_y {O^{y|b}_{\labarg{B}}}^\dagger O^{y|b}_\labarg{B} = \id_{\labarg{B}}$.
    \end{itemize}
\end{definition}
The interventions that we study are time-reversible since the set of unitaries is closed under reversing inputs and outputs and Hermitian operators are invariant under such transformation. The definition is motivated by causal influence in tensor networks, where these operations are considered as they do not introduce an artificial time direction. While isometries are more general time-reversible operations, these are necessarily unitaries because, by construction of interventions, we only consider operations with isomorphic input and output spaces.

Let us now show the connection between signalling and operational causal influence.

\begin{restatable}[Connecting signalling and operational causal influence]{theorem}{ciassignalling}
\label{lem:signalling_CI}
Let us consider a causal model on $\graphnamedir$ and a tensor network on $\graphnameud$, $\cm$ and $\tn$, such that $\mapCMtoTN(\cm)=\tn$ or $\mapTNtoCMgen_D(\tn)=\cm$ for some $D\in\{0,1\}^{|\edgesetud|}$. Let $(\region{A},\region{B})$ be an ordered pair of regions in $\graphnameud$, i.e., disjoint subsets of $\edgesetdir$, and $(\labarg{A},\labarg{B})$ be the corresponding ordered pair of labs (cf.~\cref{def:intervention}) in the directed graph of $\cm$. Then, $\labarg{A}$ signals to $\labarg{B}$ in $\cm$ through a time-reversible intervention if and only if the operational causal influence from $\region{A}$ to $\region{B}$ in $\tn$ is non-zero.
\end{restatable}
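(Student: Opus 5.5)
The plan is to reduce the statement to an identity between the two quantities, term by term. We show that for every unitary $U_\region{A}$ and every Hermitian instrument $\mathcal{O}_\region{B}=\{O^y_\region{B}\}_y$ there is a constant $\beta>0$, independent of $U_\region{A}$, of $\mathcal{O}_\region{B}$ and of $y$, such that
\begin{equation}
    \beta\,\selfcycle\left(\left(\mathcal{M}^{y}_{\labarg{B}}\otimes\mathcal{U}_{\labarg{A}}\right)\circ\etot\right)=\bra{L}(U_\region{A}\otimes O^y_\region{B})\rhop(U_\region{A}^\dagger\otimes {O^y_\region{B}}^\dagger)\ket{L}.
\end{equation}
Here $\etot$ is the marginalised total map of $\cm$, $\mathcal{U}_{\labarg{A}}(\cdot)=U_\region{A}\cdot U_\region{A}^\dagger$ and $\mathcal{M}^y_{\labarg{B}}(\cdot)=O^y_\region{B}\cdot {O^y_\region{B}}^\dagger$. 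Observed vertices of $\cm$ are marginalised in both objects, so by \cref{rem:obsunobs} we may assume they are absent.

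Once this identity holds, summing over $y$ gives equal denominators up to the same $\beta$. Hence consistency of the intervention in the sense of \cref{def:statisticsint} is equivalent to consistency of the pair $(U_\region{A},\mathcal{O}_\region{B})$ in \cref{def: causal influence}. On consistent pairs the constant cancels, so
\begin{equation}
    \prob(y|a,b)_I=M(U^a_\region{A}:O^{y|b}_\region{B}|\mathcal{O}^b_\region{B}).
\end{equation}
The time-reversible intervention of \cref{def:timerevint} ranges over all unitaries and all Hermitian instruments. This is exactly the family quantified over in \cref{def: causal influence}. Signalling, meaning that some $y,b,a,a'$ give different conditional probabilities, is then literally the negation of zero operational causal influence. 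This gives both directions at once.

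To prove the identity, we build a modified causal model $\cm'$ and a modified tensor network $\tn'$ and apply \cref{lemma: TNtoCMgeneral_correl} to them.
\begin{itemize}
    \item \textbf{Construction of $\cm'$.} Absorb $\mathcal{U}_{\labarg{A}}$ and $\mathcal{M}^y_{\labarg{B}}$ into the vertex maps. For each edge $e=(v,w)$ in $\labarg{A}\cup\labarg{B}$, post-compose the map of its tail $v$ with the corresponding operator on the output leg $e$. The resulting maps are CP, and the self-cycle of the new total map equals the left-hand side above.
    \item \textbf{Effect on the Choi operator.} By the definition of $\CJ$, post-composing with $X\cdot X^\dagger$ on an output leg conjugates the Choi operator $\projarg{v}$ by $X$ on that leg. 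So $\mapCMtoTN(\cm')$ is $\tn$ with $\rhop$ replaced by $(U\otimes O^y)\rhop(U^\dagger\otimes {O^y}^\dagger)$, with the operators acting on the copy $\hilmaparg{(e,v)}$.
    \item \textbf{Choice of copy.} By the discussion in \Cref{sec:open questions ci} and its proof in \Cref{app:clarification on causal influence}, the choice of copy does not matter. The only caveat is a transpose when moving an operator across $\ket{\ubellstate}$. The set of unitaries is closed under transposition. Hermitian Kraus operators map to $\bar O$, which is still Hermitian and still forms an instrument. So the quantified families are unchanged.
\end{itemize}

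The identity would then follow directly from \cref{lemma: TNtoCMgeneral_correl} if that lemma applied to CP maps. The main obstacle is that the lemma is stated for CPTP causal models, whereas $\cm'$ has the CP but non-TP maps $\mathcal{M}^y$, and the constant $\beta$ must not depend on $U$, $O^y$ or $y$. I would handle this by redoing the lemma's proof, which is linear in each vertex map.
\begin{itemize}
    \item The self-cycle is computed via \cref{def:selfcycle_v3}, and the contraction $\bra{L}\cdot\ket{L}$ equals that self-cycle applied to $\CJin$ of each vertex operator, up to factors $d_{\inedges{v}}$.
    \item Each $\alpha_v$ enters from \cref{lemma: TNtoCMgeneral} through the self-loops added in \cref{def: TNtoCMgeneral}. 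Those self-loops are attached to vertices, not to the intervened edges.
    \item Therefore $\beta$ depends only on the $\alpha_v$ and on the dimensions, which are the same in the intervened and non-intervened networks.
\end{itemize}

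For the case $\mapTNtoCMgen_D(\tn)=\cm$, we use that the self-loop edges are disjoint from $\region{A}$ and $\region{B}$. The interventions therefore commute with the reduction in \cref{lemma: compositiongen}, and the same argument applies.
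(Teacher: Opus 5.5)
Your overall strategy is the paper's. You show that the intervened contraction equals $\frac{\alpha}{d_{\edgesetud}}\,\selfcycle\big[(\mathcal{U}^a_{\labarg{A}}\otimes\mathcal{M}^{y|b}_{\labarg{B}})\circ\bar{\etot}\big]$, with a constant independent of $a,b,y$. This gives $M(U^a:O^{y|b}|\mathcal{O}^b)=\prob(y|a,b)_I$, and the equivalence follows. Your remark that the two consistency conditions coincide is a small point the paper leaves implicit.

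The step that fails as written is how you derive this identity. You build $\cm'$ by post-composing the tail vertex of each $e\in\labarg{A}\cup\labarg{B}$ with ``the corresponding operator on the output leg $e$''. This presupposes that $U_{\region{A}}$ and each $O^y_{\region{B}}$ factorise over edges, or that all edges of a region share one tail. Neither is assumed: $U_{\region{A}}$ is an arbitrary unitary on $\hilmaparg{\region{A}}$. In the paper's own cyclic example with $\labarg{A}=\{e_3,e_4\}$, the tails are $v_3\neq v_4$, and an entangling unitary on $e_3e_4$ cannot be pushed into either vertex. In that case $\cm'$ is not a (CP) model on $\graphnamedir$. Likewise $\mapCMtoTN(\cm')$ is not a tensor network on $\graphnameud$, since its total operator is no longer a product over vertices. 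So neither \cref{lemma: TNtoCMgeneral_correl} nor a CP-extension of it ``linear in each vertex map'' applies.

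The repair is to drop the vertex-level packaging and work with the total map, as the paper does. Take any linear $\mathcal{F}:\linops(\hilmaparg{\edgeset_{\textup{in}}})\to\linops(\hilmaparg{\edgeset_{\textup{out}}})$ and any operator $X$ on the $\edgeset_{\textup{out}}$ copy, and set $\mathcal{X}=X\cdot X^\dagger$. A one-line computation with the link state gives $\bra{L}(X\otimes\id)\CJ(\mathcal{F})(X^\dagger\otimes\id)\ket{L}=\frac{1}{d_{\edgesetud}}\selfcycle(\mathcal{X}\circ\mathcal{F})$. No factorisation of $X$ or $\mathcal{F}$ is needed. In the case $\mapCMtoTN(\cm)=\tn$, apply this with $X=U_{\region{A}}\otimes O^y_{\region{B}}$ and $\mathcal{F}=\etot$. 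In the case $\mapTNtoCMgen_D(\tn)=\cm$, apply it to $\rhop=\alpha\,\CJ\big(\selfcycle_{A_V}(\bar{\etot})\big)$, and commute $\selfcycle_{A_V}$ past $\mathcal{X}$, since $X$ does not touch the ancillas. This yields the identity directly. It also makes the CP-extension of the lemma and the detour through \cref{lemma: compositiongen} unnecessary.
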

\begin{proof}
    See~\Cref{app:proofs}.
\end{proof}

With the above theorem, we have proven the central result of the mapping, which allows to interchangeably work between operational causal influence and signalling. Because of this connection, we will often use the terms of labs and regions interchangeably when studying causal influence through an image causal model or signalling through the image tensor network. 
\begin{remark}
   \label{rem:virtualmmt}
   Thanks to the above connection between signalling in causal models and operational causal influence in tensor networks, we can translate the analysis of~\Cref{sec: new ci} (relating to the ``virtual'' measurement) in terms of measurements in acyclic casual models. Specifically, we recall~\cref{eq:virtualmmt}:
   \begin{equation}
    \begin{split}
         M(U_\region{A}:O^y_\region{B}|\mathcal{O}_\region{B}) &= \frac{\bra{L}(U_\region{A}\otimes O^y_\region{B})\rhop (U_\region{A}^\dagger\otimes {O^y_\region{B}}^\dagger)\ket{L}}{\sum_{y}\bra{L}(U_\region{A}\otimes O^y_\region{B})\rhop (U_\region{A}^\dagger\otimes  {O^y_\region{B}}^\dagger)\ket{L}}\\&=\frac{P(L,y|\mathcal{O}_{\region{B}})_{U_\region{A}\rhop U^\dagger_{\region{A}}}}{\sum_{y}P(L,y|\mathcal{O}_{\region{B}})_{U_\region{A}\rhop U^\dagger_{\region{A}}}}
         = P(y|L,\mathcal{O}_{\region{B}})_{U_\region{A}\rhop U^\dagger_{\region{A}}},
    \end{split}
    \end{equation}
    which hints at connections between operational causal influence and conditional probabilities linked to a ``virtual measurement'' yielding outcome $L$. In~\Cref{app:cyctoacyc}, we make this connection precise using our mapping from tensor networks to cyclic causal models, and the results of~\cite{Quantum_paper} which showed that for any possibly cyclic causal model, one can define a family of acyclic causal models with post-selection which share the distribution of the original model up to conditioning on the post-selections being successful. Since $M(U_\region{A}:O^y_\region{B}|\mathcal{O}_\region{B})$ equals the observed probability on the outcome $y$ of the cyclic causal model $\cm=\mapTNtoCMgen_D(\tn)$ (see proof of \cref{lem:signalling_CI} in \Cref{app:proofs} for details), this quantity can be further identified with probabilities conditioned on successful post-selection in the family of acyclic casual models associated to $\cm$. 
    Hence, we have: 
    \begin{equation}
   \begin{split}
     P(y|L,\mathcal{O}_{\region{B}})_{U_\region{A}\rhop U^\dagger_{\region{A}}}&=M(U_\region{A}:O^{y}_\region{B}|\mathcal{O}_\region{B}) \\
     &=\prob(y)_I = \probacyc\left(y \middle| \{\postoutcome_i = \ok\}_{\postvertname_i\in\psvertset}\right)_{\graphtele},
\end{split}
\end{equation}
where we denoted with $\graphtele$ the causal graph underlying one of such acyclic causal models and with $\postvertname_i\in\psvertset$ the set of vertices on which we post-select, and with $t_i=\checkmark$ the instance corresponding to successful post-selection. This allows us to better understand the role of the conditioning on this virtual measurement.

Since such property holds for a whole family of acyclic causal models with post-selection, the quantity $M$ has a robust operational interpretation in terms of conditional probabilities across various related acyclic causal models. For more details, we refer to~\Cref{app:cyctoacyc}.
\end{remark}

\begin{remark}
    In the causal modelling literature, one often refers to the connectivity of the graph and causal model when talking about causal influence. The connectivity does not always coincide with signalling relations, because of \textit{fine-tuning}, for instance in the case of causal mechanisms which ``ignore'' some inputs. In this work, whenever we mention ``causal influence'', it is in relation to tensor networks and we have shown that this notion does not correspond to what is meant in literature for ``causal influence'' in a causal model, rather, as we have shown in this chapter, it is related to the notion of signalling in a causal model.
\end{remark}

\subsection{An application: discrete rotations of causal models}
\label{sec:rotCM}

The results of the above section have an interesting consequence: a single tensor network can generate multiple, genuinely distinct causal models by choosing different direction assignments. This phenomenon can be viewed as a ``discrete space-time rotation'' between causal models. For instance, one can consider a unitary associated to a vertex of a causal model $\cm_1$
\begin{equation}
    \sum_{ijkl} U_{ij}^{kl}\;\ket{l}_D\ket{k}_C\bra{j}_B\bra{i}_A=\centertikz{
        \node[chan, minimum height = 25pt, minimum width = 35pt] (U) at (0,0) {$\mathcal{U}$};
        \draw[qleg] (U.125) --++ (0,0.5) node[above] {\indexstyle{C}};
        \draw[qleg] (U.55) --++ (0,0.5) node[above] {\indexstyle{D}};
        \draw[qleg] ($(U.235)-(0,0.5)$)node[below] {\indexstyle{A}} -- (U.235);
        \draw[qleg] ($(U.305)-(0,0.5)$)node[below] {\indexstyle{B}} -- (U.305);
        \begin{scope}[shift={(2,-0.5)}]
            \draw[-stealth] (0.2,0) --  (0.2,1) node[above] {$t$};
            \draw[-stealth] (0,0.2) --(1,0.2)  node[right] {$x$};
        \end{scope}
        
    }.
\end{equation}
Although a priori no background space-time is assumed, the input-output directions of the causal model allow us to infer one. Here, the subsystems $A$ and $B$ can be seen as ``space-like'' separated, and similarly for $C$ and $D$, while the inferred direction of time flows from $A$ and $B$ to $C$ and $D$. Hence, we obtain the space-time directions drawn above.

However, one can first map $\cm_1$ to a tensor network $\tn$, then map the latter to a different causal model $\cm_2$ where, for instance, the edges labelled with $B$ and $C$ have the opposite direction than in $\cm_1$, i.e., the map associated to $\cm_2$ in the new model is
\begin{equation}
    \sum_{ijkl} U_{ij}^{kl}\;\ket{l}_D\bra{k}_C\ket{j}_B\bra{i}_A = \centertikz{
        \node[chan, minimum height = 25pt, minimum width = 35pt] (U) at (0,0) {$\mathcal{Q}$};
        \draw[qleg] ($(U.125)+(-0.25,0.5)$)node[above] {\indexstyle{C}}--(U.125);
        
        \draw[qleg] (U.55) --++ (0.25,0.5) node[above] {\indexstyle{D}};
        \draw[qleg] ($(U.235)-(0.25,0.5)$)node[below] {\indexstyle{A}} -- (U.235);
        \draw[qleg] (U.305) --++(0.25,-0.5)node[below] {\indexstyle{B}};
        
    } \quad \textup{ or } \quad \centertikz{
        \node[chan, minimum height = 25pt, minimum width = 35pt] (U) at (0,0) {$\tilde{\mathcal{Q}}$};
        \draw[qleg] ($(U.125)+(-0.25,0.5)$)node[above] {\indexstyle{C}}--(U.125);
        
        \draw[qleg] (U.55) --++ (0.25,0.5) node[above] {\indexstyle{D}};
        \draw[qleg] ($(U.235)-(0.25,0.5)$)node[below] {\indexstyle{A}} -- (U.235);
        \draw[qleg] (U.305) --++(0.25,-0.5)node[below] {\indexstyle{B}};
        \draw[qleg] (U.340) .. controls ($(U.340)+(0.5,-0.25)$) and ($(U.20)+(0.5,0.25)$) .. (U.20);

\begin{scope}[shift={(2,0.5)},rotate=-90]
            \draw[-stealth] (0.2,0) --  (0.2,1) node[above] {$t$};
            \draw[-stealth] (0,0.2) --(1,0.2)  node[right] {$x$};
        \end{scope}
        
    },
\end{equation}
where the self-loop in the second diagram might be necessary in case the new choice of directions requires the use of the generalised mapping (see~\cref{def: TNtoCMgeneral}). Following the same arguments used to infer a direction of space-time in $\cm_1$, here we find that these are ``rotated'' in $\cm_2$. A special case of space-time ``rotations'' of maps which has already been studied are so-called \textit{dual} unitaries~\cite{Piroli_2020}. These are unitaries $U_{CD|AB} = \sum_{ijkl} U_{ij}^{kl}\;\ket{l}_D\ket{k}_C\bra{j}_B\bra{i}_A$ such that also their ``rotated'' version $\tilde{U}_{BD|AC} = \sum_{ijkl} U_{ij}^{kl}\;\ket{l}_D\bra{k}_C\ket{j}_B\bra{i}_A$ is unitary. Here, we extend the concept of ``rotations'' of dual unitaries to ``rotations'' of arbitrary CPTP maps, potentially at the cost of adding self-cycles. 

Thus, from considering a set of causal models which are ``rotations'' of each other, one can infer an emerging direction of space and time for each model. Further principles, such as acyclicity, or preservation of relativistic causality (which excludes retro-causality), can be evoked to determine a preferred emerging direction of space-time. For instance, in the example above where $\cm_1$ and $\cm_2$ both arise from the same tensor network, in case the self loop is present in $\cm_2$ one would favour the space-time emergent from $\cm_1$, as this preserves relativistic causality.

The following corollary, which follows immediately from~\cref{lem:signalling_CI}, further establishes a connection, at the level of signalling between causal models which arise from the same tensor network, which can then be understood as an equivalence class of ``rotations'' of a causal model.
\begin{corollary}
    Consider a tensor network $\tn$ on a undirected graph $\graphnameud$, and let $\cm_D = \mapTNtoCMgen_D(\tn)$ for $D\in\{0,1\}^{|E|}$. Let $(\region{A},\region{B})$ be an ordered pair of regions in $\graphnameud$, and $(\labarg{A}_D,\labarg{B}_D)$ be the corresponding ordered pair of labs in the directed graph of $\cm_D$. Then, $\cm_D$ present the same time-reversible signalling relations for all $D$, i.e., $\labarg{A}_D$ time-reversibly signals to $\labarg{B}_D$ in $\cm_D$ if and only if $\labarg{A}_{D'}$ time-reversibly signals to $\labarg{B}_{D'}$ in $\cm_{D'}$.
\end{corollary}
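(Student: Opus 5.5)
The plan is to route both causal models through the single tensor network $\tn$ and apply \cref{lem:signalling_CI} twice. Fix $D,D'\in\{0,1\}^{|E|}$ and an ordered pair of regions $(\region{A},\region{B})$ in $\graphnameud$. By construction (\cref{def:graphundirtodir,def: TNtoCMgeneral}), the directed graph of $\cm_D$ is $\graphnamedir_D^{\circlearrowleft}$. Its edge set consists of one oriented copy of each $\edgename\in\edgesetud$, together with self-loops at the vertices of $\vertset^\neg$. Since $\region{A},\region{B}\subseteq\edgesetud$, the labs $\labarg{A}_D,\labarg{B}_D$ are simply the oriented copies of the edges in $\region{A},\region{B}$. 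They carry the same Hilbert spaces $\hilmaparg{\edgename}$, and they never include the ancillary self-loop edges. The first step is to record this: the correspondence between regions and labs is the one required in the hypothesis of \cref{lem:signalling_CI}, and this holds for every $D$.

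Next, apply \cref{lem:signalling_CI} to the pair $(\cm_D,\tn)$. Its hypothesis $\mapTNtoCMgen_D(\tn)=\cm_D$ holds by definition. The theorem then gives that $\labarg{A}_D$ signals to $\labarg{B}_D$ in $\cm_D$ through a time-reversible intervention if and only if the operational causal influence from $\region{A}$ to $\region{B}$ in $\tn$ is non-zero. Applying the same theorem to $(\cm_{D'},\tn)$ gives the analogous equivalence with the same right-hand side, because \cref{def: causal influence} refers only to $\tn$, $\ket{L}$ and $\rhop$, and never to any orientation. Chaining the two equivalences yields the corollary, since each side is equivalent to a $D$-independent statement about $\tn$.

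The only point requiring care is well-posedness of the lab correspondence. Time-reversible interventions act on $\hilmaparg{\labarg{A}}$ by unitaries and on $\hilmaparg{\labarg{B}}$ by Hermitian Kraus operators. One might worry that reversing the orientation of an edge in $\region{A}$ or $\region{B}$ changes which copy of the doubled space the operator acts on. I expect this to be the main (mild) obstacle. I would settle it by invoking the observation of \Cref{sec:open questions ci} and \Cref{app:clarification on causal influence}: the presence or absence of causal influence is independent of which copy is chosen, provided the choice is made consistently. In addition, the set of unitaries is closed under transposition, and Hermitian Kraus operators are mapped to Hermitian Kraus operators under transposition. Hence the class of time-reversible interventions is the same for either orientation, and no further argument is needed beyond the two applications of \cref{lem:signalling_CI}.
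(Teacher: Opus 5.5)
Your route matches the paper's, which simply says the corollary follows immediately from \cref{lem:signalling_CI}, applied to $\cm_D$ and to $\cm_{D'}$ with $\tn$ as the common pivot. The gap is in the step you call mild, and neither your transposition argument nor \Cref{app:clarification on causal influence} closes it. In the proof of \cref{lem:signalling_CI} the operators act on the $D$-outgoing copy $\hilmaparg{(\edgename,\vertname)}$ of each edge. So the right-hand side you treat as orientation-free is in fact tied to a copy pattern fixed by $D$. The link-state identity $X_{\region{A}_1}\ket{L}=X^T_{\region{A}_2}\ket{L}$ moves an operator to the other copy of \emph{all} edges of a region at once. Only in that case do you get the bijections $U\mapsto U^T$ and $\{O^y\}_y\mapsto\{(O^y)^T\}_y$. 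If $D'$ reverses only a nonempty proper subset $S$ of the edges of $\region{A}$, the same identity gives the partial transpose $U^{T_S}$. This is generally not unitary: the partial transpose of $\mathrm{SWAP}$ is $\ketbra{\ubellstate}$. Likewise, for $\region{B}$, the operators $(O^y)^{T_S}$ stay Hermitian but need not satisfy $\sum_y \big((O^y)^{T_S}\big)^2=\id$.

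This is not only a hole in the argument; the statement itself fails in that situation. Take qubit edges $f=\{x,v_1\}$, $e_1=\{v_1,w_1\}$, $e_2=\{v_2,w_2\}$. Set $P_x=P_{w_1}=P_{w_2}=\id/2$, $P_{v_1}=\ketbra{\bellstate}_{fe_1}$, $P_{v_2}=\ketbra{0}$, and take regions $\region{A}=\{e_1,e_2\}$, $\region{B}=\{f\}$.

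For $D$ orienting $x\to v_1\to w_1$ and $v_2\to w_2$, every trace condition holds and $\cm_D$ is acyclic. Lab $\labarg{A}$ sits after $\labarg{B}$ and all its outputs are discarded, so $\prob(y|a,b)_I=\Tr[(O^{y|b})^2]/2$ and there is no signalling.

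For $D'$ reversing only $e_1$, $w_1$ becomes a maximally mixed source and $v_1$ becomes an effect proportional to $\ketbra{\bellstate}_{fe_1}$ (a self-loop vertex). One finds $\prob(y|a,b)_I=\Tr[(O^{y|b})^2\,\omega_a^T]$, with $\omega_a=\Tr_{e_2}[U^a(\frac{\id}{2}\otimes\ketbra{0})(U^a)^\dagger]$. For the instrument $\{\ketbra{0},\ketbra{1}\}$, $U^a=\id$ gives $\prob(0|a,b)_I=1/2$ and $U^a=\mathrm{SWAP}$ gives $1$. So $\labarg{A}_{D'}$ signals to $\labarg{B}_{D'}$ while $\labarg{A}_D$ does not signal to $\labarg{B}_D$.

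Your chain of equivalences is therefore valid only when, for each of $\region{A}$ and $\region{B}$, $D'$ reverses either all or none of its edges relative to $D$ (for example, single-edge regions). In general the copy assignment in $\tn$ has to be fixed, and it then encodes the relative orientation of the edges within each region.
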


Therefore, one can define equivalence classes of causal models --- those arising from the same tensor network --- with identical operational constraints. Indeed, although these causal models have different causal graphs, they nevertheless share exactly the same signalling and non-signalling relations. Notably, the observed non-signalling relations in each model may arise from different sources, for instance from $d$- or $p$-separation~\cite{pearl_2009,Quantum_paper} (see~\Cref{app:graphsep}) or from fine-tuning, depending on the underlying causal structure. This might connect fine-tuned relations to sound graph separation properties like $d$- or $p$-separation in a different model of the equivalence class.

\begin{remark}
    The statements about signalling relations are restricted to the set of time-reversible interventions~(\cref{def:timerevint}). Since these only involve unitaries and Hermitian operators, they remain well-defined independently of the direction of the edge they are defined to act on. Signalling with arbitrary interventions can be defined for a fixed causal model, where such interventions are valid quantum operations, to a tensor network, but it cannot be transferred to a different causal model because inverting the direction of the edges we intervene upon might lead to a non-CPTP map. 

However, one can transfer signalling relations in a causal model, arising from arbitrary maps as interventions on the labs $\labarg{A}$ and $\labarg{B}$, to the same signalling relations in any other model of the equivalence class which has the same directions for the edges $\labarg{A}$ and $\labarg{B}$. Explicitly, given $\cm_D$ and $\cm_{D'}$, obtained through $\mapTNtoCMgen_D$ and $\mapTNtoCMgen_{D'}$ from the same tensor network $\tn$, signalling from $\labarg{A}_D$ to $\labarg{B}_{D}$ with arbitrary interventions implies the same signalling from $\labarg{A}_{D'}$ to $\labarg{B}_{D'}$ if $\labarg{A}_D=\labarg{A}_{D'}$ and $\labarg{B}_D=\labarg{B}_{D'}$, i.e., the sets of edges are the same including their direction. Such causal models are one the ``rotation'' of the other, but can be considered locally equal, sine the direction of edges involved in the interventions are the same. Nonetheless, they share the same signalling structure without restrictions on which maps one inserts in the labs.
\end{remark}

\section{Application to holography}
\label{sec:holography}

In this section, we apply the mapping to Holographic tensor networks, showing that properties of the graph underlying the image causal model straightforwardly imply the absence of causal influence in the tensor network. 

\subsection{The holographic tensor network}
In AdS-CFT~\cite{maldacena-1998,witten-1998}, a duality is conjectured between a $(d+1)$-dimensional quantum gravity theory in Anti de-Sitter (AdS) space, called the \textit{bulk theory}, and a $d$-dimensional conformal field theory (CFT) which lives on a space isomorphic to the conformal boundary of AdS, called the \textit{boundary theory}. This correspondence implies a connection between degrees of freedom in the bulk and the boundary. Toy models for AdS-CFT construct quantum error correction codes where the degrees of freedom in the bulk are non-locally encoded in the boundary theory~\cite{almheiri-2014}. These are known as \textit{holographic tensor networks}~\cite{Pastawski_2015,hayden-2016}. 

\paragraph{Perfect tensors and networks}
The building block of holographic tensor networks are so-called \textit{perfect tensors}.

\begin{definition}[Perfect tensor~\cite{Pastawski_2015}]
\label{def:perfect tensor}
Consider a tensor of rank $2n$ with indices $\mathcal{I}=\{1,\dots, 2n\}$ where the $j$-th index takes values $i_j \in \outcomemaparg{j}$, i.e., 
\begin{equation}
    T  = \{T_{i_1,\dots,i_{2n}}\in\mathbb{C}\}_{i_j\in\outcomemaparg{j}, \; j=1,\dots,2n},
\end{equation} 
and define Hilbert spaces associated to each index $j\in\mathcal{I}$ as $\hilmaparg{j}= \textup{span}\{\ket{i_j}_j\}_{i_j\in\outcomemaparg{j}}$. 

The tensor $T$ is a \textup{perfect tensor} if any bipartition of indices into sets $S$ and $\bar{S}$ such that $|S|\leq|\bar{S}|$ defines an isometry from $\hilmaparg{S}\equiv\bigotimes_{j\in S} \hilmaparg{j}$ to $\hilmaparg{\bar{S}}\equiv\bigotimes_{k\in \bar{S}} \hilmaparg{k}$ up to a constant. Explicitly, for all $S,\bar{S}\subseteq\mathcal{I}$, such that $S\cup\bar{S}=\mathcal{I}$, $S\cap \bar{S}=\emptyset$ and $|S|\leq|\bar{S}|$, the linear operator 
\begin{equation}
    V= C\sum_{i_1,\dots, i_{2n}}T_{i_1,\dots,i_{2n}} \bigotimes_{j\in \bar{S}} \ket{i_{j}}_{j}\bigotimes_{k\in S} \bra{i_k}_{k} \equiv C\sum_{i_S,i_{\bar{S}}} T_{i_S,i_{\bar{S}}} \ket{i_{\bar{S}}}_{\bar{S}}\bra{i_S}_{S}
\end{equation}
    satisfies $V^\dagger V = \id_{S}$, for some constant $C$. Notice that we have introduced a compact notation, where $i_S \equiv \{i_j\}_{j\in S}$ and $\ket{\psi}_S\in \hilmaparg{S}$ and similarly for $\bar{S}$.
\end{definition}
For example, consider $n=3$, $\mathcal{I}=\{1,2,3,4,5,6\}$ and the tensor $T_{i_1,\dots,i_6}$, represented as a vector~(see~\cref{rem:densityop})
\begin{equation}\label{eq: perfect t}
    \centertikz{
        \def\lato{0.6cm}
            \foreach \ang in {0,60,...,300}
                \draw[mybaseline] (0,0) -- ++(\ang+90:\lato*1.3);
        \node[mybaseline, regular polygon, draw, regular polygon sides = 6, minimum size = 1.cm, fill=yellow] (p) at (0,0) {};
    }\equiv \ketbra{\Psi}_{\mathcal{I}}, \textup{ where }
    \ket{\Psi}_{\mathcal{I}} =  \sum_{i_1,\dots,i_6} T_{i_1,\dots,i_6} \ket{i_1}_{1}\otimes\dots\otimes\ket{i_6}_{6}.
\end{equation}

 Let us consider the bipartition $S = \{5, 6\}$ and $\bar{S} = \{1, 2, 3, 4\}$, the perfect tensor property implies that the linear map
\begin{equation}
    V = C\sum_{i_1,\dots,i_6} T_{i_1,\dots,i_6} \ket{i_1}_{1}\otimes\ket{i_2}_{2}\otimes\ket{i_3}_{3}\otimes\ket{i_4}_{4}\otimes\bra{i_5}_{5}\otimes\bra{i_6}_{6}
\end{equation}
satisfies $V^\dagger V = \id_{S}$ for some constant $C$. Therefore, it is an isometry and defines an isometric channel $\mathcal{V}_{\bar{S}|S}(\rho_S) = V\rho_S V^\dagger$.

\begin{restatable}[Isometric channel of a perfect tensor]{lemma}{perfecttnch}
\label{lem:isomchan perfect tn}
Given a perfect tensor $T$ of rank $2n$, to which we associate the state
\begin{equation}
    \ket{\Psi}_{\mathcal{I}} = \sum_{i_1,\dots,i_{2n}} T_{i_1,\dots,i_{2n}} \ket{i_1}_{1}\otimes\dots\otimes\ket{i_{2n}}_{2n},
\end{equation}
and a bipartition of its indices into $S$ and $\bar{S}$ such that $|S|\leq|\bar{S}|$, the corresponding isometric channel, up to a constant $C$, is given by
\begin{equation}
    C\CJin_{\bar{S}|S}(\ketbra{\Psi}_{S\bar{S}}) (\cdot) = V(\cdot)V^\dagger.
\end{equation}
\end{restatable}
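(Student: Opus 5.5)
The plan is to apply \cref{lemma:inv_choi} to the rank-one operator $\ketbra{\Psi}_{S\bar{S}}$ and then identify the pure inverse Choi map $\CJp_{\bar{S}|S}(\ket{\Psi}_{S\bar{S}})$ with $V$ up to a scalar. Here the copy of $\hilmaparg{S}$ carried by the state plays the role of the primed space $\hilmaparg{S'}$ in the definition of $\CJin_{\bar{S}|S}$.

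First, write $\ket{\Psi} = \|\Psi\|\,\ket{\hat\Psi}$ with $\ket{\hat\Psi}$ normalised. \Cref{lemma:inv_choi}, applied with a single term $p_1=\|\Psi\|^2$, gives
\begin{equation}
\CJin_{\bar{S}|S}(\ketbra{\Psi}_{S\bar{S}})(\rho_S) = d_S^2\,\CJp_{\bar{S}|S}(\ket{\Psi})\,\rho_S\,\CJp_{\bar{S}|S}(\ket{\Psi})^\dagger .
\end{equation}
Here I use linearity of $\CJp$ to absorb the norm, and I read the tensor-product notation in that lemma as the conjugation superoperator $K(\cdot)K^\dagger$.

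Second, I compute $\CJp_{\bar{S}|S}(\ket{\Psi}) = \bra{\bellstate}_{SS'}\ket{\Psi}_{S'\bar{S}}$ explicitly in the computational basis. Contracting $\bra{\bellstate} = d_S^{-1/2}\sum_{i_S}\bra{i_S}_S\bra{i_S}_{S'}$ against $\ket{\psi}_S\otimes\ket{\Psi}_{S'\bar S}$ gives
\begin{equation}
\ket{\psi}\mapsto d_S^{-1/2}\sum_{i_S,i_{\bar S}} T_{i_S,i_{\bar S}}\,\braket{i_S}{\psi}\,\ket{i_{\bar S}} .
\end{equation}
Hence $\CJp_{\bar{S}|S}(\ket{\Psi}) = d_S^{-1/2}\sum T_{i_S,i_{\bar S}}\ket{i_{\bar S}}\bra{i_S} = (d_S^{1/2}C)^{-1}V$, by comparison with \cref{def:perfect tensor}. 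No complex conjugation arises because the maximally entangled state has real coefficients in this basis.

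Third, substituting this into the formula from the first step gives
\begin{equation}
\CJin_{\bar{S}|S}(\ketbra{\Psi})(\cdot) = \frac{d_S}{|C|^2}\,V(\cdot)V^\dagger .
\end{equation}
So the claim holds with the constant $|C|^2/d_S$; the statement's unnamed constant is this one rather than the $C$ of \cref{def:perfect tensor}. Because $V^\dagger V=\id_S$ by the perfect tensor property, the right-hand side is indeed an isometric channel.

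I expect the only delicate point to be bookkeeping of conventions. Specifically, one must check that the factor $d_A^2$ and the normalisation in \cref{lemma:inv_choi} combine consistently with the $d_S^{-1/2}$ from $\ket{\bellstate}$. One must also check that the transpose map $\mathcal{T}_{A'|A}$ hidden in $\CJin$ does not introduce a conjugate or transpose of $T$. As a safeguard, I would directly evaluate $d_S\Tr_{S'}\bigl(\mathcal{T}_{S'|S}(\ketbraa{k}{l})\ketbra{\Psi}\bigr)$ on basis operators. This should give $d_S\sum T_{k,i_{\bar S}}\overline{T_{l,j_{\bar S}}}\ketbraa{i_{\bar S}}{j_{\bar S}}$, which is proportional to $V\ketbraa{k}{l}V^\dagger$ and confirms the result independently.
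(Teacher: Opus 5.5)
Your proof is correct and essentially matches the paper's argument. The paper evaluates $\CJin_{\bar{S}|S}$ directly from its defining trace formula (exactly your proposed safeguard computation) instead of invoking \cref{lemma:inv_choi}, but both routes reduce to identifying the partial contraction $\sum_{i_S,i_{\bar S}}T_{i_S,i_{\bar S}}\ket{i_{\bar S}}\bra{i_S}$ with $V/C$, giving the constant $|C|^2/d_S$ (your $|C|^2$ is slightly more careful than the paper's $C^2$, which tacitly takes $C$ real).
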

\begin{proof}
    See~\Cref{app:proofs}.
\end{proof}

Given the perfect tensor which we considered before with $n=3$, we can use~\cref{def:perfect tensor} to define isometric channels for any bipartition with the correct cardinality, for instance:
\begin{equation}
\begin{split}
     \bar{S} = 1,2,3,4,5,\; S =6;\; & V_{|S|=1} = C_1\sum_{i_1,\dots, i_6} T_{i_1,\dots, i_6} \ket{i_1}_1\otimes \ket{i_2}_3\otimes\ket{i_3}_3\otimes\ket{i_4}_4\otimes \ket{i_5}_5\otimes\bra{i_6}_6\\
    \bar{S} = 1,2,3,4,\; S =5,6;\; & V_{|S|=2} = C_2\sum_{i_1,\dots, i_6} T_{i_1,\dots, i_6} \ket{i_1}_1\otimes \ket{i_2}_3\otimes\ket{i_3}_3\otimes\ket{i_4}_4\otimes \bra{i_5}_5\otimes\bra{i_6}_6\\
    \bar{S} = 1,2,3,\; S =4,5,6;\; & V_{|S|=3} = C_3\sum_{i_1,\dots, i_6} T_{i_1,\dots, i_6} \ket{i_1}_1\otimes \ket{i_2}_3\otimes\ket{i_3}_3\otimes\bra{i_4}_4\otimes \bra{i_5}_5\otimes\bra{i_6}_6
\end{split}   
\end{equation}
and so on. 

In what follows, we will consider perfect tensor networks, which are obtained from contracting perfect tensors. For instance, 
\begin{equation}\label{eq: Hol state}
    H = \centertikz{
       \foreach \ang in {0,60, ...,300}{
            \draw[mybaseline] (\ang+90:0.6cm*3) -- (0,0);
            \draw[mybaseline] (\ang+90:0.6cm*3) to [in = \ang+210, out = \ang+330] (\ang+60:0.6cm*3.5);
            \draw[mybaseline] (\ang+60+60:0.6cm*3.5) to [in = \ang+210, out = \ang+330]  (\ang+90:0.6cm*3); 
            \foreach \an in {0,60,120}{
                    \draw[mybaseline] (\ang+90:0.6cm*3) -- ++(\an+30+\ang:0.5);
                }
            \foreach \an in {0,60,120,180}{
                    \draw[mybaseline] (\ang+60:0.6cm*3.5) -- ++(\ang+\an-30:0.4);
                }
           } 
        \foreach \ang in {0,60, ...,300}{
           \node[mybaseline, regular polygon, draw, regular polygon sides = 6, minimum size = 0.8cm, fill=yellow!40, rotate = \ang] (p1) at (\ang+90:0.6cm*3) {};
            \node[mybaseline, regular polygon, draw, regular polygon sides = 6, minimum size = 0.6cm, fill=yellow!40, rotate = \ang] (p2) at (\ang+60:0.6cm*3.5) {};
         }   
        \node[mybaseline, regular polygon, draw, regular polygon sides = 6, minimum size = 1.2cm, fill=yellow!40] (p) at (0,0) {};
        \node[mybaseline, regular polygon, draw, regular polygon sides = 6, minimum size = 1.2cm, fill=yellow!40] (p) at (0,0) {};
        \node (i) at ($(0,0)!0.5!(30:0.6cm*3)$) {};
        \node (i) at ($(0,0)!0.5!(150:0.6cm*3)$){};
    } \equiv \ketbra{\Xi}
\end{equation}
is obtained through contracting perfect tensors with $2n=6$ indices, like the one in~\cref{eq: perfect t}. Notice that the above tensor network is not fully contracted, as the tensors on the boundary have open legs representing uncontracted indices. Therefore, the network is itself associated with a tensor with as many indices as the open edges. In the above example, we would have  $H=\{H_{i_1,\dots,i_{24},j_1,\dots,j_{18}}\}$, where the indices $i_l$ are associated to the open legs of the third layer of perfect tensors (four for each tensor) and the indices $j_m$ to the open legs of the second layer of perfect tensors (three for each tensor).
Each perfect tensor is associated with a pure state $\ketbra{\Psi}$, thus the whole network itself is also associated with a pure state on a larger Hilbert space, which we denote as $\ketbra{\Xi}$. We often call the \textit{bulk} of the network the contracted edges, and the \textit{boundary} the uncontracted edges in~\cref{eq: Hol state}.

In what follows, we are interested in studying operational causal influence of the network. For that purpose, one has to consider the fully contracted network that is obtained through contracting the network $H$ with its complex conjugate. 
In this case, one can prove that the network contraction equals
\begin{equation}
    \bra{L}\rhop\ket{L}=|\braket{\Xi}|^2.
\end{equation}
In what follows, we only draw diagrams representing the uncontracted network, but study properties of the fully contracted network. 
\paragraph{Causal influence in the Holographic tensor network}
Studying causal influence in the Holographic tensor network corresponds to choosing two regions as subsets of contracted edges of the perfect network and then considering the relevant correlation functions. We denote the state corresponding to a choice of unitary and hermitian instrument, $U_\region{A}$ and $O_{\region{B}}$, as:
\begin{equation}\label{eq: with ops}
    \ketbra{\Xi(U_\region{A}, O_\region{B})} \equiv \centertikz{
       \foreach \ang in {0,60, ...,300}{
            \draw[mybaseline] (\ang+90:0.6cm*3) -- (0,0);
            \draw[mybaseline] (\ang+90:0.6cm*3) to [in = \ang+210, out = \ang+330] (\ang+60:0.6cm*3.5);
            \draw[mybaseline] (\ang+60+60:0.6cm*3.5) to [in = \ang+210, out = \ang+330]  (\ang+90:0.6cm*3); 
            \foreach \an in {0,60,120}{
                    \draw[mybaseline] (\ang+90:0.6cm*3) -- ++(\an+30+\ang:0.5);
                }
            \foreach \an in {0,60,120,180}{
                    \draw[mybaseline] (\ang+60:0.6cm*3.5) -- ++(\ang+\an-30:0.4);
                }
           } 
        \foreach \ang in {0,60, ...,300}{
           \node[mybaseline, regular polygon, draw, regular polygon sides = 6, minimum size = 0.8cm, fill=yellow!40, rotate = \ang] (p1) at (\ang+90:0.6cm*3) {};
            \node[mybaseline, regular polygon, draw, regular polygon sides = 6, minimum size = 0.6cm, fill=yellow!40, rotate = \ang] (p2) at (\ang+60:0.6cm*3.5) {};
         }   
        \node[mybaseline, regular polygon, draw, regular polygon sides = 6, minimum size = 1.2cm, fill=yellow!40] (p) at (0,0) {};
        \node[tnode, fill = red] (U) at ($(0,0)!0.5!(30:0.6cm*3)$) {};
        \node [above = 0 of U] {\indexstyle{U_\region{A}}};
        \node[tnrect, fill = blue,minimum size = 5pt] (O1) at ($(0,0)!0.5!(150:0.6cm*3)$) {};
        \node [above = 0 of O1] {\indexstyle{O_\region{B}}};
    }.
\end{equation}
The un-normalised state $\ket{\Xi(U_\region{A}, O_\region{B})}$ satisfies:
\begin{equation}
    \bra{L}(U_\region{A}\otimes O_\region{B})\rhop(U^\dagger_\region{A}\otimes O^\dagger_\region{B})\ket{L}=|\braket{\Xi(U_\region{A}, O_\region{B})}|^2.
\end{equation}

If we are able to express the unitary acting in the bulk (a subset of contracted edges) of the network as a, possibly different, unitary acting at the boundary (a subset of uncontracted edges), i.e.,
\begin{equation}
    \ket{\Xi(U_\region{A}, O_\region{B})} = \tilde{U}_{\text{boundary}}\ket{\Xi(\id_\region{A}, O_\region{B})}\equiv \tilde{U}\ket{\tilde{\Xi}(O_\region{B})},
\end{equation}
the causal influence from $\region{A}$ to $\region{B}$ is trivially zero 

Indeed, if this is the case
\begin{equation}
    \begin{split}
        M(U_\region{A}:O_\region{B}|\mathcal{O}_{\region{B}}) &= \bra{L}(U_\region{A}\otimes O_\region{B})\rhop(U^\dagger_\region{A}\otimes O^\dagger_\region{B})\ket{L} \\
        &= |\braket{\Xi(U_\region{A}, O_\region{B})}|^2\\ &=|\bra{\tilde{\Xi}(O_\region{B})}\tilde{U}^\dagger\tilde{U}\ket{\tilde{\Xi}(O_\region{B})}|^2=
        |\braket{\tilde{\Xi}(O_\region{B})}|^2
    \end{split}
\end{equation}
for all unitaries $U_\region{A}$ and Hermitian instruments $\mathcal{O}_{\region{B}}$. 

In the context of tensor networks, this is achieved by ``pushing'' the unitary to the boundary.
Indeed, isometries $V$ arising from perfect tensors have the property that any operator $O$ acting on an appropriate subset of its edges can be replaced by an equal norm operator $O'$ acting on the remaining edges, because
\begin{equation}
    VO= VOV^\dagger V = (VOV^\dagger)V \equiv O'V.
\end{equation}
Thus, in the Holographic network, operators in the bulk can be pushed to the boundary. In~\cite{Pastawski_2015}, it is shown how one can push the operator using only the circuit contained in a so-called \textit{minimal geodesics}, which can be understood as a line from two points in the boundary bounding a region containing the operator we wish to push, and cutting the minimum amount of edges. For instance, in the tensor network on the left there is one bulk operator, denoted with the red circle. The corresponding geodesics is drawn with the dashed line. Then, according to~\cite{Pastawski_2015}, the operator can be pushed to the boundary as in the right network:
\begin{equation}
   \centertikz{
    \begin{scope}
         \foreach \ang in {0,60, ...,300}{
            \draw[mybaseline] (\ang+90:0.6cm*3) -- (0,0);
            \draw[mybaseline] (\ang+90:0.6cm*3) to [in = \ang+210, out = \ang+330] (\ang+60:0.6cm*3.5);
            \draw[mybaseline] (\ang+60+60:0.6cm*3.5) to [in = \ang+210, out = \ang+330]  (\ang+90:0.6cm*3); 
            \foreach \an in {0,60,120}{
                    \draw[mybaseline] (\ang+90:0.6cm*3) -- ++(\an+30+\ang:0.5);
                }
            \foreach \an in {0,60,120,180}{
                    \draw[mybaseline] (\ang+60:0.6cm*3.5) -- ++(\ang+\an-30:0.4);
                }
           } 
        \foreach \ang in {0,60, ...,300}{
           \node[mybaseline, regular polygon, draw, regular polygon sides = 6, minimum size = 0.8cm, fill=yellow!40, rotate = \ang] (p1) at (\ang+90:0.6cm*3) {};
            \node[mybaseline, regular polygon, draw, regular polygon sides = 6, minimum size = 0.6cm, fill=yellow!40, rotate = \ang] (p2) at (\ang+60:0.6cm*3.5) {};
         }   
        \node[mybaseline, regular polygon, draw, regular polygon sides = 6, minimum size = 1.2cm, fill=yellow!40] (p) at (0,0) {};
        \node[tnode, fill = red] (U) at ($(0,0)!0.5!(30:0.6cm*4)$) {};
        \draw[mybaseline,dashed,draw=blue] (75:2.3cm) to [in = 75+90, out = 345-90]  (345:2.3cm);
    \end{scope}
    }\to \centertikz{
    \begin{scope}
         \foreach \ang in {0,60, ...,300}{
            \draw[mybaseline] (\ang+90:0.6cm*3) -- (0,0);
            \draw[mybaseline] (\ang+90:0.6cm*3) to [in = \ang+210, out = \ang+330] (\ang+60:0.6cm*3.5);
            \draw[mybaseline] (\ang+60+60:0.6cm*3.5) to [in = \ang+210, out = \ang+330]  (\ang+90:0.6cm*3); 
            \foreach \an in {0,60,120}{
                    \draw[mybaseline] (\ang+90:0.6cm*3) -- ++(\an+30+\ang:0.5);
                }
            \foreach \an in {0,60,120,180}{
                    \draw[mybaseline] (\ang+60:0.6cm*3.5) -- ++(\ang+\an-30:0.4);
                }
           } 
        \foreach \ang in {0,60, ...,300}{
           \node[mybaseline, regular polygon, draw, regular polygon sides = 6, minimum size = 0.8cm, fill=yellow!40, rotate = \ang] (p1) at (\ang+90:0.6cm*3) {};
            \node[mybaseline, regular polygon, draw, regular polygon sides = 6, minimum size = 0.6cm, fill=yellow!40, rotate = \ang] (p2) at (\ang+60:0.6cm*3.5) {};
         }   
        \node[mybaseline, regular polygon, draw, regular polygon sides = 6, minimum size = 1.2cm, fill=yellow!40] (p) at (0,0) {};
        \foreach \an in {0,60,120}{
                    \node[mybaseline,inner sep=1.5pt, ellipse, draw=black, fill = red] (U) at ($(300+90:0.6cm*3)+(\an+30+300:0.55)$) {};
                }

        \foreach \an in {0,60,120,180}{
                }
            \draw[mybaseline,dashed,draw=blue] (75:2.3cm) to [in = 75+90, out = 345-90]  (345:2.3cm);    
    \end{scope} 
    }.
\end{equation}

However, in some instances the procedure to ``push'' the networks cannot be performed, thus leaving the question on whether there is causal influence unsolved. Consider the following example (reproduced from Figure 18 of~\cite{Cotler_2019}) where we denote with a red circle the region $\region{A}$ and with a blue square the region $\region{B}$:
\begin{equation}
    \ketbra{\Xi(U_\region{A}, O_\region{B})} = \centertikz{
       \foreach \ang in {0,60, ...,300}{
            \draw[mybaseline] (\ang+90:0.6cm*3) -- (0,0);
            \draw[mybaseline] (\ang+90:0.6cm*3) to [in = \ang+210, out = \ang+330] (\ang+60:0.6cm*3.5);
            \draw[mybaseline] (\ang+60+60:0.6cm*3.5) to [in = \ang+210, out = \ang+330]  (\ang+90:0.6cm*3); 
            \foreach \an in {0,60,120}{
                    \draw[mybaseline] (\ang+90:0.6cm*3) -- ++(\an+30+\ang:0.5);
                }
            \foreach \an in {0,60,120,180}{
                    \draw[mybaseline] (\ang+60:0.6cm*3.5) -- ++(\ang+\an-30:0.4);
                }
           } 
        \foreach \ang/\b in {0/1,60/2, 120/3, 180/4, 240/5, 300/6}{
           \node[mybaseline, regular polygon, draw, regular polygon sides = 6, minimum size = 0.8cm, fill=yellow!40, rotate = \ang] (p1) at (-\ang+90:0.6cm*3) {};
           \node at (-\ang+90:0.6cm*3) {};
            \node[mybaseline, regular polygon, draw, regular polygon sides = 6, minimum size = 0.6cm, fill=yellow!40, rotate = \ang] (p2) at (-\ang+60:0.6cm*3.5) {};
            \node at (-\ang+60:0.6cm*3.5) {};
         }   
         \begin{scope}[xscale=-1, yscale=-1]
            \draw[mybaseline,dashed,draw=blue] (75:2.3cm) to [in = 75+90, out = 345-90]  (345:2.3cm);
        \end{scope}
        \node[mybaseline, regular polygon, draw, regular polygon sides = 6, minimum size = 1.2cm, fill=yellow!40] (p) at (0,0) {};
        \node[tnode, fill = red] (U) at ($(0,0)!0.5!(-150:0.6cm*3)$) {};
        \node[tnrects, fill = blue,minimum size = 3pt] (O1) at ($(0,0)!0.5!(-30:0.6cm*3)$) {};
        \node[tnrects, fill = blue,minimum size = 3pt] (O3) at ($(0,0)!0.5!(30:0.6cm*3)$) {};
        \node[tnrects, fill = blue,minimum size = 3pt] (O4) at ($(0,0)!0.5!(270:0.6cm*3)$) {};
        \node[tnrects, fill = blue,minimum size = 3pt] (O5) at (230:0.6cm*2.9) {};
        \node[tnrects, fill = blue,minimum size = 3pt] (U) at ($(120+90:0.6cm*3)+(120+30+120:0.55)$) {};
        \node[tnrects, fill = blue,minimum size = 3pt] (U) at ($(120+90:0.6cm*3)+(60+30+120:0.55)$) {};

    }.
\end{equation}
Here, one cannot push the unitary operator, represented as a circle, through the boundary. Indeed, the presence of the tensors in $\region{B}$, represented as blue squares, inside the geodesics prevents us from using the perfect tensor property to push the $\region{A}$ operator to the boundary. Thus, from the geodesic criterion alone, one cannot conclude whether $\region{A}$ influences $\region{B}$ in this case.

The goal of the next sections will be to map the tensor network to a causal model, analyse signalling and use it to infer causal influence in the tensor network.
Since each hexagon is a perfect tensor, we can freely choose the input-output systems to apply~\cref{def: TNtoCMgeneral} and obtain a causal model, the only requirement being that our choice is consistent for the whole network and that the number of output systems is smaller than the number of input systems. However, the latter requirement, which exploits the perfect tensor property, is not necessary. In general, one can apply the generalised mapping of~\cref{def: TNtoCMgeneral} and still obtain a valid cyclic causal model when the input and output choice does not define an isometric channel but generally a CP.

This freedom allows us to make choices that simplify studying signalling, for instance with graph-separation theorems, and to consider models which do not respect the perfect tensor property, at the expense of possibly adding cycles.

\subsection{Mapping to causal models and application of graph-separation theorems}
\label{sec:graphsepHolo}
In the previous sections, we constructed a mapping from tensor networks to causal models which preserves causal influence as time-reversible signalling in the causal model. Here, we use the mapping to apply graph-separation results, proven for causal models, to argue for the absence of causal influence from one region to another.  Let us first provide an intuition behind graph separation results (for more details we refer to~\Cref{app:graphsep}).

\paragraph{Graph-separation theorems}
Graph separation properties, such as $d$-separation~\cite{pearl_2009}, are purely defined at the level of a graph, irrespective of causal models living on it. The $d$-separation property, for instance, says that two sets of vertices $V_1$ and $V_2$ are either $d$-separated or $d$-connected conditioned on a third set of vertices, $V_3$, and we denote it respectively as $(V_1\perp^dV_2|V_3)$ or $(V_1\not\perp^dV_2|V_3)$. 

For example, in the graph, $A\rightarrow C \leftarrow B$, called a \emph{collider graph}, $A$ and $B$ are $d$-separated but they become $d$-connected conditioned on $C$, hence $A\perp^dB$ and $A\not\perp^dB|C$. 
In acyclic graphs, the $d$-separation theorem states that: {\bf (Soundness)} if $(V_1\perp^dV_2|V_3)$, then the probability distribution over the outcomes of these vertices also presents the same conditional independence in \emph{any} causal model on the graph, and {\bf (Completeness)} if $(V_1\not\perp^dV_2|V_3)$, then \emph{there exists} a causal model on the graph whose distribution exhibits the corresponding conditional dependence. 

For cyclic graphs, this theorem typically fails~\cite{pearl_2009,Neal_2000}, but a different graph-separation property, $p$-separation, has been proven to be sound and complete for all finite-dimensional cyclic causal models~\cite{Quantum_paper}.
Thus, both for cyclic and acyclic causal models one can apply graph separation theorems to conclude conditional independencies, which are fundamental to determine no-signalling relation, solely from the structure of the graph.

\paragraph{Causal model of a Holographic tensor network}
We now use our mapping from tensor networks to causal models to immediately conclude that there is no causal influence from $\region{A}$ to $\region{B}$ in the following example, where we label the tensors for clarity:
\begin{equation}\label{eq: ex Hstate labels}
    \ketbra{\Xi(U_\region{A}, O_\region{B})} = \centertikz{
       \foreach \ang in {0,60, ...,300}{
            \draw[mybaseline] (\ang+90:0.6cm*3) -- (0,0);
            \draw[mybaseline] (\ang+90:0.6cm*3) to [in = \ang+210, out = \ang+330] (\ang+60:0.6cm*3.5);
            \draw[mybaseline] (\ang+60+60:0.6cm*3.5) to [in = \ang+210, out = \ang+330]  (\ang+90:0.6cm*3); 
            \foreach \an in {0,60,120}{
                    \draw[mybaseline] (\ang+90:0.6cm*3) -- ++(\an+30+\ang:0.5);
                }
            \foreach \an in {0,60,120,180}{
                    \draw[mybaseline] (\ang+60:0.6cm*3.5) -- ++(\ang+\an-30:0.4);
                }
           } 
        \foreach \ang/\b in {0/1,60/2, 120/3, 180/4, 240/5, 300/6}{
           \node[mybaseline, regular polygon, draw, regular polygon sides = 6, minimum size = 0.8cm, fill=yellow!40, rotate = \ang] (p1) at (-\ang+90:0.6cm*3) {};
           \node at (-\ang+90:0.6cm*3) {\indexstyle{b.\b}};
            \node[mybaseline, regular polygon, draw, regular polygon sides = 6, minimum size = 0.6cm, fill=yellow!40, rotate = \ang] (p2) at (-\ang+60:0.6cm*3.5) {};
            \node at (-\ang+60:0.6cm*3.5) {\indexstyle{c.\b}};
         }   
        \node[mybaseline, regular polygon, draw, regular polygon sides = 6, minimum size = 1.2cm, fill=yellow!40] (p) at (0,0) {\indexstyle{a}};
        \node[tnode, fill = red] (U) at ($(0,0)!0.5!(30:0.6cm*3)$) {};
        \node[tnrects, fill = blue,minimum size = 3pt] (O1) at ($(0,0)!0.5!(90:0.6cm*3)$) {};
        \node[tnrects, fill = blue,minimum size =3pt] (O2) at ($(0,0)!0.5!(150:0.6cm*3)$) {};
        \node[tnrects, fill = blue,minimum size = 3pt] (O3) at ($(0,0)!0.5!(210:0.6cm*3)$) {};
        \node[tnrects, fill = blue,minimum size = 3pt] (O5) at (50:0.6cm*2.9) {};
        \node[tnrects, fill = blue,minimum size = 3pt] (O6) at (10:0.6cm*2.9) {};
        \node[tnrects, fill = blue,minimum size = 3pt] (O4) at ($(0,0)!0.5!(270:0.6cm*3)$) {};
    }.
\end{equation}

Let us make the following choice of input and output for each tensor:

\begin{equation}\centertikz{
        \node[mybaseline, regular polygon, draw, regular polygon sides = 6, minimum size = 1.2cm, fill=yellow!40] (a) at (0,0) {\indexstyle{a}};
        \node[mybaseline, regular polygon, draw, regular polygon sides = 6, minimum size = 0.8cm, fill=yellow!40, rotate = 0] (b1) at (-0+90:0.6cm*3) {};
        \node at (-0+90:0.6cm*3) {\indexstyle{b.1}};
        \node[mybaseline, regular polygon, draw, regular polygon sides = 6, minimum size = 0.8cm, fill=yellow!40, rotate = 60] (b2) at (-60+90:0.6cm*3) {};
        \node at (-60+90:0.6cm*3) {\indexstyle{b.2}};
        \node[mybaseline, regular polygon, draw, regular polygon sides = 6, minimum size = 0.8cm, fill=yellow!40, rotate = 120] (b3) at (-120+90:0.6cm*3) {};
        \node at (-120+90:0.6cm*3) {\indexstyle{b.3}};
        \node[mybaseline, regular polygon, draw, regular polygon sides = 6, minimum size = 0.8cm, fill=yellow!40, rotate = 180] (b4) at (-180+90:0.6cm*3) {};
        \node at (-180+90:0.6cm*3) {\indexstyle{b.4}};
        \node[mybaseline, regular polygon, draw, regular polygon sides = 6, minimum size = 0.8cm, fill=yellow!40, rotate = 240] (b5) at (-240+90:0.6cm*3) {};
        \node at (-240+90:0.6cm*3) {\indexstyle{b.5}};
        \node[mybaseline, regular polygon, draw, regular polygon sides = 6, minimum size = 0.8cm, fill=yellow!40, rotate = 300] (b6) at (-300+90:0.6cm*3) {};
        \node at (-300+90:0.6cm*3) {\indexstyle{b.6}};
        \node[mybaseline, regular polygon, draw, regular polygon sides = 6, minimum size = 0.6cm, fill=yellow!40, rotate = 0] (c1) at (-0+60:0.6cm*3.5) {};
        \node at (-0+60:0.6cm*3.5) {\indexstyle{c.1}};
        \node[mybaseline, regular polygon, draw, regular polygon sides = 6, minimum size = 0.6cm, fill=yellow!40, rotate = 60] (c2) at (-60+60:0.6cm*3.5) {};
        \node at (-60+60:0.6cm*3.5) {\indexstyle{c.2}};
        \node[mybaseline, regular polygon, draw, regular polygon sides = 6, minimum size = 0.6cm, fill=yellow!40, rotate = 120] (c3) at (-120+60:0.6cm*3.5) {};
        \node at (-120+60:0.6cm*3.5) {\indexstyle{c.3}};
        \node[mybaseline, regular polygon, draw, regular polygon sides = 6, minimum size = 0.6cm, fill=yellow!40, rotate = 180] (c4) at (-180+60:0.6cm*3.5) {};
        \node at (-180+60:0.6cm*3.5) {\indexstyle{c.4}};
        \node[mybaseline, regular polygon, draw, regular polygon sides = 6, minimum size = 0.6cm, fill=yellow!40, rotate = 240] (c5) at (-240+60:0.6cm*3.5) {};
        \node at (-240+60:0.6cm*3.5) {\indexstyle{c.5}};
        \node[mybaseline, regular polygon, draw, regular polygon sides = 6, minimum size = 0.6cm, fill=yellow!40, rotate = 300] (c6) at (-300+60:0.6cm*3.5) {};
        \node at (-300+60:0.6cm*3.5) {\indexstyle{c.6}};
        \draw[mybaseline,->] (a)--(b1);   
        \draw[mybaseline,->] (a)--(b2);   
        \draw[mybaseline,->] (a)--(b3);   
        \draw[mybaseline,->] (a)--(b4);   
        \draw[mybaseline,->] (a)--(b5);   
        \draw[mybaseline,->] (a)--(b6); 
        \draw[mybaseline,->] (b1.330)to[in=210,out=330](c1.210);   
        \draw[mybaseline,->] (b1.210)to[in=330,out=210](c6.30);   
        \draw[mybaseline,->] (b1.30)--++ (30:0.2);   
        \draw[mybaseline,->] (b1.90)--++(90:0.2);   
        \draw[mybaseline,->] (b1.150)--++(150:0.2); 
        \draw[mybaseline,->] (c1.270)to[in=90+60,out=270](b2.90);   
        \draw[mybaseline,->] (c2.90)to[in=210+60,out=90+60](b2.210);   
        \draw[mybaseline,->] (b2.30)--++ (30+60:0.2);   
        \draw[mybaseline,->] (b2.270)--++(270+60:0.2);   
        \draw[mybaseline,->] (b2.330)--++(330+60:0.2);
        \draw[mybaseline,->] (b3.330)to[in=150+60,out=330+120](c2.150);   
        \draw[mybaseline,->] (b3.90)to[in=330+120,out=90+120](c3.330);   
        \draw[mybaseline,->] (b3.270)--++ (270+120:0.2);   
        \draw[mybaseline,->] (b3.210)--++(210+120:0.2);   
        \draw[mybaseline,->] (b3.150)--++(150+120:0.2);
        \draw[mybaseline,->] (b4.210)to[in=30+120,out=210+180](c3.30);   
        \draw[mybaseline,->] (b4.330)to[in=210+180,out=330+180](c4.210);   
        \draw[mybaseline,->] (b4.30)--++ (30+180:0.2);   
        \draw[mybaseline,->] (b4.90)--++(90+180:0.2);   
        \draw[mybaseline,->] (b4.150)--++(150+180:0.2);
        \draw[mybaseline,->] (b5.210)to[in=90+240,out=210+240](c5.90);   
        \draw[mybaseline,->] (b5.90)to[in=270+180,out=90+240](c4.270);   
        \draw[mybaseline,->] (b5.30)--++ (30+240:0.2);   
        \draw[mybaseline,->] (b5.330)--++(330+240:0.2);   
        \draw[mybaseline,->] (b5.270)--++(270+240:0.2);
        \draw[mybaseline,->] (b6.90)to[in=330+300,out=90+300](c6.330);   
        \draw[mybaseline,->] (b6.330)to[in=150+240,out=330+300](c5.150);   
        \draw[mybaseline,->] (b6.150)--++ (150+300:0.2);   
        \draw[mybaseline,->] (b6.210)--++(210+300:0.2);   
        \draw[mybaseline,->] (b6.270)--++(270+300:0.2);
        \draw[mybaseline,->] (c1.30)--++ (30:0.2);     
        \draw[mybaseline,->] (c1.90)--++(90:0.2); 
        \draw[mybaseline,->] (c1.150)--++(150:0.2); 
        \draw[mybaseline,->] (c1.330)--++(330:0.2); 
        \draw[mybaseline,->] (c2.30)--++ (30+60:0.2);   
        \draw[mybaseline,->] (c2.270)--++(270+60:0.2);   
        \draw[mybaseline,->] (c2.210)--++(210+60:0.2); 
        \draw[mybaseline,->] (c2.330)--++(330+60:0.2);
        \draw[mybaseline,->] (c3.90)--++ (90+120:0.2); 
        \draw[mybaseline,->] (c3.150)--++(150+120:0.2);
        \draw[mybaseline,->] (c3.210)--++(210+120:0.2);   
        \draw[mybaseline,->] (c3.270)--++(270+120:0.2); 
        \draw[mybaseline,->] (c4.30)--++ (30+180:0.2); 
        \draw[mybaseline,->] (c4.90)--++(90+180:0.2);
        \draw[mybaseline,->] (c4.330)--++(330+180:0.2);   
        \draw[mybaseline,->] (c4.150)--++(150+180:0.2); 
        \draw[mybaseline,->] (c5.30)--++ (30+240:0.2); 
        \draw[mybaseline,->] (c5.330)--++(330+240:0.2);
        \draw[mybaseline,->] (c5.270)--++(270+240:0.2);   
        \draw[mybaseline,->] (c5.210)--++(210+240:0.2); 
        \draw[mybaseline,->] (c6.90)--++ (90+300:0.2); 
        \draw[mybaseline,->] (c6.150)--++(150+300:0.2);
        \draw[mybaseline,->] (c6.210)--++(210+300:0.2);   
        \draw[mybaseline,->] (c6.270)--++(270+300:0.2);
        \node[tnode, fill = red] (U) at ($(0,0)!0.5!(30:0.6cm*3)$) {};
        \node[tnrects, fill = blue,minimum size = 3pt] (O1) at ($(0,0)!0.5!(90:0.6cm*3)$) {};
        \node[tnrects, fill = blue,minimum size =3pt] (O2) at ($(0,0)!0.5!(150:0.6cm*3)$) {};
        \node[tnrects, fill = blue,minimum size = 3pt] (O3) at ($(0,0)!0.5!(210:0.6cm*3)$) {};
        \node[tnrects, fill = blue,minimum size = 3pt] (O5) at (52:0.6cm*2.9) {};
        \node[tnrects, fill = blue,minimum size = 3pt] (O6) at (8:0.6cm*2.9) {};
        \node[tnrects, fill = blue,minimum size = 3pt] (O4) at ($(0,0)!0.5!(270:0.6cm*3)$) {};
    },
\end{equation}
and use the mapping from tensor networks to causal models to define a model on the above graph. Notice that the graph above is acyclic and the mapping leads to a valid causal model for each tensor as the number of chosen inputs of each tensor are less or equals than those of the outputs.

The mapping also provides us with causal mechanisms associated to each vertex of the graph. On the other hand, we can apply graph-separation theorems to the obtained graph to already infer no-signalling properties. Indeed, as can be easily proven using the definition of $d$-separation (see~\Cref{app:graphsep}), the vertex represented by the red circle is $d$-separated from the blue squares. More intuitively, this $d$-separation captures that, relative to the partial order induced by the acyclic causal model, none of the blue squares are in the future of the red circle (and vice versa), and we would thus expect that the red circle cannot signal to the blue squares. This is indeed ensured by the $d$-separation theorem. Because of the soundness of $d$-separation, any causal model on the graph will not show correlations between the random variable associated to the labs $\region{A}$ and the one associated to the labs $\region{B}$, i.e.,
\begin{equation}
\label{eq:Htn ex}
    \prob{(y|a,b)_{\textup{I}}} =  \prob{(y|a',b)_{\textup{I}}}
\end{equation}
for all $a,a'\in\outcomemaparg{A}$.\footnote{More precisely, while $d$-separation of the red circle from the blue squares used above is a sufficient condition to ensure no-signalling from $\region{A}$ (red circle) to $\region{B}$ (blue squares), a tighter condition would be to include a parentless node, say $a$ pointing to the red circle (modelling the free choice of operation performed at $\region{A}$) and consider whether $a$ is $d$-separated from the blue squares. Indeed, we can have situations where the blue squares are $d$-connected to the red circle (e.g., when the red circle lies in the ``future'' of the blue squares or equivalently when there exists a directed path from the blue squares to the red circle) but remain $d$-separated from the parentless node $a$. Even in this case, the $d$-separation theorem would guarantee no-signalling as it would prevent $a$ from getting correlated with the outcomes $y$ at the blue squares.} Using~\cref{lem:signalling_CI}, proving that there is causal influence from $\region{A}$ to $\region{B}$ if and only if there is signalling, one can immediately conclude that there is no causal influence from $\region{A}$ to $\region{B}$.

Observe that the method used above consists in choosing a mapping from tensor networks to causal models and construct the corresponding graph including the relevant in intervention. In this example, we were able to construct an acyclic graph and use the $d$-separation theorem to infer no signalling directly from the connectivity of the graph. If the graph is cyclic, which is the case for a general mapping, one can use $p$-separation~\cite{Quantum_paper} instead. For instance, consider the example of Figure 18 in~\cite{Cotler_2019}: 
\begin{equation}
\label{eq:TNpconn}
    \ketbra{\Xi(U_\region{A}, O_\region{B})} = \centertikz{
       \foreach \ang in {0,60, ...,300}{
            \draw[mybaseline] (\ang+90:0.6cm*3) -- (0,0);
            \draw[mybaseline] (\ang+90:0.6cm*3) to [in = \ang+210, out = \ang+330] (\ang+60:0.6cm*3.5);
            \draw[mybaseline] (\ang+60+60:0.6cm*3.5) to [in = \ang+210, out = \ang+330]  (\ang+90:0.6cm*3); 
            \foreach \an in {0,60,120}{
                    \draw[mybaseline] (\ang+90:0.6cm*3) -- ++(\an+30+\ang:0.5);
                }
            \foreach \an in {0,60,120,180}{
                    \draw[mybaseline] (\ang+60:0.6cm*3.5) -- ++(\ang+\an-30:0.4);
                }
           } 
        \foreach \ang/\b in {0/1,60/2, 120/3, 180/4, 240/5, 300/6}{
           \node[mybaseline, regular polygon, draw, regular polygon sides = 6, minimum size = 0.8cm, fill=yellow!40, rotate = \ang] (p1) at (-\ang+90:0.6cm*3) {};
           \node at (-\ang+90:0.6cm*3) {\indexstyle{b.\b}};
            \node[mybaseline, regular polygon, draw, regular polygon sides = 6, minimum size = 0.6cm, fill=yellow!40, rotate = \ang] (p2) at (-\ang+60:0.6cm*3.5) {};
            \node at (-\ang+60:0.6cm*3.5) {\indexstyle{c.\b}};
         }    
        \node[mybaseline, regular polygon, draw, regular polygon sides = 6, minimum size = 1.2cm, fill=yellow!40] (p) at (0,0) {\indexstyle{a}};
        \node[tnode, fill = red] (U) at ($(0,0)!0.5!(-150:0.6cm*3)$) {};
        \node[tnrects, fill = blue,minimum size = 3pt] (O1) at ($(0,0)!0.5!(-30:0.6cm*3)$) {};
        \node[tnrects, fill = blue,minimum size = 3pt] (O3) at ($(0,0)!0.5!(30:0.6cm*3)$) {};
        \node[tnrects, fill = blue,minimum size = 3pt] (O4) at ($(0,0)!0.5!(270:0.6cm*3)$) {};
        \node[tnrects, fill = blue,minimum size = 3pt] (O5) at (230:0.6cm*2.9) {};
        \node[tnrects, fill = blue,minimum size = 3pt] (U) at ($(120+90:0.6cm*3)+(120+30+120:0.55)$) {};
        \node[tnrects, fill = blue,minimum size = 3pt] (U) at ($(120+90:0.6cm*3)+(60+30+120:0.55)$) {};   
    }.
\end{equation}
One can easily see that there is no choice of directions between the input-output spaces of $a$ and $b.5$ that would allow to make use of the perfect tensor property and have all blue squares outside the future of the red circle, with respect to the partial order induced by the causal model. Indeed, depending on the choice of direction of the edge $\{a,b.5\}$ either the tensor $a$ or the tensor $b.5$ would require to have more inputs than outputs in order to ensure that the blue squares are in the past of the unitary.  

We can make use of the generalised mapping to map the network to a cyclic causal model. Let us focus on the tensors $a$ and $b.5$ only, as one can easily see that a consistent choice of directions for the other tensors, which also fulfils the input-output dimension requirements, can always be found.\footnote{This is the case because the two tensors in consideration have at most one edge in common with any other tensor. Hence, for both directions of such edge one can choose the remaining directions on the other tensor to have valid dimensions and to be consistent with the rest of the network. See, for instance, the previous example.}

Consider the two choices of directions for the edge $\{a,b.5\}$:
\begin{equation}
\centertikz{
    \foreach \ang in {0,60}{
        \draw[mybaseline,->] (\ang+90:0.6cm*2.4) -- (\ang+90:0.6cm*0.9);
    }
    \foreach \ang in {120,180,240,300}{
        \draw[mybaseline,->] (0,0) --(\ang+90:0.6cm*2.4);
    }
    \draw[mybaseline] (120+90:0.6cm*3) -- (0,0);

    \foreach \an in {0,300}{
            \draw[mybaseline,->] (120+90:0.6cm*3) -- ++(\an+30+120:0.85);
    }
    \foreach \an in {60,120,180}{
            \draw[mybaseline,,<-] ($(120+90:0.6cm*3)+(\an+30+120:0.35)$) -- ++(\an+30+120:0.52);
    }
                
        \foreach \ang/\b in {240/5}{
           \node[mybaseline, regular polygon, draw, regular polygon sides = 6, minimum size = 0.8cm, fill=yellow!40, rotate = \ang] (p1) at (-\ang+90:0.6cm*3) {};
           \node at (-\ang+90:0.6cm*3) {\indexstyle{b.\b}};

         }    
        \node[mybaseline, regular polygon, draw, regular polygon sides = 6, minimum size = 1.2cm, fill=yellow!40] (p) at (0,0) {\indexstyle{a}};
        \node[tnode, fill = red] (U) at ($(0,0)!0.5!(-150:0.6cm*3)$) {};
        \node[tnrects, fill = blue,minimum size = 3pt] (O1) at ($(0,0)!0.5!(330:0.6cm*3)$) {};
        \node[tnrects, fill = blue,minimum size = 3pt] (O3) at ($(0,0)!0.5!(30:0.6cm*3)$) {};
        \node[tnrects, fill = blue,minimum size = 3pt] (O4) at ($(0,0)!0.5!(270:0.6cm*3)$) {};
        \node[tnrects, fill = blue,minimum size = 3pt] (U) at ($(120+90:0.6cm*3)+(120+30+120:0.7)$) {};
        \node[tnrects, fill = blue,minimum size = 3pt] (U) at ($(120+90:0.6cm*3)+(60+30+120:0.7)$) {}; 
        \node[tnrects, fill = blue,minimum size = 3pt] (U) at ($(120+90:0.6cm*3)+(180+30+120:0.7)$) {};
    } \quad \textup{ or } \quad  
    \centertikz{
        \foreach \ang in {120,180,240,300}{
            \draw[mybaseline,->] (\ang+90:0.6cm*2.4) -- (\ang+90:0.6cm*0.9);
        }
        \foreach \ang in {0,60}{
            \draw[mybaseline,->] (0,0) --(\ang+90:0.6cm*2.4);
        }
        \draw[mybaseline] (120+90:0.6cm*3) -- (0,0);

        \foreach \an in {0,300}{
                \draw[mybaseline,->] (120+90:0.6cm*3) -- ++(\an+30+120:0.85);
        }
        \foreach \an in {60,120,180}{
                \draw[mybaseline,,<-] ($(120+90:0.6cm*3)+(\an+30+120:0.35)$) -- ++(\an+30+120:0.52);
        }
                
        \foreach \ang/\b in {240/5}{
           \node[mybaseline, regular polygon, draw, regular polygon sides = 6, minimum size = 0.8cm, fill=yellow!40, rotate = \ang] (p1) at (-\ang+90:0.6cm*3) {};
           \node at (-\ang+90:0.6cm*3) {\indexstyle{b.\b}};

         }    
        \node[mybaseline, regular polygon, draw, regular polygon sides = 6, minimum size = 1.2cm, fill=yellow!40] (p) at (0,0) {\indexstyle{a}};
        \node[tnode, fill = red] (U) at ($(0,0)!0.5!(-150:0.6cm*3)$) {};
        \node[tnrects, fill = blue,minimum size = 3pt] (O1) at ($(0,0)!0.5!(330:0.6cm*3)$) {};
        \node[tnrects, fill = blue,minimum size = 3pt] (O3) at ($(0,0)!0.5!(30:0.6cm*3)$) {};
        \node[tnrects, fill = blue,minimum size = 3pt] (O4) at ($(0,0)!0.5!(270:0.6cm*3)$) {};
        \node[tnrects, fill = blue,minimum size = 3pt] (U) at ($(120+90:0.6cm*3)+(120+30+120:0.7)$) {};
        \node[tnrects, fill = blue,minimum size = 3pt] (U) at ($(120+90:0.6cm*3)+(60+30+120:0.7)$) {}; 
        \node[tnrects, fill = blue,minimum size = 3pt] (U) at ($(120+90:0.6cm*3)+(180+30+120:0.7)$) {};
    }.
\end{equation}
This represents a specific choice of directions, in general the direction of the $\{a,b.5\}$ and the requirement that all blue square operators should not be in the causal past of the unitary imposes requirements on the edges involving operators. Specifically all edges of this kind that are connected to the tensor where $\{a,b.5\}$ is ingoing, also have to be ingoing. In the left choice of the diagram, this constraint is applied in $b.5$ and in the diagram on the right it is applied in $a$. The other directions can be freely chosen in each case respectively, and the diagrams above are just an example of choice.

In both options one of the tensors certainly does not fulfil the perfect tensor properties, as it has four inputs and two outputs. Thus, when mapping it to a causal model we have to make use of the generalised mapping of~\cref{def: TNtoCMgeneral}, which introduces a self cycle in the vertex of such tensor, leading to:
\begin{equation}
\centertikz{
    \foreach \ang in {0,60}{
        \draw[mybaseline,->] (\ang+90:0.6cm*2.4) -- (\ang+90:0.6cm*0.9);
    }
    \foreach \ang in {120,180,240,300}{
        \draw[mybaseline,->] (0,0) --(\ang+90:0.6cm*2.4);
    }
    \draw[mybaseline] (120+90:0.6cm*3) -- (0,0);

    \foreach \an in {10,350}{
            \draw[mybaseline,->] (120+90:0.6cm*3) -- ++(\an+30+120:0.85);
    }
    \foreach \an in {60,120,180}{
            \draw[mybaseline,,<-] ($(120+90:0.6cm*3)+(\an+30+120:0.35)$) -- ++(\an+30+120:0.52);
    }

    \draw[mybaseline,->] ($(120+90:0.6cm*3)+(290+30+120:0.37)$) .. controls ($(120+90:0.6cm*3)+(290+30+120:0.37)+(0.25,0.5)$) and ($(120+90:0.6cm*3)+(310+30+120:0.37)+(-0.25,0.5)$) .. ($(120+90:0.6cm*3)+(310+30+120:0.37)$);
                
        \foreach \ang/\b in {240/5}{
           \node[mybaseline, regular polygon, draw, regular polygon sides = 6, minimum size = 0.8cm, fill=yellow!40, rotate = \ang] (p1) at (-\ang+90:0.6cm*3) {};
           \node at (-\ang+90:0.6cm*3) {\indexstyle{b.\b}};

         }    
        \node[mybaseline, regular polygon, draw, regular polygon sides = 6, minimum size = 1.2cm, fill=yellow!40] (p) at (0,0) {\indexstyle{a}};
        \node[tnode, fill = red] (U) at ($(0,0)!0.5!(-150:0.6cm*3)$) {};
        \node[tnrects, fill = blue,minimum size = 3pt] (O1) at ($(0,0)!0.5!(330:0.6cm*3)$) {};
        \node[tnrects, fill = blue,minimum size = 3pt] (O3) at ($(0,0)!0.5!(30:0.6cm*3)$) {};
        \node[tnrects, fill = blue,minimum size = 3pt] (O4) at ($(0,0)!0.5!(270:0.6cm*3)$) {};
        \node[tnrects, fill = blue,minimum size = 3pt] (U) at ($(120+90:0.6cm*3)+(120+30+120:0.7)$) {};
        \node[tnrects, fill = blue,minimum size = 3pt] (U) at ($(120+90:0.6cm*3)+(60+30+120:0.7)$) {}; 
        \node[tnrects, fill = blue,minimum size = 3pt] (U) at ($(120+90:0.6cm*3)+(180+30+120:0.7)$) {};
    } \quad \textup{ or } \quad  
    \centertikz{
        \foreach \ang in {120,180,240,300}{
            \draw[mybaseline,->] (\ang+90:0.6cm*2.4) -- (\ang+90:0.6cm*0.9);
        }
        \foreach \ang in {50,70}{
            \draw[mybaseline,->] (0,0) --(\ang+90:0.6cm*2.4);
        }
       \begin{scope}[xscale=-1]
           \draw[mybaseline,->] (10+90:0.6cm*0.9) .. controls ($(90:0.6cm*0.9)+(-0.75,1)$) and ($(90:0.6cm*0.9)+(0.75,1)$) .. (-10+90:0.6cm*0.9);
       \end{scope}
        
        \foreach \an in {0,300}{
                \draw[mybaseline,->] (120+90:0.6cm*3) -- ++(\an+30+120:0.85);
        }
        \foreach \an in {60,120,180}{
                \draw[mybaseline,,<-] ($(120+90:0.6cm*3)+(\an+30+120:0.35)$) -- ++(\an+30+120:0.52);
        }
                
        \foreach \ang/\b in {240/5}{
           \node[mybaseline, regular polygon, draw, regular polygon sides = 6, minimum size = 0.8cm, fill=yellow!40, rotate = \ang] (p1) at (-\ang+90:0.6cm*3) {};
           \node at (-\ang+90:0.6cm*3) {\indexstyle{b.\b}};

         }    
        \node[mybaseline, regular polygon, draw, regular polygon sides = 6, minimum size = 1.2cm, fill=yellow!40] (p) at (0,0) {\indexstyle{a}};
        \node[tnode, fill = red] (U) at ($(0,0)!0.5!(-150:0.6cm*3)$) {};
        \node[tnrects, fill = blue,minimum size = 3pt] (O1) at ($(0,0)!0.5!(330:0.6cm*3)$) {};
        \node[tnrects, fill = blue,minimum size = 3pt] (O3) at ($(0,0)!0.5!(30:0.6cm*3)$) {};
        \node[tnrects, fill = blue,minimum size = 3pt] (O4) at ($(0,0)!0.5!(270:0.6cm*3)$) {};
        \node[tnrects, fill = blue,minimum size = 3pt] (U) at ($(120+90:0.6cm*3)+(120+30+120:0.7)$) {};
        \node[tnrects, fill = blue,minimum size = 3pt] (U) at ($(120+90:0.6cm*3)+(60+30+120:0.7)$) {}; 
        \node[tnrects, fill = blue,minimum size = 3pt] (U) at ($(120+90:0.6cm*3)+(180+30+120:0.7)$) {};

    }.
\end{equation}
Signalling relations in these graph can be addressed using a different graph separation property, $p$-separation, which is sound and complete for finite dimensional cyclic causal models\footnote{Tensor networks are also defined for finite dimensional, discrete indices, hence this is not restrictive. The settings labelling the interventions that we consider might be infinite dimensional. However, one can easily see that systems, i.e., Hilbert spaces or indices, associated to edges which are not involved in loops, may not be finite without compromising the results of~\cite{Quantum_paper}. Settings are associated to exogenous vertices, thus there are never involved in loops.}. In this example, one can prove that the self-loops $p$-connect the unitary to the region denoted with squares.
By the $p$-separation theorem~\cite{Quantum_paper}, this means that generically, there can be signalling in causal models on such a graph (and by our results, causal influence in the associated tensor network), except in the case of certain fine-tuned causal models. Checking whether or not the scenario involves such fine-tuning that washes out the signalling (or operational causal influence), would require explicit calculation of the associated quantities by applying~\cref{def: signalling} (or~\cref{def: causal influence}) for the given causal model (or tensor network), and cannot be solely determined by analysing the graph. 

\begin{remark}
    While the holographic models for tensor networks heavily rely on the perfect tensor property for analysing causal influence, causal model methods can be applied to arbitrary tensors. In case the mapping of~\cref{def: TNtoCMgeneral} can be applied, one might be able to construct an acyclic graph and make considerations using the $d$-separation theorem. More generally, one can apply~\cref{def: TNtoCMgeneral} and obtain a cyclic causal model with self-loops added to a subset of vertices, as prescribed by the generalised mapping, and use the $p$-separation theorem, which generalises $d$-separation to cyclic scenarios. 

\end{remark}

\section{Discussion and outlook}
\label{sec:conclusion}

In this work, we established explicit mappings between tensor networks and quantum causal models, enabling a causal interpretation of arbitrary tensor networks (as detailed in the summary of contributions in \Cref{sec: summary_contrib}). In particular, this makes it possible to apply causal reasoning backed by sound and complete graph-separation theorems even in the absence of any a priori notion of directionality. 

We showed that a causal model uniquely determines a tensor network, while the inverse mapping is non-unique, with a single tensor network corresponding to multiple distinct causal models on fundamentally different causal graphs (depending on a choice of orientation for each undirected edge). These mappings motivated the introduction of an operational notion of causal influence for tensor networks. A notable consequence is the existence of families of genuinely different models sharing identical no-signalling relations, suggesting a form of causal equivalence. Finally, we illustrated the use of this framework by applying causal-inference tools to holographic tensor networks to infer causal influence.

We now discuss interesting future directions for building on this framework and results:
\begin{itemize}
    \item \textbf{Rotations of causal models:} The discovery of equivalence classes of ``rotated'' causal models, with the same signalling structure under time-symmetric interventions but different causal graphs, opens several interesting questions: which properties of causal models are preserved under such rotations? are there novel properties arising through this equivalence, similarly to how dual unitaries exhibit interesting features along light-like directions~\cite{Piroli_2020,Zhou_2022}?
    In addition, this result suggests a potential connection between fine-tuned relations in one causal model and sound graph-separation properties in another representative of the same equivalence class. 
    Indeed, the observed non-signalling relations may arise from different sources across the equivalence class, for instance from $d$- or $p$-separation~\cite{pearl_2009,Quantum_paper} or from fine-tuning, depending on the underlying causal structure. 
    \item \textbf{Graph-separation theorem for tensor networks on undirected graphs:} Graph-separation criteria, such as $d$- and $p$-separation, are defined for directed graphs and proven to be sound and complete for subsets causal models. One might wonder whether these criteria induce a corresponding separation theorem for tensor networks defined on undirected graphs via the mappings developed in this work. Preliminary evidence contained in the tensor-network examples of~\Cref{sec:graphsepHolo} suggests that this may indeed be the case. For instance, in the tensor network of~\cref{eq:TNpconn}, the squared region $\region{B}$ cannot be separated from the unitary $\region{A}$ under any orientation of edges that yields a valid causal model through our original mapping: the fact that $\region{B}$ occupies three edges of the same tensor to which $\region{A}$ is connected prevents a choice of input–output directions consistent with the perfect-tensor property that would $d$-separate the two regions. While a generalised mapping allowing additional cyclicity can be constructed, the resulting self-cycles $p$-connect $\region{A}$ and $\region{B}$, again preventing separation. By contrast, in the tensor network of~\cref{eq: ex Hstate labels}, where two regions share fewer edges with a common tensor, suitable edge orientations exist that yield an acyclic causal model in which the regions are $d$-separated and hence causally disconnected. This suggests that one could define a purely graph-theoretic separation criterion for undirected graphs that might be sound and complete for tensor networks, inheriting graph-separation properties of the associated equivalence class of causal models. 
    In the special case of holographic tensor networks, one could formally relate such property to the geodesic-based criterion mentioned above~\cite{Pastawski_2015,Cotler_2019}.

    \item \textbf{Emergence of space-time from operational properties:} Tensor networks are used to define the geometry of space-time in terms of the structure of the network, which determines properties such as its curvature and the distance between two points. For instance, the so-called ``MERA'' tensor network~\cite{Vidal_2008}, with its specific local and global structure, has been studied to define distance in terms of the number of edges that must be traversed to connect two tensors of the network, thus inferring how a fundamental concept like distance might arise from the connectivity of processes. More generally, defining space-time properties using tensor networks remains an active area of research~\cite{May_2017,Cotler_2019, Vidal_2008}, which provides new insights into the fundamental nature of space-time and its relationship to other physical phenomena. Our framework maps a given tensor network to a causal model, a connection that could allow us to understand space-time properties emerging in tensor networks as well-known operational properties in causal models, and to be able to consistently apply powerful graph-theoretic methods from causal models to such problems.
    \item \textbf{Indefinite causal order:}
    Our work allows one to connect so-called indefinite causal order (ICO) processes or higher-order quantum operations~\cite{Oreshkov_2012,Chiribella_2013} to tensor network models, since these are known to be a subset of cyclic quantum causal structures~\cite{Araujo_2017} that can be causally modelled in the framework of~\cite{Quantum_paper} used here. This would lend further techniques and ideas for exploring realisations of such processes in regimes beyond classical space-time considered in \cite{VilasiniRennerPRA, VilasiniRennerPRL}, including quantum gravitational regimes where space-time geometry can exhibit quantum superposition~\cite{Zych_2019,Castro_Ruiz_2020,Paunkovi__2020, Moller_2024,vilasini_2025events,Kabel_2025}.
 Further developing the links between tensor network and causal models established here, could contribute to connecting tensor-network models of quantum gravity with quantum causal structures and higher-order quantum operations, offering a more unified language to describe causality beyond fixed causal orderings. Finally, it would also be interesting to consider the application of these ideas to study the conditions under which definite causal and temporal order may emerge from a world that allows arbitrary indefinite causal order processes.
\end{itemize}

\bigskip
\paragraph{Acknowledgements}
We thank Victor Gitton for his supervision and guidance during the course of the Master’s thesis \cite{Ferradini_2023} on which this work is based. CF and GM acknowledge support from the Swiss National Science Foundation via project No. 20QU-1\_225171, the NCCR SwissMAP, and the ETH Zurich Quantum Center. 
CF also acknowledges support from the ETH Foundation.
GM also acknowledges support from the CHIST-ERA project MoDIC.
VV's research at ETH was supported by an ETH Postdoctoral Fellowship. VV also acknowledges support from the PEPR integrated project EPiQ ANR-22-PETQ-0007 as part of Plan France 2030.
CF and VV acknowledge funding from the CA23115 - Relativistic Quantum Information (RQI) COST Action, in particular by means of a short term scientific mission (STSM) grant.

\begin{thebibliography}{LMGP{\etalchar{+}}11b}
\newcommand{\enquote}[1]{``#1''}
\providecommand{\url}[1]{\texttt{#1}}
\providecommand{\urlprefix}{URL }
\expandafter\ifx\csname urlstyle\endcsname\relax
  \providecommand{\doi}[1]{doi:\discretionary{}{}{}#1}\else
  \providecommand{\doi}{doi:\discretionary{}{}{}\begingroup \urlstyle{rm}\Url}\fi
\providecommand{\eprint}[2][]{\url{#2}}

\bibitem[ADH14]{almheiri-2014}
A.~Almheiri, X.~Dong, D.~Harlow, \enquote{{Bulk locality and quantum error correction in AdS/CFT}}, \emph{Journal of High Energy Physics}, 2015, \doi{10.1007/jhep04(2015)163}, 2014.

\bibitem[AGB17]{Araujo_2017}
M.~Ara\'{u}jo, P.~A. Gu\'{e}rin, {\"A}.~Baumeler, \enquote{Quantum computation with indefinite causal structures}, \emph{Physical Review A}, 96(5), \doi{10.1103/physreva.96.052315}, 2017.

\bibitem[BBC{\etalchar{+}}93]{Bennett1993}
C.~H. Bennett, G.~Brassard, C.~Cr\'epeau, R.~Jozsa, A.~Peres, W.~K. Wootters, \enquote{Teleporting an unknown quantum state via dual classical and {Einstein-Podolsky-Rosen} channels}, \emph{Phys. Rev. Lett.}, 70:1895--1899, \doi{10.1103/PhysRevLett.70.1895}, 1993.

\bibitem[BFPM21]{Bongers_2021}
S.~Bongers, P.~Forré, J.~Peters, J.~M. Mooij, \enquote{Foundations of structural causal models with cycles and latent variables}, \emph{The Annals of Statistics}, 49(5):2885–2915, \doi{10.1214/21-aos2064}, 2021.

\bibitem[BLO20]{Barrett_2019}
J.~Barrett, R.~Lorenz, O.~Oreshkov, \enquote{Quantum causal models}, \emph{arXiv:1906.10726 [quant-ph]}, \doi{10.48550/arXiv.1906.10726}, 2020.

\bibitem[BLO21]{Barrett_2021}
J.~Barrett, R.~Lorenz, O.~Oreshkov, \enquote{Cyclic quantum causal models}, \emph{Nature Communications}, 12(1):885, \doi{10.1038/s41467-020-20456-x}, 2021.

\bibitem[CDPV13]{Chiribella_2013}
G.~Chiribella, G.~M. D’Ariano, P.~Perinotti, B.~Valiron, \enquote{Quantum computations without definite causal structure}, \emph{Physical Review A}, 88(2), \doi{10.1103/physreva.88.022318}, 2013.

\bibitem[Cho75]{Choi1975}
M.-D. Choi, \enquote{{Completely positive linear maps on complex matrices}}, \emph{Linear Algebra Appl.}, 10(3):285--290, \doi{10.1016/0024-3795(75)90075-0}, 1975.

\bibitem[CHQY19]{Cotler_2019}
J.~Cotler, X.~Han, X.-L. Qi, Z.~Yang, \enquote{Quantum causal influence}, \emph{Journal of High Energy Physics}, 2019(42), \doi{10.1007/jhep07(2019)042}, 2019.

\bibitem[CJQW18]{Cotler_2018}
J.~Cotler, C.-M. Jian, X.-L. Qi, F.~Wilczek, \enquote{Superdensity operators for spacetime quantum mechanics}, \emph{Journal of High Energy Physics}, 2018(9), \doi{10.1007/jhep09(2018)093}, 2018.

\bibitem[CRGBB20]{Castro_Ruiz_2020}
E.~Castro-Ruiz, F.~Giacomini, A.~Belenchia, {\v C}.~Brukner, \enquote{Quantum clocks and the temporal localisability of events in the presence of gravitating quantum systems}, \emph{Nature Communications}, 11(1), \doi{10.1038/s41467-020-16013-1}, 2020.

\bibitem[CS16]{Costa_2016}
F.~Costa, S.~Shrapnel, \enquote{Quantum causal modelling}, \emph{New Journal of Physics}, 18(6):063032, \doi{10.1088/1367-2630/18/6/063032}, 2016.

\bibitem[Fer23]{Ferradini_2023}
C.~Ferradini, \enquote{Connecting tensor networks to quantum causal models with applications to holography}, \emph{ETH Zurich (Master thesis)}, \doi{10.3929/ethz-b-000671952}, 2023.

\bibitem[FGV25a]{Ferradini_2025C}
C.~Ferradini, V.~Gitton, V.~Vilasini, \enquote{Cyclic functional causal models beyond unique solvability with a graph separation theorem}, \emph{arXiv:2502.04171 [math.ST]}, \doi{10.48550/arXiv.2502.04171}, 2025.

\bibitem[FGV25b]{Quantum_paper}
C.~Ferradini, V.~Gitton, V.~Vilasini, \enquote{Cyclic quantum causal modelling with a graph separation theorem}, \emph{arXiv:2502.04168 [quant-ph]}, \doi{10.48550/arXiv.2502.04168}, 2025.

\bibitem[FM18]{forre_2018}
P.~Forré, J.~M. Mooij, \enquote{Constraint-based causal discovery for non-linear structural causal models with cycles and latent confounders}, \emph{arXiv:1807.03024 [stat.ML]}, \doi{10.48550/arXiv.1807.03024}, 2018.

\bibitem[FV]{Interventions_paper}
C.~Ferradini, V.~Vilasini, \enquote{Interventions in cyclic quantum causal models}, (in preparation).

\bibitem[HLP14]{Henson_2014}
J.~Henson, R.~Lal, M.~F. Pusey, \enquote{Theory-independent limits on correlations from generalized bayesian networks}, \emph{New Journal of Physics}, 16(11):113043, \doi{10.1088/1367-2630/16/11/113043}, 2014.

\bibitem[HNQ{\etalchar{+}}16]{hayden-2016}
P.~Hayden, S.~Nezami, X.-L. Qi, N.~Thomas, M.~A. Walter, Z.~Yang, \enquote{{Holographic duality from random tensor networks}}, \emph{Journal of High Energy Physics}, 2016(11), \doi{10.1007/jhep11(2016)009}, 2016.

\bibitem[Jam72]{Jamiolkowski1972}
A.~Jamio{\l}kowski, \enquote{{Linear transformations which preserve trace and positive semidefiniteness of operators}}, \emph{Rept. Math. Phys.}, 3:275--278, \doi{10.1016/0034-4877(72)90011-0}, 1972.

\bibitem[JSV25]{jean_2025}
E.~Jean, R.~Silva, V.~Vilasini, \enquote{An equivalence between time-symmetry and cyclic causality in quantum theory}, \emph{arXiv:22508.02463 [quant-ph]}, \doi{10.48550/arXiv.2508.02463}, 2025.

\bibitem[KHA{\etalchar{+}}25]{Kabel_2025}
V.~Kabel, A.-C. de~la Hamette, L.~Apadula, C.~Cepollaro, H.~Gomes, J.~Butterfield, {\v C}.~Brukner, \enquote{Quantum coordinates, localisation of events, and the quantum hole argument}, \emph{Communications Physics}, 8(1), \doi{10.1038/s42005-025-02084-3}, 2025.

\bibitem[LMGP{\etalchar{+}}11a]{Lloyd_2011_2}
S.~Lloyd, L.~Maccone, R.~Garcia-Patron, V.~Giovannetti, Y.~Shikano, \enquote{Quantum mechanics of time travel through post-selected teleportation}, \emph{Physical Review D}, 84(2):025007, \doi{10.1103/physrevd.84.025007}, 2011.

\bibitem[LMGP{\etalchar{+}}11b]{Lloyd_2011}
S.~Lloyd, L.~Maccone, R.~Garcia-Patron, V.~Giovannetti, Y.~Shikano, S.~Pirandola, L.~A. Rozema, A.~Darabi, Y.~Soudagar, L.~K. Shalm, A.~M. Steinberg, \enquote{Closed timelike curves via postselection: Theory and experimental test of consistency}, \emph{Physical Review Letters}, 106(4):040403, \doi{10.1103/physrevlett.106.040403}, 2011.

\bibitem[LN07]{Levin_2007}
M.~Levin, C.~P. Nave, \enquote{Tensor renormalization group approach to two-dimensional classical lattice models}, \emph{Physical Review Letters}, 99(12), \doi{10.1103/physrevlett.99.120601}, 2007.

\bibitem[Mal98]{maldacena-1998}
J.~Maldacena, \enquote{{The large $N$ limit of superconformal field theories and supergravity}}, \emph{Advances in Theoretical and Mathematical Physics}, 2(2):231--252, \doi{10.4310/atmp.1998.v2.n2.a1}, 1998.

\bibitem[May17]{May_2017}
A.~May, \enquote{Tensor networks for dynamic spacetimes}, \emph{Journal of High Energy Physics}, 2017(6), \doi{10.1007/jhep06(2017)118}, 2017.

\bibitem[Nea00]{Neal_2000}
R.~M. Neal, \enquote{On deducing conditional independence from d-separation in causal graphs with feedback (research note)}, \emph{Journal of Artificial Intelligence Research}, 12:87–91, \doi{10.1613/jair.689}, 2000.

\bibitem[OCB12]{Oreshkov_2012}
O.~Oreshkov, F.~Costa, {\v C}.~Brukner, \enquote{Quantum correlations with no causal order}, \emph{Nature Communications}, 3(1), \doi{10.1038/ncomms2076}, 2012.

\bibitem[PBCP20]{Piroli_2020}
L.~Piroli, B.~Bertini, J.~I. Cirac, T.~Prosen, \enquote{Exact dynamics in dual-unitary quantum circuits}, \emph{Physical Review B}, 101(9), \doi{10.1103/physrevb.101.094304}, 2020.

\bibitem[Pea09]{pearl_2009}
J.~Pearl, \emph{Causality}, Cambridge University Press, 2nd edition, \doi{10.1017/CBO9780511803161}, 2009.

\bibitem[PMM{\etalchar{+}}17]{Portmann_2017}
C.~Portmann, C.~Matt, U.~Maurer, R.~Renner, B.~Tackmann, \enquote{Causal boxes: Quantum information-processing systems closed under composition}, \emph{IEEE Transactions on Information Theory}, 1–1, \doi{10.1109/tit.2017.2676805}, 2017.

\bibitem[PV20]{Paunkovi__2020}
N.~Paunković, M.~Vojinović, \enquote{Causal orders, quantum circuits and spacetime: distinguishing between definite and superposed causal orders}, \emph{Quantum}, 4:275, \doi{10.22331/q-2020-05-28-275}, 2020.

\bibitem[PYHP15]{Pastawski_2015}
F.~Pastawski, B.~Yoshida, D.~Harlow, J.~Preskill, \enquote{Holographic quantum error-correcting codes: toy models for the bulk/boundary correspondence}, \emph{Journal of High Energy Physics}, 2015(6), \doi{10.1007/jhep06(2015)149}, 2015.

\bibitem[QY18]{Qi_2018}
X.-L. Qi, Z.~Yang, \enquote{Space-time random tensor networks and holographic duality}, \emph{arXiv:1801.05289 [hep-th]}, \doi{10.48550/arXiv.1801.05289}, 2018.

\bibitem[Ren22]{Renes_2022}
J.~M. Renes, \emph{{Quantum Information Theory}}, De Gruyter, ISBN 978-3-11-057024-3, \doi{10.1515/9783110570250}, 2022.

\bibitem[SMSY24]{Moller_2024}
N.~S.~Móller, B.~Sahdo, N.~Yokomizo, \enquote{Gravitational quantum switch on a superposition of spherical shells}, \emph{Quantum}, 8:1248, \doi{10.22331/q-2024-02-12-1248}, 2024.

\bibitem[VC04]{Verstraete_2004}
F.~Verstraete, J.~I. Cirac, \enquote{Renormalization algorithms for quantum-many body systems in two and higher dimensions}, \emph{arXiv:cond-mat/0407066 [cond-mat.str-el]}, \doi{10.48550/arXiv.cond-mat/0407066}, 2004.

\bibitem[VCYR25]{vilasini_2025events}
V.~Vilasini, L.-Q. Chen, L.~Ye, R.~Renner, \enquote{Events and their localisation are relative to a lab}, \emph{arxiv:2505.21797 [quant-ph]}, \doi{10.48550/arXiv.2505.21797}, 2025.

\bibitem[Vid03]{Vidal_2003}
G.~Vidal, \enquote{Efficient classical simulation of slightly entangled quantum computations}, \emph{Physical Review Letters}, 91(14), \doi{10.1103/physrevlett.91.147902}, 2003.

\bibitem[Vid08]{Vidal_2008}
G.~Vidal, \enquote{{Class of Quantum Many-Body States That Can Be Efficiently Simulated}}, \emph{Physical Review Letters}, 101(11), \doi{10.1103/physrevlett.101.110501}, 2008.

\bibitem[VR24a]{VilasiniRennerPRA}
V.~Vilasini, R.~Renner, \enquote{Embedding cyclic information-theoretic structures in acyclic space-times: No-go results for indefinite causality}, \emph{Physical Review A}, 110(2), \doi{10.1103/physreva.110.022227}, 2024.

\bibitem[VR24b]{VilasiniRennerPRL}
V.~Vilasini, R.~Renner, \enquote{Fundamental limits for realizing quantum processes in spacetime}, \emph{Physical Review Letters}, 133(8), \doi{10.1103/physrevlett.133.080201}, 2024.

\bibitem[Wit98]{witten-1998}
E.~Witten, \enquote{{Anti de Sitter space and holography}}, \emph{Advances in Theoretical and Mathematical Physics}, 2(2):253--291, \doi{10.4310/atmp.1998.v2.n2.a2}, 1998.

\bibitem[ZCPB19]{Zych_2019}
M.~Zych, F.~Costa, I.~Pikovski, {\v C}.~Brukner, \enquote{Bell’s theorem for temporal order}, \emph{Nature Communications}, 10(1), \doi{10.1038/s41467-019-11579-x}, 2019.

\bibitem[ZH22]{Zhou_2022}
T.~Zhou, A.~W. Harrow, \enquote{Maximal entanglement velocity implies dual unitarity}, \emph{Physical Review B}, 106(20), \doi{10.1103/physrevb.106.l201104}, 2022.

\end{thebibliography}

\newcommand{\etalchar}[1]{$^{#1}$}

\newpage

\appendix

\section{Proofs}
\label{app:proofs}
\subsection*{Proof of~\cref{lemma:inv_choi}}
\invpurechoi*
\begin{proof}
The following holds for the action of  $\CJin_{B|A}(M_{A'B})$ on any basis element $S_A=\ket{m}\bra{n}_A$
    \begin{equation}
          \CJin_{B|A}(M_{A'B})(\ket{m}\bra{n}_A)=d_A^2\sum_k p_k   \CJp_{B|A} (\ket{\Psi_k}_{A'B})(\ket{m}_A)\otimes \Big(  \CJp_{B|A} (\ket{\Psi_k}_{A'B})(\ket{n}_A)\Big)^\dagger.
    \end{equation}
    \begin{align}
    \begin{split}
        &\CJin_{A,B}(M_{A'B})(\ket{m}\bra{n}_A)\\
        &= d_{A} \sum_k p_k\Tr_{A'}\left(\sum_{i,j} \ketbraa{i}{j}_{A'} \bra{j}(\ket{m}\bra{n}_A)\ket{i}_A \ket{\Psi_k}\bra{\Psi_k}_{A'B}\right)\\
       &= d_{A} \sum_k p_k\sum_l \bra{l}_{A'}\left(\sum_{i,j} \ketbraa{i}{j}_{A'} \bra{j}(\ket{m}\bra{n}_A)\ket{i}_A \ket{\Psi_k}\bra{\Psi_k}_{A'B}\right)\ket{l}_{A'}\\
        &= d_{A} \sum_k p_k \left(\sum_{i,j} \bra{j}_{A'} \bra{j}(\ket{m}\bra{n}_A)\ket{i}_A \ket{\Psi_k}\bra{\Psi_k}_{A'B}\right)\ket{i}_{A'}\\
        &=d_A \sum_k p_k \left( \sum_j \bra{jj}_{AA'}(\ket{m}\bra{n}_A \otimes \ket{\Psi_k}\bra{\Psi_k}_{A'B})\sum_i \ket{ii}_{A'A}\right)\\
        &=d_A \sum_k p_k d_A\left( \frac{1}{\sqrt{d_A}}\sum_j \bra{jj}_{AA'}(\ket{m}_A\otimes \ket{\Psi_k}_{A'B})\right)\otimes \left(       ( \bra{n}_A \otimes \bra{\Psi_k}_{A'B})\frac{1}{\sqrt{d_A}}\sum_i \ket{ii}_{A'A}\right)\\
        &=d_A^2\sum_k p_k   \CJp_{B|A} (\ket{\Psi_k}_{A'B})(\ket{m}_A)\otimes \Big(  \CJp_{B|A} (\ket{\Psi_k}_{A'B})(\ket{n}_A)\Big)^\dagger.
        \end{split}
\end{align}

By linearity, it follows that we can write the following equality at the level of the operators.

\end{proof}

\subsection*{Proof of~\cref{lemma: TNtoCMgeneral}}
The proof of~\cref{lemma: TNtoCMgeneral} follows from results in~\cite{jean_2025} which are formulated for two-time operators. Here we adapt them to fit the language of this work.

First, let us prove the following proposition linking the self-cycle composition to pre- and post-selection through maximally entangled states.
\begin{lemma}
    \label{lemma: cycle_PCTC}
Let $\chanmap: \linops(\hilmaparg{S_1}\otimes \hilmaparg A)\mapsto \linops(\hilmaparg{S_2}\otimes \hilmaparg A)$  be a CPTP map with an ancilla $A$, for all $\rho_{S_1}\in\linops(\hilmaparg{S_1})$ it holds
\begin{equation}
    \selfcycle_A(\chanmap_{S_2A|S_1A}) (\rho_{S_1})= \bra{\Omega^+}_{AA'}\left[\chanmap_{S_2A|S_1A}(\rho_{S_1}\otimes \ket{\Omega^+}\bra{\Omega^+}_{AA'})\right]\ket{\Omega^+}_{AA'},
\end{equation}
where $\ket{\Omega^+}_{AA'}$ is the un-normalised maximally entangled state on $A'$ and $A$, $\hilmaparg{A'}\cong \hilmaparg{A}$.
\end{lemma}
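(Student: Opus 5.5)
The plan is a direct computation: expand both sides in a fixed orthonormal basis $\{\ket{k}_A\}$ of $\hilmaparg{A}$ and its copy $\{\ket{k}_{A'}\}$ of $\hilmaparg{A'}$, and compare term by term using linearity of $\chanmap_{S_2A|S_1A}$. Only linearity is needed, so the identity holds for any linear map, not just CPTP ones.

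First, write the un-normalised maximally entangled state as $\ket{\Omega^+}_{AA'}=\sum_k \ket{k}_A\ket{k}_{A'}$. This gives
\begin{equation}
\rho_{S_1}\otimes\ket{\Omega^+}\bra{\Omega^+}_{AA'}=\sum_{k,l}\rho_{S_1}\otimes\ketbraa{k}{l}_A\otimes\ketbraa{k}{l}_{A'}.
\end{equation}
Since $\chanmap$ acts trivially on $A'$ (identity implicit, as in the paper's notation), linearity yields
\begin{equation}
\chanmap_{S_2A|S_1A}\left(\rho_{S_1}\otimes\ket{\Omega^+}\bra{\Omega^+}_{AA'}\right)=\sum_{k,l}\chanmap_{S_2A|S_1A}\left(\rho_{S_1}\otimes\ketbraa{k}{l}_A\right)\otimes\ketbraa{k}{l}_{A'}.
\end{equation}

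Next, sandwich this with $\bra{\Omega^+}_{AA'}=\sum_m\bra{m}_A\bra{m}_{A'}$ on the left and $\ket{\Omega^+}_{AA'}=\sum_n\ket{n}_A\ket{n}_{A'}$ on the right. The $A'$ factor contributes $\braaket{m}{k}\braaket{l}{n}=\delta_{mk}\delta_{ln}$, which collapses the sums to
\begin{equation}
\sum_{k,l}\bra{k}_A\,\chanmap_{S_2A|S_1A}\left(\rho_{S_1}\otimes\ketbraa{k}{l}_A\right)\ket{l}_A.
\end{equation}
This is exactly $\selfcycle_A(\chanmap_{S_2A|S_1A})(\rho_{S_1})$ by \cref{def:selfcycle_v3}, taking $C=A$, $D=S_1$, $B=S_2$ and using the same basis for input and output copies.

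The only care needed is bookkeeping: the input copy of $A$ must be identified with the output copy of $A$ via the basis used in \cref{def:selfcycle_v3}, and the entangled partner $A'$ must be distinct from both so it can be consistently contracted. I do not expect a genuine obstacle here.
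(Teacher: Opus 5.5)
Your proposal is correct and matches the paper's proof: both expand $\ket{\Omega^+}\bra{\Omega^+}_{AA'}$ in a product basis, use linearity of $\chanmap$ (acting as the identity on $A'$), and contract the $A'$ factor so the sums collapse to the defining expression of $\selfcycle_A$. You are also right that only linearity is used, so the CPTP assumption is not needed for this identity.
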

\begin{proof}
This lemma follows from existing results on loop composition~\cite{Portmann_2017}. We repeat the proof here for completeness:
\begin{equation}
\begin{split}
& \bra{\Omega^+}_{AA'}\left[\chanmap_{S_2A|S_1A}(\rho_{S_1}\otimes \ket{\Omega^+}\bra{\Omega^+}_{AA'})\right]\ket{\Omega^+}_{AA'}\\
    &=\sum_{i,j,k,l} (\bra{i}_A\otimes\bra{i}_{A'}) \left[\mathcal{E}_{S_2A|S_1A}(\rho_{S_1}\otimes \ketbraa{k}{l}_A) \otimes \ketbraa{k}{l}_{A'}\right] (\ket{j}_A\otimes\ket{j}_{A'})\\
    &=\sum_{i,j,k,l} \bra{i}_A \left[\mathcal{E}_{S_2A|S_1A}(\rho_{S_1}\otimes \ketbraa{i}{j}_A)\right]\ket{j}_A=  \selfcycle_A(\chanmap_{S_2A|S_1A}) (\rho_{S_1}).
\end{split}
\end{equation}
\end{proof}

The following proposition is a direct consequence of the results in Section~5.1 of~\cite{jean_2025}.

\begin{prop}[Proposition 5.2 in~\cite{jean_2025} --- Adapted]
\label{prop:fromJVS} 
    For any linear operator $\hat{C}\in\linops(\hilmaparg{S_{1}},\hilmaparg{S_{2}})$, there exists a linear operator $V\in\linops(\hilmaparg{S_{1}}\otimes\hilmaparg{A},\hilmaparg{S_{2}}\otimes\hilmaparg{A} )$ where the ancillary system $A$ is isomorphic to the input system tensored with a qubit, i.e., $\hilmaparg{A}\cong\hilmaparg{S_{1}}\otimes\mathbb{C}^2$ such that
    \begin{equation}
        \hat{C} = \alpha \bra{\Omega^+}_{AA'} V\ket{\Omega^+}_{AA'}.
    \end{equation}
    for some non-zero constant $\alpha\in\mathbb{R}^+$ and isometry $V_{S_2A|S_1A}\in \linops\left(\hilmaparg{S_1}\otimes \hilmaparg{A},\hilmaparg{S_2}\otimes \hilmaparg{A}\right)$\footnote{Technically, this is an isometry only if $\dim_{S_1}\leq\dim_{S_2}$. However, if this it not the case one can simply embed $S_2$ into a larger system and define the operator accordingly.}
\end{prop}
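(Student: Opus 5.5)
We construct $V$ explicitly from a dilation of $\hat{C}$, and then read off the statement using the identity of \cref{lemma: cycle_PCTC}. That identity, specialised to a pure map $V(\cdot)V^\dagger$, gives the following: for any operator $V$ on $\hilmaparg{S_1}\otimes\hilmaparg{A}$, expanding $\ket{\Omega^+}_{AA'}$ in a basis $\{\ket{q}_A\}$ yields
\begin{equation}
\bra{\Omega^+}_{AA'} V\ket{\Omega^+}_{AA'}=\sum_q \bra{q}_A V\ket{q}_A .
\end{equation}
So the whole task is to find an isometry whose ``diagonal blocks'' in some ancilla basis sum to a positive multiple of $\hat{C}$. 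If $\hat{C}=0$, any $V$ with vanishing diagonal blocks works, for instance a qubit flip. We therefore assume $\hat{C}\neq 0$.

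\textbf{Step 1 (dimensions).} First we make the two systems the same size. If $\dim_{S_1}\neq\dim_{S_2}$, we embed both into a common space $\hilmaparg{S}$ of dimension $\max(\dim_{S_1},\dim_{S_2})$ and regard $\hat{C}$ as an operator on $\hilmaparg{S}$. At the end we restrict the input to $S_1$. If $\dim_{S_1}\le\dim_{S_2}$ this restriction of a unitary is an isometry; otherwise we are in the situation of the footnote. This is also why the ancilla contains a copy $R\cong S_1$: it gives room for this embedding, and it lets us absorb the basis changes of Step 2 as routing through the loop if needed. The only part of the ancilla that is genuinely essential is the qubit $Q$.

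\textbf{Step 2 (reduction to a diagonal contraction).} Take a singular value decomposition $\hat{C}=W_1 D W_2$, with $W_1,W_2$ unitary and $D=\sum_k d_k\ketbra{k}$, $d_k\ge 0$. Set $\lambda=\|\hat{C}\|_\infty>0$ and choose angles $\theta_k\in[0,\pi/2]$ with $\cos\theta_k=d_k/\lambda$.

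\textbf{Step 3 (the gadget).} On $S\otimes Q$ define the controlled rotation
\begin{equation}
G=\sum_k \ketbra{k}_S\otimes R_y(\theta_k)_Q ,
\end{equation}
and set $V=(W_1\otimes\id_Q)\,G\,(W_2\otimes\id_Q)$, tensored with the identity on $R$. This $V$ is unitary on $S\otimes A$. Since $\bra{0}R_y(\theta)\ket{0}=\bra{1}R_y(\theta)\ket{1}=\cos\theta$, summing the diagonal blocks over the qubit gives $W_1\big(\sum_k 2\cos\theta_k\ketbra{k}\big)W_2=\tfrac{2}{\lambda}\hat{C}$. The identity on $R$ multiplies this by $\dim_R$. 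Hence
\begin{equation}
\hat{C}=\alpha\bra{\Omega^+}V\ket{\Omega^+},\qquad \alpha=\frac{\lambda}{2\dim_{S_1}}>0 .
\end{equation}
This mirrors the CNOT example in the paper: the branch in which the ancilla is flipped does not survive the loop.

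\textbf{Main obstacle.} The delicate point is the bookkeeping of Step 1, not the algebra. When $\dim_{S_1}>\dim_{S_2}$, the map obtained after restricting and projecting is not literally an isometry onto $S_2$, so one must either enlarge $S_2$ as in the footnote or route the excess dimensions through $R$. One then has to check that the constant $\alpha$ remains real and strictly positive after this bookkeeping. I would settle this first, by fixing the embedding convention and recomputing $\sum_q\bra{q}V\ket{q}$ with the identity on $R$ made explicit.
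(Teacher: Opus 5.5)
Your construction is correct, and it takes a genuinely different route from the paper.

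\textbf{The paper's route.} The paper does not diagonalise $\hat C$. It writes $\hat C=\beta\sum_i a_i\ket{\psi_i}\bra{i}$ column by column, with $a_i\in(0,1]$ and normalised but in general non-orthogonal $\ket{\psi_i}$. For each column it applies a qubit rotation with diagonal entries $a_i$, controlled on the input being $\ket{i}$; this is your $R_y(\theta_i)$ with $a_i=\cos\theta_i$. It then applies an isometry $W_i$ with $W_i\ket{i}=\ket{\psi_i}$. Because $W_i$ changes from column to column, the circuit must know the label $i$ before applying it. That is the role of the copy of $S_1$ inside $A$ in the paper:
\begin{itemize}
\item a SWAP sends the input into the looped register;
\item the gate is controlled on that register;
\item the projection onto $\ket{\Omega^+}$ forces the system to carry the same label.
\end{itemize}
The qubit trace then gives $2a_i\ket{\psi_i}$ on $\ket{i}$, i.e.\ $\tfrac{2}{\beta}\hat C$.

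\textbf{What your SVD changes.} Your SVD replaces the family $\{W_i\}$ by a single pair of unitaries $W_1,W_2$ that commute with the qubit projections, so no label has to be copied. The qubit alone does the job. Your factor $R$ is idle and only contributes $\dim_{S_1}$ to $\alpha$. So your remark that $R$ ``gives room for the embedding'' is not accurate: the embedding lives in $S$, not in $R$.

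\textbf{What each approach buys.} You get:
\begin{itemize}
\item a shorter argument;
\item an explicit treatment of $\hat C=0$, which the paper's normalisation $\beta=\max_i b_i$ silently excludes;
\item the observation that the ancilla in the statement, and hence in \cref{lemma: TNtoCMgeneral}, could be reduced to a single qubit. This is consistent with the paper's own CNOT example, which uses a qubit ancilla.
\end{itemize}
The paper's route avoids a singular value decomposition and stays close to the circuit of the cited reference.

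\textbf{Two small points.} First, your $R_y(\theta)$ must be the real rotation matrix with $\cos\theta$ on the diagonal. It is not $e^{-i\theta Y/2}$, whose diagonal is $\cos(\theta/2)$. Second, the ``main obstacle'' you flag is not one, for two reasons:
\begin{itemize}
\item Zero-padding does not change $\lambda=\|\hat C\|_\infty$, so $\alpha=\lambda/(2\dim_{S_1})>0$ in all cases.
\item Routing excess dimensions through $R$ cannot help. The ancilla appears on both input and output, so an isometry $S_1\otimes A\to S_2\otimes A$ exists only if $\dim_{S_1}\le\dim_{S_2}$, whatever $A$ is. Enlarging $S_2$ as in the footnote is the only option, exactly as in the paper.
\end{itemize}
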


\begin{proof}
    The proof of this proposition is adapted from the results of Section 5.1 in~\cite{jean_2025}.

    An arbitrary operator $\hat{C}$ can be written as:
    \begin{equation}
        \hat{C} = \beta \sum_{i=1}^{d_{1}} a_i \ket{\psi_i}_{S_{2}}\bra{i}_{S_{1}}
    \end{equation}
    where $\beta\in\mathbb{R}^+$, and for all $i$, $a_i\in (0,1]$, and $\ket{\psi_i}_{S_{2}}$ are normalised states in $\hilmaparg{S_{2}}$. Indeed, in arbitrary bases we can write $\hat{C}=\sum_{i,j}c_{i,j} \ket{j}_{S_{2}}\bra{i}_{S_{1}}$ for some complex coefficients $c_{i,j}$. For every $i$, we can define $b_i'\ket{\psi'_i}_{S_{2}} = \sum_j c_{i,j}\ket{j}_{S_{2}}$, where $b_i'\in\mathbb{C}$ and the states $\ket{\psi'_i}_{S_{2}}$ are normalised. Equivalently, we can express each $b_i'$ in the polar form and absorb the phase into the state $\ket{\psi'_i}_{S_{2}}$, and obtain $b_i'\ket{\psi'_i}_{S_{2}}=b_i\ket{\psi_i}_{S_{2}}$, where now $b_i\in\mathbb{R}^+$ and $\ket{\psi_i}_{S_{2}}$ are still normalised. Hence, $\hat{C}=\sum_{i}b_i \ket{\psi_i}_{S_{2}}\bra{i}_{S_{1}}$. Let $\beta=\max \left\{b_i\right\}_{i}$, and define $a_i=\frac{b_i}{\beta}$, then the claim follows.

    If there exists a set $\{\hat{C}_i\}_i$ of operators such that $\hat{C}_i\ket{i}_{S_1}= a_i\ket{\psi_i}_{S_{2}}$ holds for each $i$, then the following circuit realises $\hat{C}$ up to multiplication by a real constant:
    \begin{equation}
      \mathcal{E} = \centertikz{
            \node[draw=blue, minimum height = 75pt, minimum width = 115pt,fill = blue!20] at (1.,-0.65) {};
            \node[draw=black, minimum height = 30pt, minimum width = 45 pt] (C) at (0,0) {$C_i$};
            \node[draw=black, ellipse,minimum height = 20pt, minimum width = 20pt] (i) at (2.5,0) {\small$i$};
            \draw (C) -- (i);
            \draw (0,-2.75) -- (0,-2);
            \draw (0,-2) to [in = 270, out=90] (2.5,-0.35);          
            \draw (2.5,0.35) to [in = 270, out=90] (2.5,0.75);
            \draw (2.5,0.75) to [in = 90, out=90] (3.5,0.75);
            \draw (3.5,0.75) -- (3.5,-2);
            \draw (2.5,-2) to [in = 270, out=270] (3.5,-2);
            \draw (2.5,-2) -- (2.5,-1.75);
            \draw (2.5,-1.75) to [in = 270, out=90] (0,-0.52);
            \draw (0,0.52) to [in = 270, out=90] (0,1.5);
       },       
    \end{equation}
    where $\mathcal{E}$ is formed by a sequential composition of a unitary SWAP operation, followed by a controlled operation $\sum_{i} (C_i)_S\otimes \ketbra{i}_A$. Thus, applying such map to $\ket{j}_{S_1}$ and looping the ancilla system we get
    \begin{equation}
            \begin{split}
                \bra{\Omega^+}_{A'A}\left(\sum_{i} (\hat{C}_i)_{S_1} \otimes \ketbra{i}_{A}\right) \circ \textup{SWAP}_{S_1A} \left(\ket{\Omega^+}_{AA'} \otimes\ket{j}_{S_1}\right)
                = \hat{C}_j\ket{j}_{S_1} =  a_j\ket{\psi_j}_{S_2}.
            \end{split}
            \end{equation}
    However, the map $\chanmap$ inside the blue box above is not CPTP but only CP. 
    Let us now construct such set $\{\hat{C}_i\}_i$ and extend the map to be CPTP. 

    Consider a qubit $Q$ and a set of unitaries $\{V_i\}_{i}$, defined as
    \begin{equation}
        V_i = \begin{pmatrix}
                 a_i & -\sqrt{1-a_i^2} \\
                \sqrt{1-a_i^2} & a_i  
            \end{pmatrix},
    \end{equation}
    and extend it to a controlled unitary operation on $S_1$ and $Q$:
    \begin{equation}
        \hat{C}_{V_i} = \ketbra{i}_{S_1}\otimes (V_i)_Q +\sum_{j\neq i} \ketbra{j}_{S_1}\otimes \id_Q.
    \end{equation}
    Then, we define an operator $W_i$ on $S_1$, such that $W_i\ket{i}_{S_1}=\ket{\psi_i}_{S_2}$, acting on other basis elements such that $\bra{j}W_i^*W_i\ket{k}=\delta_{j,k}$\footnote{Notice that this is only possible if $\dim_{S_1}\leq\dim_{S_2}$. However, one can always embed the space $\hilmaparg{S_2}$ into a larger space with dimension $\dim_{S_1}$ and the proof follows unchanged.
    }. Thus, $W_i$ is an isometry. 
    Further consider 
    \begin{equation}
        U_i = (W_i\otimes \id_Q)\hat{C}_{V_i}.
    \end{equation}
    Then, we have
    \begin{equation}
      \centertikz{
            \node[color=blue] at (-1.25,0.75){$U_i=$};
            \node[draw=blue, minimum height = 70pt, minimum width = 95pt,fill = blue!20] at (1.,0.75) {};
            \node[draw=black, ellipse,minimum height = 20pt, minimum width = 20pt] at (0,0) {\small$i$};
            \node[draw=black, minimum height = 20pt, minimum width = 30 pt] at (0,1.5) {$W_i$};
            \node[draw=black, minimum height = 20pt, minimum width = 29 pt] at (2,0) {$V_i$};
            \draw (0,-1.) -- (0,-.35);
            \draw (0,.35) -- (0,1.15);
            \draw (0,2.25) -- (0,1.85);
            \draw (0.35,0) -- (1.5,0);
            \draw (2,-0.35) -- (2,-0.35-0.25);
            \draw (2,-0.35-0.25) to [in =270, out=270](3,-0.35-0.25);
            \draw (3,-0.35-0.25) -- (3,2);
            \draw (2,2) to [in =90, out=90](3,2);
            \draw (2,0.35) -- (2,2);
            \node[anchor = west, text width = 8cm] at (3,0.75){
            \begin{equation*}
            \begin{split}
                &\sum_{i} (W_i)_{S_1} \bra{\Omega^+}_{Q'Q}(\ketbra{i}_{S_1}\otimes (V_i)_{Q}) \ket{\Omega^+}_{QQ'} \ket{j}_{S_1}\\
                &=\sum_{k} (W_j)_{S_1}\ket{j}_{S_1}\otimes
                 \bra{k}V_j\ket{k}\\
                 &= 2a_j \ket{\psi_j}_{S_2} = 2\hat{C}_j\ket{j}_{S_1}
            \end{split}
            \end{equation*}
            };
       } .        
    \end{equation}

    Thus, the isometry
    \begin{equation}
        V_{S_1QA}=\sum_i\left[(W_i)_{S_1}\otimes \id_Q\right](\hat{C}_{V_i})_{QS_1}\otimes \ketbra{i}_A,
    \end{equation}
    satisfies
    \begin{equation}
      \begin{split}
          &\bra{\ubellstate}_{AQA'Q'}V\ket{\ubellstate}_{AQA'Q'}\ket{i}_{S_1}\\         
        &=\bra{\ubellstate}_{QQ'}\left[(W_i)_{S_1}\otimes \id_Q\right](\hat{C}_{V_i})_{QS_1}\ket{\ubellstate}_{QQ'}\ket{i}_{S_1}\\
        &=2a_i\ket{\psi_i}_{S_2}=\frac{2}{\beta} \hat{C}\ket{i}_{S_1}
      \end{split},
    \end{equation}
    since the above holds for all elements of an orthonormal basis $\ket{i}$ of the input, it holds for all input states (by linearity) and the equation in the result statement follows from above, up to grouping together the ancilla systems and relabelling $\alpha=\frac{\beta}{2}$. 
\end{proof}

We now recall the lemma and finally provide its proof.
\TNtoCMgen*
\begin{proof}
For simplicity let us leave out the index $\vertname$ and consider a density operator $P\in\linops(\hilmaparg{S_1}\otimes\hilmaparg{S_2})$, where we denoted the input and output spaces appearing in the Choi-Jamiołkowski respectively as $S_1$ and $S_2$.

    Since $P$ is a density operator, we can write its spectral decomposition as $P=\sum_k p_k \ketbra{\Psi_k}$ where $\sum_kp_k =1$ and $p_k \geq 0$ and $\braaket{\Psi_k}{\Psi_{k'}} = \delta_{k,k'}$.
    Then using~\cref{lemma:inv_choi}, we have
    \begin{align}
    \begin{split}
    \CJin_{S_2|S_1}\left(P\right)&=\CJin_{S_2|S_1}\left(\sum_k p_k \ket{\Psi_k}\bra{\Psi_k}\right)\\
&=d_{S_1}^2 \sum_k p_k\left(\CJp_{S_2|S_1}(\ket{\Psi_k})\right)\otimes \left(\CJp_{S_2|S_1}(\ket{\Psi_k})\right)^\dagger\\
&\equiv d_{S_1}^2 \sum_k p_k C_k\otimes \left(C_k\right)^\dagger,
    \end{split}
\end{align}
where we renamed the linear operator from $\hilmap_{S_1}$ to $\hilmap_{S_2}$ as $C_k:=\CJp_{S_2|S_1}(\ket{\Psi_k})$. 

Each $C_k$ is a non-zero linear operator from $\hilmaparg{S_1}$ to $\hilmaparg{S_2}$, thus from~\cref{prop:fromJVS} it follows that for any $k$, there exists an isometry $V_k$ and an ancilla $\hilmaparg{A}$ (isomorphic to $\mathbb{C}^2\otimes\hilmaparg{S_1}$) and a real constant $\beta_k >0$, satisfying
\begin{equation}
\label{eq: PCTC_unitaries}
    C_k=\beta_k\bra{\ubellstate}_{AA'} V_k\ket{\ubellstate}_{AA'}.
\end{equation}

Thus, we have
\begin{equation}
\label{eq:frompuretomixed1}
\begin{split}
    \CJin_{S_2|S_1}\left(P\right)&= d_{S_1}^2 \sum_k p_kC_k\otimes \left(C_k\right)^\dagger\\
    &\equiv d_{S_1}^2 \sum_k p_k \beta_k^2 \bra{\Omega^+}_{AA'} V_k\ket{\Omega^+}_{AA'}\otimes \bra{\Omega^+}_{AA'} (V_k)^\dagger\ket{\Omega^+}_{AA'}\\
    &= d_{S_1}^2 \sum_k  \bra{\Omega^+}_{AA'} E_k\ket{\Omega^+}_{AA'}\otimes \bra{\Omega^+}_{AA'} (E_k)^\dagger\ket{\Omega^+}_{AA'}
\end{split}
\end{equation}
where we defined the operators $E_k = \sqrt{p_k}\beta_k V_k$. 
Let us then consider the map
\begin{equation}
    \chanmap_{S_2A|S_1A}(\rho_{S_1A})=\gamma^{-1} \sum_k E_k(\rho_{S_1A}) (E_k)^\dagger,
\end{equation}
where $\gamma = \sum_k p_k \beta_k^2$, so that $\chanmap_{S_2A|S_1A}$ is CPTP. Indeed, we have
\begin{equation}
    \gamma^{-1}\sum_k (E_k)^\dagger E_k = \gamma^{-1} \sum_k p_k \beta_k^2 (V_k)^\dagger V_k =  \gamma^{-1} \sum_k p_k \beta_k^2 \id_{S_1A} = \id_{S_1A}.
\end{equation}

Introducing this map in~\cref{eq:frompuretomixed1}, we get that for all $\rho_{S_1}\in\linops(\hilmaparg{S_1})$
\begin{equation} 
\label{eq:cycleandketbra}
\begin{split}
    \CJin_{S_2|S_1}\left(P\right)(\rho_{S_1})&= d_{S_1}^2 \bra{\Omega^+}_{AA'} \sum_k  E_k\left(\rho_{S_1} \otimes\ketbra{\Omega^+}_{AA'}\right)(E_k)^\dagger\ket{\Omega^+}_{AA'}\\
    &= d_{S_1}^2\gamma \sum_{i,j} \bra{i}_{A}\chanmap_{S_2A|S_1A}\left(\rho_{S_1} \otimes\ketbraa{i}{j}_{A}\right)\ket{j}_{A}\\
    &= \alpha \selfcycle_{A}\left(\chanmap_{S_2A|S_1A}\right)
    ,
\end{split}
\end{equation}
where we defined $\alpha = d_{S_1}^2\gamma \geq 0$
 We remark that this is analogous to the construction of Section 5.3 of~\cite{jean_2025}.

 Thus, for all $\vertname\in\vertsetud$, the density operator $\projarg{\vertname}\in\linops\left(\hilmaparg{\inedges{\vertname}}\otimes\hilmaparg{\outedges{\vertname}}\right)$, satisfies
 \begin{equation}
     \CJin_{\outedges{\vertname}|\inedges{\vertname}}\left(\projarg{\vertname}\right)=\alpha_\vertname\selfcycle_{A_\vertname}(\chanmaparg{\vertname}),
 \end{equation}
for a real constant $\alpha_v >0$ and a CPTP map $\chanmaparg{\vertname}:\linops(\hilmaparg{\inedges{\vertname}}\otimes\hilmaparg{A_v})\mapsto\linops(\hilmaparg{\outedges{\vertname}}\otimes\hilmaparg{A_v})$.
\end{proof}

\subsection*{Proofs of the~\cref{lemma:composition_maps,lemma: compositiongen,lemma: TNtoCMgeneral_correl}}
\compmaps*
\begin{proof}
    The tensor network obtained after applying $\mapCMtoTN$ is defined on the graph $\graphnameud$ with the same vertex set of $\graphnamedir$, $\vertsetud=\vertsetdir$, and same edges but without direction, i.e., if $\edgename=(v,w)$ is in $\edgesetdir$, then the set $\bar{\edgename}=\{v,w\}$ is in $\edgesetud$. Then, we have
    \begin{equation}
        \mapCMtoTN(\cm)=\tn = \left(\graphnameud,\{\hilmaparg{e}\}_{\edgename\in\edgesetdir}, \{\CJ_{\outedges{\vertname}|\inedges{\vertname}}(\chanmaparg{\vertname})\}_{\vertname\in\vertsetud}\right)
    \end{equation}
    where the general map $\mathcal{T}^{\vertname}$ is a CPTP map as $\overtset=\emptyset$. Then, by applying
    $\mapTNtoCM_{D_{\graphnamedir}}$ we get:
    \begin{equation}
        \begin{split}
            \mapTNtoCM_{D_{\graphnamedir}}\circ\mapCMtoTN(\cm) =& \mapTNtoCM_{D_{\graphnamedir}}\left(\graphnameud,\{\hilmaparg{e}\}_{\edgename\in\edgesetdir}, \{\CJ_{\outedges{\vertname}|\inedges{\vertname}}(\chanmaparg{\vertname})\}_{\vertname\in\vertsetud}\right)\\
            =&\left(\graphnamedir'_{D_{\graphnamedir}},\{\hilmaparg{e}\}_{\edgename\in\edgesetud}, \{\CJ^{-1}_{\outedges{\vertname}|\inedges{\vertname}}\CJ_{\outedges{\vertname}|\inedges{\vertname}}(\chanmaparg{\vertname})\}_{\vertname\in\vertsetud}\right)\\
            =&\left(\graphnamedir,\{\hilmaparg{e}\}_{\edgename\in\edgesetdir}, \{\chanmaparg{\vertname}\}_{\vertname\in\vertsetud}\right) = \cm,
        \end{split} 
    \end{equation}
    where we used that by definition $\graphnamedir'_{D_{\graphnamedir}}$ is the same as $\graphnamedir$.
    The proof follows analogously for the other direction.
\end{proof}

\compmapsgen*
\begin{proof}
Points $1.$ and $2.$ follow immediately from~\cref{def: TNtoCMgeneral,def:mapCMtoTNv2}. For proving $3.$, let us first define the set of vertices for which~\cref{eq: trace_condition} is not satisfied, i.e., $\vertname\in\vertset^\neg$ if and only if
    \begin{equation}
        \Tr_{\outedges{\vertname}}[\projarg{\vertname}] \neq \frac{\id_{\inedges{\vertname}}}{d_{\inedges{\vertname}}}.
    \end{equation}
For all vertices $\vertname\in\vertsetud\setminus\vertset^\neg$, we have $\edgesetud(v)=\emptyset$ as $\incedges{v}_{\graphnameud}=\incedges{v}_{\graphnameud^{\circlearrowleft}}$, the TP condition is satisfied by definition for $P_v$ with the chosen directions. Thus, $\CJin_{\outedges{\vertname}|\inedges{\vertname}} (\projarg{\vertname})$ is CPTP and we have:
\begin{equation}
    \projarg{v}^{\circlearrowleft} = \CJ_{\outedges{\vertname}|\inedges{\vertname}}\circ \CJin_{\outedges{\vertname}|\inedges{\vertname}} (\projarg{\vertname}) = \projarg{\vertname}.
\end{equation}

For all vertices $\vertname\in\vertset^\neg$, the set $\edgesetud(v)$ corresponds to the ancillary systems introduced such that (see the proof of~\cref{prop:fromJVS} in~\Cref{app:proofs}):
\begin{equation}
    \selfcycle_{A_\vertname}(\chanmaparg{\vertname})=\frac{1}{\alpha_\vertname}\CJin_{\outedges{\vertname}|\inedges{\vertname}}\left(\projarg{\vertname}\right),
\end{equation}
thus we have:
\begin{equation}
    \begin{split}
        &\bra{\ubellstate}_{A_vA'_v}\projarg{\vertname}^{\circlearrowleft}\ket{\ubellstate}_{A_vA'_v}\\
        =& \bra{\ubellstate}_{A_vA'_v}\CJ_{\outedges{\vertname}A_v|\inedges{\vertname}A_v}(\chanmaparg{\vertname})\ket{\ubellstate}_{A_vA'_v}\\
        =&\frac{1}{d_{A_v}} \bra{\ubellstate}_{A_vA'_v}(\chanmaparg{\vertname})(\ketbra{\ubellstate}_{A_vA'_v}\otimes \ketbra{\bellstate}_{\incedges{\vertname}_{\graphnameud}})\ket{\ubellstate}_{A_vA'_v}\\
        =&\frac{1}{d_{A_v}} \selfcycle_{A_v}(\chanmaparg{\vertname})\left(\ketbra{\bellstate}_{\incedges{\vertname}_{\graphnameud}}\right)\\
        =&\frac{1}{d_{A_v}} \CJ_{\outedges{\vertname}|\inedges{\vertname}}\left[\selfcycle_{A_v}(\chanmaparg{\vertname})\right]\\
        =&\frac{1}{d_{A_v}\alpha_v} \CJ_{\outedges{\vertname}|\inedges{\vertname}}\circ \CJin_{\outedges{\vertname}|\inedges{\vertname}}\left(\projarg{\vertname}\right)=\frac{1}{d_{A_v}\alpha_v} \projarg{\vertname},
    \end{split}
\end{equation}
where between the second and third line we used the definition of the Choi-Jamiolkowski isomorphism and that the space $\hilmaparg{\incedges{\vertname}_{\graphnameud}}$ is isomorphic to  $\hilmaparg{\inedges{\vertname}_{\graphnameud}}\otimes \hilmaparg{\outedges{\vertname}_{\graphnameud}}$, between the third and fourth line we used~\cref{lemma: cycle_PCTC}, between the fourth and fifth line we used the definition of the Choi-Jamiolkowski isomorphism and finally~\cref{eq: cycle_TNtoCMgen} to conclude the proof.
\end{proof}

\contselfcyclegen*
\begin{proof}
    Let us first consider the case where $\mapCMtoTN(\cm)=\tn$ and let us recall the spaces involved in the maps and states: the marginalised total map $\etot_{\edgeset_{\textup{out}}|\edgeset_{\textup{in}}}\in\linops(\linops(\hilmaparg{\edgeset_{\textup{in}}}),\linops(\edgeset_{\textup{out}}))$, and the total and link states $\rhop\in\linops(\hilmaparg{\edgeset_{\textup{out}}}\otimes\hilmaparg{\edgeset'_{\textup{in}}})$, $\linkstate\in\hilmaparg{\edgeset_{\textup{out}}}\otimes\hilmaparg{\edgeset'_{\textup{in}}}$. If $\tn=\mapCMtoTN(\cm)$ by~\cref{def:mapCMtoTNv2}, we have 
    \begin{equation}
        \rhop = \left(\etot_{\edgeset_{\textup{out}}|\edgeset_{\textup{in}}}\otimes \mathcal{I}_{\edgeset'_{\textup{in}}}\right)\ketbra{\bellstate}_{\edgeset_{\textup{in}},\edgeset'_{\textup{in}}}
    \end{equation}
    and the link state is
     \begin{equation}
        \ket{\bellstate}_{\edgeset_{\textup{in}},\edgeset'_{\textup{in}}} =\frac{1}{\sqrt{d_E}} \sum_{i} \ket{i}_{\edgeset_{\textup{in}}}\otimes \ket{i}_{\edgeset'_{\textup{in}}}.
    \end{equation}
    Thus, we have:
    \begin{equation}
        \begin{split}
             &\bra{L}\rhop\ket{L} \\
             &= \frac{1}{d_E} \sum_{i,j,k,l} \left(\bra{i}_{\edgeset_{\textup{out}}}\otimes\bra{i}_{E'_{\textup{in}}}\right)\left(\etot_{E_{\textup{out}}|E_{\textup{in}}}(\ketbraa{j}{k}_{E_{\textup{in}}})\otimes\ketbraa{j}{k}_{E'_{\textup{in}}}\right)\left(\ket{l}_{E_{\textup{out}}}\otimes\ket{l}_{E'_{\textup{in}}}\right)\\
             &= \frac{1}{d_E} \sum_{j,k} \bra{j}_{E_{\textup{out}}}\etot_{E_{\textup{out}}|E_{\textup{in}}}(\ketbraa{j}{k}_{E_{\textup{in}}})\ket{k}_{E_{\textup{out}}}\\
             &= \frac{1}{d_E} \selfcycle(\etot_{E_{\textup{out}}|E_{\textup{in}}}).\\
        \end{split}
    \end{equation}

    In the case of $\cm=\mapTNtoCMgen_D(\tn)$, we can define the CP map $\tilde{\chanmap}^v\equiv\selfcycle_{A_\vertname}(\chanmaparg{\vertname})$, which satisfies
\begin{equation}
\CJ_{\outedges{\vertname}|\inedges{\vertname}}\left(\tilde{\chanmap}^v\right)=\CJ_{\outedges{\vertname}|\inedges{\vertname}}\left(\selfcycle_{A_\vertname}(\chanmaparg{\vertname})\right) = \CJ_{\outedges{\vertname}|\inedges{\vertname}}\circ \CJin_{\outedges{\vertname}|\inedges{\vertname}} \left(\frac{P_v}{\alpha_v}\right)=
\frac{\projarg{\vertname}}{\alpha_\vertname}, 
\end{equation}
where we used that $\selfcycle_{A_\vertname}(\chanmaparg{\vertname})=\frac{1}{\alpha_\vertname}\CJin_{\outedges{\vertname}|\inedges{\vertname}}\left(\projarg{\vertname}\right)$ (see~\cref{eq: cycle_TNtoCMgen}).

Let us define $\tilde{\etot}=\bigotimes_{\vertname\in\vertsetud}\tilde{\chanmap}^v$, then we have
\begin{equation}
    \rhop = \tilde{\beta} \left(\tilde{\etot}_{\edgeset_{\textup{out}}|\edgeset_{\textup{in}}}\otimes \mathcal{I}_{\edgeset'_{\textup{in}}}\right)\ketbra{\bellstate}_{\edgeset_{\textup{in}},\edgeset'_{\textup{in}}},
\end{equation}
where we defined $\tilde{\beta}=\left(\prod_v \alpha_v\right)$,
and, the link state
    \begin{equation}
        \linkstate_{\edgeset_{\textup{out}},\edgeset'_{\textup{in}}} = \sum_{j} \ket{j}_{\edgeset_{\textup{out}}}\otimes\ket{j}_{\edgeset'_{\textup{in}}}.
    \end{equation}
    Thus, following the same steps as above, we have $\bra{L}\rhop\ket{L} = \frac{\tilde{\beta}}{d_E} \selfcycle(\tilde{\etot}_{E_{\textup{out}}|E_{\textup{in}}})$, thus introducing $\beta=\frac{\tilde{\beta}}{d_E}$, we have:
    \begin{equation}
        \begin{split}
             \bra{L}\rhop\ket{L} &= \beta \selfcycle(\tilde{\etot}_{E_{\textup{out}}|E_{\textup{in}}})\\
             &= \beta \selfcycle(\bigotimes_{\vertname\in\vertset}(\selfcycle_{A_\vertname}(\chanmaparg{\vertname}))\\
             &= \beta\selfcycle(\bigotimes_{\vertname\in\vertset}\chanmaparg{\vertname}) \equiv \beta\selfcycle(\etot),
        \end{split}
    \end{equation}
    where we recall that the maps $\chanmaparg{\vertname}$ are the CPTP maps associated to vertices in the causal model.
\end{proof}
\subsection*{Proof of~\cref{lem:signalling_CI}}
\ciassignalling*
\begin{proof}
Let $\cm=\mapTNtoCMgen_D(\tn)$ (the case where $\tn=\mapCMtoTN(\cm)$ follows straightforwardly from this proof and~\cref{lemma: compositiongen}).
   Consider a time-reversible intervention labelled by $a\in\outcomemaparg{A}$ and $b\in\outcomemaparg{B}$, by~\cref{def:statisticsint} we have that the observed statistic of the intervention is
    \begin{equation}
    \label{eq: ciassignalling1}
       \prob(y|a,b)_I = \frac{\selfcycle\left((\instrmapsetting{\labarg{B}}{y|b}\otimes \mathcal{U}_{\labarg{A}}^{a})\circ \bar{\etot}_{\edgeset_{\textup{out}}A_V|\edgeset_{\textup{in}}A_V}\right)}{\sum_{y'}\selfcycle\left((\instrmapsetting{\labarg{B}}{y'|b}\otimes \mathcal{U}_{\labarg{A}}^{a})\circ \bar{\etot}_{\edgeset_{\textup{out}}A_V|\edgeset_{\textup{in}}A_V}\right)},
   \end{equation}
   where we defined $\bar{\etot}:\linops(\hilmaparg{\edgeset_{\textup{in}}}\otimes\hilmaparg{ A_V})\to\linops(\hilmaparg{\edgeset_{\textup{out}}}\otimes\hilmaparg{A_V})$, as the marginalised total map of $\cm$, $\hilmaparg{\edgeset_{\textup{in}}}\cong \hilmaparg{\edgeset_{\textup{out}}}\cong\hilmaparg{\edgesetud}$ and we grouped the ancillas arising according to~\cref{def: TNtoCMgeneral} in the system $A_V$ with $\hilmaparg{A_V}=\bigotimes_{\vertname\in\vertsetud}\hilmaparg{A_\vertname}$.
   
On the other hand, the total state $\rhop\in\linops(\hilmaparg{\edgeset_{\textup{out}}}\otimes\hilmaparg{\edgeset_{\textup{in}}'})$ of the tensor network satisfies (\cref{lemma: TNtoCMgeneral,def: TNtoCMgeneral}):
\begin{equation}
\begin{split}
    \CJin_{\edgeset_{\textup{out}}|\edgeset_{\textup{in}}}\left(\rhop\right) &= \bigotimes_{\vertname\in\vertsetud}\CJin_{\outedges{\vertname}|\inedges{\vertname}}\left(\projarg{\vertname}\right)\\
    &=\bigotimes_{\vertname}\alpha_\vertname\selfcycle_{A_\vertname}(\chanmaparg{\vertname}_{\outedges{\vertname}A_v|\inedges{\vertname}A_v})\\
    &=(\prod_{v}\alpha_v)\selfcycle_{A_\vertset}(\bigotimes_{\vertname}\chanmaparg{\vertname}_{\outedges{\vertname}A_v|\inedges{\vertname}A_v})\\
    &=(\prod_{v}\alpha_v)\selfcycle_{A_\vertset}(\bar{\etot}_{\edgeset_{\textup{out}}A_V|\edgeset_{\textup{in}}A_V}) \equiv \alpha \selfcycle_{A_\vertset}(\bar{\etot}_{\edgeset_{\textup{out}}A_V|\edgeset_{\textup{in}}A_V}),
\end{split}
\end{equation}
where we introduced the constant $\alpha=(\prod_{v}\alpha_v)$ and grouped the ancillas in the system $A_V$ with Hilbert space $\hilmaparg{A_V}=\bigotimes_{\vertname\in\vertsetud}\hilmaparg{A_\vertname}$. Thus, using the fact that the Choi-Jamiołkowski mapping is an isomorphism, we have:
\begin{equation}
\label{eq:proofcorr1}
\begin{split}
    \frac{\rhop}{\alpha} &= \; \CJ_{\edgeset_{\textup{out}}|\edgeset_{\textup{in}}}\left[\selfcycle_{A_\vertset}(\bar{\etot}_{\edgeset_{\textup{out}}A_V|\edgeset_{\textup{in}}A_V})\right]\\
    &= \selfcycle_{A_\vertset}(\bar{\etot}_{\edgeset_{\textup{out}}A_V|\edgeset_{\textup{in}}A_V}) (\ketbra{\bellstate}_{\edgeset_{\textup{in}}\edgeset_{\textup{in}}'}).
\end{split}
\end{equation}
Let us now consider the correlation function where we let the unitary and Hermitian operator act on the $\edgeset_{\textup{out}}$ subsystem of $\rhop$ (see the discussion in~\Cref{app:clarification on causal influence} for further details):
\begin{equation}
    \begin{split}
        &\bra{L}(U^a_{\labarg{A}_\textup{out}}\otimes O^{y|b}_\labarg{B_\textup{out}})\rhop ({U^a_{\labarg{A_\textup{out}}}}^\dagger\otimes {O^{y|b}_\labarg{B_\textup{out}}}^\dagger)\ket{L}\\
        &\equiv \bra{L}(\mathcal{U}^a_{\labarg{A}_\textup{out}}\otimes \mathcal{O}^{y|b}_\labarg{B_\textup{out}})(\rhop) \ket{L}\\
        &= \alpha\;\bra{L}(\mathcal{U}^a_{\labarg{A}_\textup{out}}\otimes \mathcal{O}^{y|b}_\labarg{B_\textup{out}})\left[\selfcycle_{A_\vertset}(\bar{\etot}_{\edgeset_{\textup{out}}A_V|\edgeset_{\textup{in}}A_V}) (\ketbra{\bellstate}_{\edgeset_{\textup{in}}\edgeset_{\textup{in}}'})\right] \ket{L}\\
        &= \alpha\;\bra{\ubellstate}_{\edgeset_{\textup{out}}\edgeset_{\textup{in}}'}(\mathcal{U}^a_{\labarg{A}_\textup{out}}\otimes \mathcal{O}^{y|b}_\labarg{B_\textup{out}})\left[\selfcycle_{A_\vertset}(\bar{\etot}_{\edgeset_{\textup{out}}A_V|\edgeset_{\textup{in}}A_V}) (\ketbra{\bellstate}_{\edgeset_{\textup{in}}\edgeset_{\textup{in}}'})\right]\ket{\ubellstate}_{\edgeset_{\textup{out}}\edgeset_{\textup{in}}'}\\
        &= \frac{\alpha}{d_{\edgesetud}} \selfcycle_{\edgesetud}\left[(\mathcal{U}^a_{\labarg{A}_\textup{out}}\otimes \mathcal{O}^{y|b}_\labarg{B_\textup{out}})\selfcycle_{A_\vertset}(\bar{\etot}_{\edgeset_{\textup{out}}A_V|\edgeset_{\textup{in}}A_V})\right]\\
        &= \frac{\alpha}{d_{\edgesetud}} \selfcycle_{\edgesetud}\circ\selfcycle_{A_\vertset}\left[(\mathcal{U}^a_{\labarg{A}_\textup{out}}\otimes \mathcal{O}^{y|b}_\labarg{B_\textup{out}})\circ\bar{\etot}_{\edgeset_{\textup{out}}A_V|\edgeset_{\textup{in}}A_V}\right] \\
        &= \frac{\alpha}{d_{\edgesetud}} \selfcycle\left[(\mathcal{U}^a_{\labarg{A}_\textup{out}}\otimes \mathcal{O}^{y|b}_\labarg{B_\textup{out}})\circ\bar{\etot}_{\edgeset_{\textup{out}}A_V|\edgeset_{\textup{in}}A_V}\right],
    \end{split}
    \end{equation}
   where we used~\cref{eq:proofcorr1} between the second and third line, we expanded the link state according to~\cref{def:linktotalstate} between the third and fourth line, used~\cref{lemma: cycle_PCTC} between the fourth and fifth line (notice that the Hilbert spaces associated to $\edgeset_{\textup{out}}$ and $\edgeset_{\textup{in}}$ are isomorphic to $\hilmaparg{\edgesetud}$, so the cycle composition between them is well-defined and we denote it as $\edgesetud$), between the fifth and sixth line we used that the cycle composition on the ancillas $A_V$ commute with the unitary and Hermitian operations which are solely applied on subsystems of $\edgeset_{\textup{out}}$, finally we used that the two subsequent cycle compositions are equal to one global cycle composition.
    
    Thus, in~\cref{eq: ciassignalling1} we have
    \begin{equation}
    \label{eq:establishequallity}
    \begin{split}
        M(U^a:O^{y|b}|\mathcal{O}^b)&= \frac{\bra{L}(U^a_{\labarg{A}_\textup{out}}\otimes O^{y|b}_\labarg{B_\textup{out}})\rhop ({U^a_{\labarg{A_\textup{out}}}}^\dagger\otimes {O^{y|b}_\labarg{B_\textup{out}}}^\dagger)\ket{L}}{\sum_{y'}\bra{L}(U^a_{\labarg{A}_\textup{out}}\otimes O^{y'|b}_\labarg{B_\textup{out}})\rhop ({U^a_{\labarg{A_\textup{out}}}}^\dagger\otimes {O^{y'|b}_\labarg{B_\textup{out}}}^\dagger)\ket{L}}\\
        &=\frac{\alpha}{d_\edgeset}\frac{d_\edgeset}{\alpha}\frac{\selfcycle\left[(\mathcal{U}^a_{\labarg{A}_\textup{out}}\otimes \mathcal{O}^{y|b}_\labarg{B_\textup{out}})\circ\bar{\etot}_{\edgeset_{\textup{out}}A_V|\edgeset_{\textup{in}}A_V}\right]}{\sum_{y'}\selfcycle\left[(\mathcal{U}^a_{\labarg{A}_\textup{out}}\otimes \mathcal{O}^{y'|b}_\labarg{B_\textup{out}})\circ\bar{\etot}_{\edgeset_{\textup{out}}A_V|\edgeset_{\textup{in}}A_V}\right]} = \prob(y|a,b)_I
    \end{split}
    \end{equation}
    where we denoted the instrument $\mathcal{O}^b=\{O^{y'|b}\}_{y'\in\outcomemaparg{Y}}$. Notice that the constant $\alpha$ is independent of the settings and outcome, $a$, $b$ and $y$, i.e., independent of the inserted unitary and instrument, but only depends on the total map $\bar{\etot}$ of the background causal model.
    There is no-signalling from $\labarg{A}$ to $\labarg{B}$ if and only if 
    \begin{equation}
         \prob(y|a,b)_I= \prob(y|a',b)_I \quad \forall y\in\outcomemaparg{Y},\;a,a' \in\outcomemaparg{A},\; b\in\outcomemaparg{B}
    \end{equation}
    i.e., if and only if
    \begin{equation}
        M(U^a:O^{y|b}|\mathcal{O}^b)=M(U^{a'}:O^{y|b}|\mathcal{O}^b) \quad \forall y\in\outcomemaparg{Y},\;a,a' \in\outcomemaparg{A},\; b\in\outcomemaparg{B}.
    \end{equation}
    Since $a$ labels the set of unitary channels and $b$ the set of Hermitian instruments, we can conclude that there is no-signalling if and only if the causal influence is zero.
\end{proof}
\subsection*{Proof of~\cref{lem:isomchan perfect tn}}
\perfecttnch*
\begin{proof}
    Let us write the state $\ket{\Psi}_{S\bar{S}}$ compactly as
    \begin{equation}
        \ket{\Psi}_{S\bar{S}} = \sum_{i_S,i_{\bar{S}}} T_{i_S,i_{\bar{S}}} \ket{i_S}_S \otimes \ket{i_{\bar{S}}}_{\bar{S}}
    \end{equation}
    and for clarity let us relabel the Hilbert space associated to $S$ as $S'$, where $\hilmaparg{S'}\cong\hilmaparg{S}=\textup{span}\{\ket{i_S}\}_{i_S}$, i.e., the computational basis elements are labelled by the same indices in $S$. We have:
    \begin{equation}
        \begin{split}
             &\CJin_{\bar{S}|S}(\ketbra{\Psi}_{S'\bar{S}}) (\rho_S)\\
             &=d_S \Tr_{S'}\left[ \sum_{i_S,j_S} \ketbraa{i_S}{j_S}_{S'} \bra{j_S}_S\rho_S\ket{i_S}_S \ketbra{\Psi}_{S'\bar{S}}\right]\\
             &=d_S \sum_{i_S,j_S} \bra{j_S}_S\rho_S\ket{i_S}_S \Tr_{S'} \left[ \ketbraa{i_S}{j_S}_{S'} \ketbra{\Psi}_{S'\bar{S}} \right]\\
             &= d_S \sum_{i_S,j_S} \bra{j_S}\Psi\rangle_{\bar{S}} \bra{j_S}_S\rho_S\ket{i_S}_S  \bra{\Psi}i_S\rangle_{\bar{S}}\\
             &= d_S \left(\sum_{j_S}\bra{j_S}\Psi\rangle_{\bar{S}} \bra{j_S}_S\right)\rho_S \left(\sum_{i_S}\ket{i_S}_S\bra{\Psi}i_S\rangle_{\bar{S}} \right).
        \end{split}
    \end{equation}
    We have
    \begin{equation}
        \sum_{j_S}\bra{j_S}\Psi\rangle_{\bar{S}} \bra{j_S}_S = \sum_{i_S, i_{\bar{S}}, j_S} T_{i_S,i_{\bar{S}}} \bra{j_S}_S\left(\ket{i_S}_S\otimes \ket{i_{\bar{S}}}_{\bar{S}}\right) \bra{j_S}_S  = \sum_{i_S, i_{\bar{S}}} T_{i_S,i_{\bar{S}}}\ket{i_{\bar{S}}}_{\bar{S}}\bra{i_S}_S =\frac{V}{C},
    \end{equation}
    for some constant $C$ and $V$ isometry.
    Thus, 
    \begin{equation}
             \frac{C^2}{d_S}\CJin_{\bar{S}|S}(\ketbra{\Psi}_{S\bar{S}}) (\rho_S)=  V\rho_S V^\dagger=\mathcal{V}_{\bar{S}|S}(\rho),
    \end{equation}
    which concludes the proof up to relabelling the constant.
\end{proof}
\section{Further clarifications on tensor 
networks}
\paragraph{Why density operators?}
We argued that if a tensor is characterised by a collection of real positive numbers summing to one, we can trivially embed it in a density operator. Now we wish to show that an arbitrary tensor can be embedded in a density operator up to an overall constant, which does not influence the results of normalised correlation functions.

Any tensor can be trivially embedded in a pure, possibly not normalised, state as follows:
\begin{equation}
    \{T_{i_1,i_2,\dots, i_n}\}_{i_1,\dots, i_n} \mapsto \ket{\psi_T} = \sum_{i_1,\dots, i_n} T_{i_1,i_2,\dots, i_n} \ket{i_1,i_2,\dots, i_n}
\end{equation}

Tensor network contractions can then be obtained through taking the inner product of the network with a maximally entangled state on the contracted systems. For instance, let
$\{c_i\}_i \mapsto \sum_i c_i \ket{i}:= \ket{\psi_c}_1$ and $\{M_{ij}\}_{ij}\mapsto \sum_{ij}M_{ij}\ket{ij}:= \ket{\psi_M}_{23}$, where $1$, $2$ and $3$ are arbitrary system labels. The tensor network contraction obtained from contracting the systems $1$ and $2$, $\sum_{i} c_iM_{ij} :=b_j$, is then associated to the state $\ket{\psi_{cM}}_{3}=\sum_{j} b_j\ket{j}_3$, which is equivalently obtained through
\begin{equation}
    \bra{\bellstate}_{1,2}\left(\ket{\psi_c}_1\otimes\ket{\psi_M}_{23}\right) = \sum_j b_j\ket{j}_3 = \ket{\psi_{cM}}.
\end{equation}

\paragraph{On which Hilbert spaces are the operators $U_{\region{A}}$ and $O_{\region{B}}$ acting?}
\label{app:clarification on causal influence}
The doubling of Hilbert spaces necessary to define the total state and link state leads to an ambiguity on where the operators $U_{\region{A}}$ and $O_{\region{B}}$ act. Indeed, the total and link states are defined on the Hilbert space $\Htot=\bigotimes_{\vertname\in\vertsetud}\bigotimes_{\edgename\in\incedges{\vertname}}\hilmap_{(\edgename,\vertname)}\cong \hilmaparg{\edgesetud}^{\otimes 2}$ thus there are two subspaces isomorphic to each $\hilmaparg{\region{A}}$ and $\hilmaparg{\region{B}}$. \Cref{def: causal influence old} does not specify on which of these spaces the unitary and Hermitian operator act on. 

We remark that for determining whether there is causal influence or not one can arbitrarily choose on which space to apply the unitary as long as the choice is consistent in all evaluations of the quantity $M$ for different unitaries and Hermitian operators. To show that, let us denote the two isomorphic spaces associated to the edges in $\region{A}$ as $\hilmaparg{\region{A}_1}$ and $\hilmaparg{\region{A}_2}$ and for $E'\subseteq E$ we denote
\begin{equation}
    \ket{\ubellstate}_{E'_1,E'_2} = \bigotimes_{\edgename\in E'} \ket{\ubellstate}_{e_1,e_2},
\end{equation}
where $e_1$ and $e_2$ stand for the two spaces isomorphic to $e$ in $\Htot$.
Since the link state is defined by the tensor product of un-normalised Bell states, we have for any unitary $U_\region{A}\in\linops(\hilmaparg{\region{A}})$:
\begin{equation}
\begin{split}
    U_{\region{A}_1}\linkstate &\equiv (U_{\region{A}_1}\otimes \id_{\region{A}_2})\left(\ket{\ubellstate}_{\region{A}_1,\region{A}_2} \otimes \bigotimes_{\edgename\in\edgesetud\setminus\region{A}} \ket{\ubellstate}_{(\edgesetud\setminus\region{A})_1,(\edgesetud\setminus\region{A})_2}\right)\\
    &=(\id_{\region{A}_1}\otimes U^T_{\region{A}_2})\left(\ket{\ubellstate}_{\region{A}_1,\region{A}_2} \otimes \bigotimes_{\edgename\in\edgesetud\setminus\region{A}} \ket{\ubellstate}_{(\edgesetud\setminus\region{A})_1,(\edgesetud\setminus\region{A})_2}\right)= U^T_{\region{A}_2}\linkstate.
\end{split}
\end{equation}
Similarly for any Hermitian operator $O_\region{B}$, by denoting the doubling on $\region{B}$ as $\hilmaparg{\region{B}_1}$ and $\hilmaparg{\region{B}_2}$, we have
\begin{equation}
\begin{split}
    O_{\region{B}_1}\linkstate = O^T_{\region{B}_2}\linkstate.
\end{split}
\end{equation}
Remark also that if $U$ is a unitary then also $U^T$ is unitary and if $O$ is Hermitian then also $O^T$ is Hermitian and such that $O^TO^T\leq\id$. 

Let us denote with $M^{(1)}_{\textup{CHQY}}$ and $M^{(2)}_{\textup{CHQY}}$ the quantities entering~\cref{def: causal influence old} where the operators in the correlation functions are applied respectively on $\hilmaparg{\region{A}_1}\otimes\hilmaparg{\region{B}_1}$ and $\hilmaparg{\region{A}_2}\otimes\hilmaparg{\region{B}_2}$, i.e.,
\begin{equation}
\begin{split}
    M^{(1)}_{\textup{CHQY}}(U_\region{A}:O_\region{B}) = \frac{\bra{L}(U_{\region{A}_1}\otimes O_{\region{B}_1})\rhop (U_{\region{A}_1}^\dagger\otimes O_{\region{B}_1}^\dagger)\ket{L}}{\bra{L}(U_{\region{A}_1}\otimes \id_{\region{B}_1})\rhop (U_{\region{A}_1}^\dagger\otimes \id_{\region{B}_1})\ket{L}},\\
    M^{(2)}_{\textup{CHQY}}(U_\region{A}:O_\region{B}) = \frac{\bra{L}(U_{\region{A}_2}\otimes O_{\region{B}_2})\rhop (U_{\region{A}_2}^\dagger\otimes O_{\region{B}_2}^\dagger)\ket{L}}{\bra{L}(U_{\region{A}_2}\otimes \id_{\region{B}_2})\rhop (U_{\region{A}_2}^\dagger\otimes \id_{\region{B}_2})\ket{L}}.
\end{split}
\end{equation}
Given the above equalities it holds $M^{(1)}_{\textup{CHQY}}(U_\region{A}:O_\region{B}) = M^{(2)}_{\textup{CHQY}}(U^T_\region{A}:O^T_\region{B})$.

There is non-zero causal influence from $\region{A}$ to $\region{B}$ according to $M^{(1)}$ if and only if there exist $U_\region{A}$, $U'_{\region{A}}$ and $O_{\region{B}}$ such that
\begin{equation}
     M^{(1)}_{\textup{CHQY}}(U_\region{A}:O_\region{B})\neq  M^{(1)}_{\textup{CHQY}}(U'_\region{A}:O_\region{B})
\end{equation}
thus if and only if 
\begin{equation}
     M^{(2)}_{\textup{CHQY}}(U^T_\region{A}:O^T_\region{B})\neq  M^{(2)}_{\textup{CHQY}}({U'_\region{A}}^T:O^T_\region{B})
\end{equation}
thus if and only if there is non-zero causal influence according to $M^{(2)}$.
In the argument above, we only considered cases where operators are applied on $\region{A}_1$ and $\region{B}_1$ or $\region{A}_2$ and $\region{B}_2$, but the same argument holds if the operations are applied on $\region{A}_1$ and $\region{B}_2$ or $\region{A}_2$ and $\region{B}_1$. 

Thus, as long as the same choice is applied for both correlation functions in $M$ and for all unitaries and Hermitian operators the result is equivalent lifting the ambiguity.

\section{Mapping cyclic causal models to acyclic models and implications for tensor networks}
\label{app:cyctoacyc}

Our results allow to map a tensor network to a generally cyclic causal model. In a previous work~\cite{Quantum_paper} we further defined a mapping from a cyclic causal model to a family of acyclic causal models with post-selection. Combining both mappings we can map a tensor network to a family of acyclic causal models. 

In this appendix, we first review the mapping from cyclic causal models to acyclic models with post-selection, and the equivalence between conditional probabilities of such acyclic models and with the probability rule in cyclic models of~\cref{prop:probs as self cycles}. Then, we show implications of this mapping for tensor networks, specifically in how one can interpret correlation functions. Such implications are fundamental in analysing causal influence from an operational perspective (see~\Cref{sec: new ci}).

\subsection*{Cyclic causal models as acyclic models with post-selection}
\label{sec:cyclictoacyclic}

Let us recall results from~\cite{Quantum_paper}. We construct a family of acyclic causal models which are obtained through replacing directed edges of a given causal model with so-called \textit{teleportation protocols}, defined as follows:

\begin{definition}[Post-selected teleportation protocol]
\label{def:ps teleportation}
    A post-selected teleportation protocol on finite dimensional spaces $\hilmaparg A\cong \hilmaparg C$ consists of a finite-dimensional Hilbert space $\hilmaparg B$ and a pair $(\telepovm_{AB},\telestate_{BC})$ where $\telepovm_{AB} \in \linops(\hilmaparg A \otimes \hilmaparg B)$ is a POVM element and $\telestate_{BC} \in \linops(\hilmaparg B \otimes \hilmaparg C)$ is a state, such that for all $\rho_A \in \linops(\hilmaparg A)$,
    \begin{align}
    \label{eq:ps teleportation condition}
        \Tr_{AB}[\telepovm_{AB} \rho_A \telestate_{BC}] = \teleprob \rho_C,
    \end{align}
    where $\teleprob \in (0,1]$ is the success probability of the post-selected teleportation protocol. We represent these as
\begin{equation}
    \centertikz{
    \begin{scope}[xscale=1.2]
        \node (A) at (0,0) {};
        \node at (-0.25,0.25) {$A$};
        \node (C) at (2.25,1.25) {};
        \node at (2.5,1.) {$C$};
        \node[prenode] (pre) at (1.5,0) {$R$};
        \node[psnode] (post) at (0.75,1.25) {$T$};
        \draw[qleg] (pre) -- node[midway,right] {$B$} (post);
        \draw[qleg] (A) -- (post);
        \draw[qleg] (pre) -- (C);
    \end{scope}
    }
\end{equation} 
\end{definition}
Such protocol allow to ``simulate'' an  identity channel from $A$ to the isomorphic system $C$. In the appendix of~\cite{Quantum_paper}, the set of $(\telepovm_{AB},\telestate_{BC})$ with this property and their success probabilities are fully characterised. A well-known example of such protocols is the ``Bell'' teleportation protocol~\cite{Bennett1993} with maximally entangled states as pre- and post-selection. Such protocol has the highest success probability and has already been used to simulate closed time-like curves~\cite{Lloyd_2011_2,Lloyd_2011}.

Then, we define the following family of acyclic graphs:

\begin{definition}[Family of acyclic causal graphs $\graphfamily\graphnamedir$]
\label{def: graph_family_v3}
 Given a causal graph $\graphnamedir = (\vertsetdir,\edgesetdir)$, we define an associated family $\graphfamily\graphnamedir$ of directed acyclic causal graphs, where each element $\graphtele \in \graphfamily\graphnamedir$ is obtained from the causal graph $\graphnamedir$ as follows.
 \begin{myitem}
     \item Choose any subgraph $\graphnamedir':=(\vertsetdir',\edgesetdir')$ of $\graphnamedir=(\vertsetdir,\edgesetdir)$ with $\vertsetdir'=\vertsetdir$ and $\edgesetdir'\subseteq \edgesetdir$, such that $\graphnamedir'$ is acyclic. 
 
     \item Include in $\graphtele$ all the vertices and edges of the subgraph $\graphnamedir'$ associated with the same vertex types (observed or unobserved) as the original causal graph $\graphnamedir$. 
     \item Denoting the set of so-called split edges $\splitedges{\graphtele}:=E\backslash E'$, for each edge $\edgearg{\vertname_i}{\vertname_i'} \in \splitedges{\graphtele}$, include in $\graphtele$, two vertices $\postvertname_i$ and $\prevertname_i$ and three edges  $\edgearg{\vertname_i}{\postvertname_i}$, $\edgearg{\prevertname_i}{\postvertname_i}$ and $\edgearg{\prevertname_i}{\vertname_i'}$, such that $\postvertname_i$ is observed and $\prevertname_i$ unobserved.     
 \end{myitem}
 This makes $\graphtele$ identical to $\graphnamedir$ up to replacing each split edge $\edgearg{\vertname_i}{\vertname_i'} \in \splitedges{\graphtele}\subseteq \edgesetdir$ with the teleportation structure:
     
        \begin{equation}
        \label{eq: acyc_graph_split_edge_v3}
            \centertikz{
            \begin{scope}[xscale=2.6,yscale=1.3]
                \node (in) at (0,0) {$\vertname_i$};
                \node (out) at (1.5,1) {$\vertname_i'$};
                \node[prenode] (pre) at (1,0) {$\prevertname_i$};
                \node[psnode] (post) at (0.5,1) {$\postvertname_i$};
                \draw[qleg] (pre) -- node[pos=0.7,right] {\small$\edgearg{\prevertname_i}{\postvertname_i}$} (post);
                \draw[qleg] (in) -- node[pos=0.4,right=0pt] {\small$\edgearg{\vertname_i}{\postvertname_i}$} (post);
                \draw[qleg] (pre) -- node[pos=0.4,right] {\small$\edgearg{\prevertname_i}{\vertname_i'}$} (out);
            \end{scope}
            }
        \end{equation}
We will refer to every $\graphtele \in \graphfamily\graphnamedir$ as a \textup{teleportation graph}. We denote the set of all post-selection vertices and the set of all pre-selection vertices in $\graphtele$ as $\psvertset:=\{\postvertname_i\}_{\edgearg{\vertname_i}{\vertname_i'} \in \splitedges{\graphtele}}$ and $\prevertset:=\{\prevertname_i\}_{\edgearg{\vertname_i}{\vertname_i'} \in \splitedges{\graphtele}}$.
\end{definition}

On each of these graphs we can define a causal model, yielding the following family of acyclic causal models:

\begin{definition}[Teleportation causal models on the graph family $\graphfamily{\graphnamedir}$]
\label{def:causal model of graphfamily_v3}
 Given a causal model, $\cmG$, associated with a causal graph $\graphnamedir$, we can define a corresponding family of causal models, by associating a causal model $\cm_{\graphtele}$ to each teleportation graph $\graphtele\in \graphfamily\graphnamedir$, as follows: 
 \begin{myitem}
     \item For every edge and every vertex present in both $\graphnamedir$ and $\graphtele$, the assigned Hilbert spaces, outcome sets, CPTP maps and POVMs are the same in the two causal models $\cmG$ and $\cmtele$.
     \item  For each edge $\edgearg{\vertname_i}{\vertname_i'} \in \splitedges{\graphtele}$ that was removed from $\graphnamedir$, the causal model $\cmtele$ has the following specifications: 
     
     \begin{enumerate}
         \item \vspace{-0.5cm} The Hilbert spaces associated to the edges are $\overbrace{
            \hilmaparg{\edgearg{\vertname_i}{\postvertname_i}}
            =
            \hilmaparg{\edgearg{\prevertname_i}{\vertname_i'}}
            }^{\textup{in $\cmtele$}}
            =
            \overbrace{\hilmaparg{\edgearg{\vertname_i}{\vertname_i'}}}^{\textup{in $\cm_{\graphnamedir}$}}$.
          \item The outcome set associated to the post-selection vertex consists of $\outcomemaparg{\postvertname_i} = \{\ok,\fail\}$, the outcome taking values in this set will be denoted as $\postoutcome_i$.
        \item The POVM element $\povmelarg{\ok}{\postvertname_i}$ of the post-selection vertex and the state $\statemaparg{\prevertname_i}$ of the pre-selection vertex form a post-selected teleportation protocol.
     \end{enumerate}
 \end{myitem}
 We will refer to each such $\cmtele$ as a \textup{teleportation causal model}.
\end{definition}

The family of causal models is acyclic, thus for each of them one can use the well-known acyclic probability rule (see~\cite{Henson_2014} and~\cite{Quantum_paper}) to evaluate probabilities.

Specifically, considering a causal model $\cm_{\graphnamedir}$ on a causal graph $\graphnamedir$ with observed vertices $\overtset$ and a teleportation causal model $\cmtele$ on $\graphtele\in\graphfamily\graphnamedir$ obtained from $\cm_{\graphnamedir}$, we define the success probability of post-selection in $\cmtele$ as
 \begin{align}
      \successprob := \probacyc\left(\{\postoutcome_{i} = \ok\}_{\postvertname_i\in\psvertset}\right)_{\graphtele}
            = \sum_{\outcome} \probacyc\left(
           \outcome,
            \{\postoutcome_{i} = \ok\}_{\postvertname_i\in\psvertset}
            \right)_{\graphtele},
\end{align}
where the summation is over $\outcome:=\{\outcome_\vertname\}_{\vertname\in\overtset}$.
If $\successprob>0$, we can evaluate the conditional probability
 \begin{align}
 \label{eq: gen_prob_rule}
     \probacyc\left(\outcome \middle| \{\postoutcome_i = \ok\}_{\postvertname_i\in\psvertset}\right)_{\graphtele}=\frac{\probacyc\left(\outcome, \{\postoutcome_i = \ok\}_{\postvertname_i\in\psvertset}\right)_{\graphtele}}{\successprob}.
 \end{align}

 The following results, proven in~\cite{Quantum_paper}, establish equality of such conditional probability within the set of teleportation graphs:

 \begin{prop}[Equivalent probabilities from different teleportation graphs]
    \label{lemma: acyclic_prob_same_v3}
    Let $\cm_{\graphnamedir}$ be a causal model on a causal graph $\graphnamedir$. 
For all $\graphnamedir_{\textsc{tp},1},\graphnamedir_{\textsc{tp},2} \in \graphfamily\graphnamedir$, it holds $\successprob^{1}\neq 0 \; \iff \; \successprob^{2}\neq0$ and 
  \begin{equation}
      \probacyc\left(\outcome \middle| \{\postoutcome_i = \ok\}_{\postvertname_i\in\psvertset^1}\right)_{\graphnamedir_{\textsc{tp},1}} = \probacyc\left(\outcome \middle| \{\postoutcome_i = \ok\}_{\postvertname_i\in\psvertset^2}\right)_{\graphnamedir_{\textsc{tp},2}},
  \end{equation}
 where $\successprob^{i}$ is the success probability and $\psvertset^i$ is the set of post-selection vertices of $\cm_{\graphnamedir_{\textsc{tp},i}}$.
\end{prop}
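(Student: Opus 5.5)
We will prove the statement by showing that, for every teleportation graph $\graphtele\in\graphfamily\graphnamedir$, the joint probability $\probacyc\left(\outcome,\{\postoutcome_i=\ok\}_{\postvertname_i\in\psvertset}\right)_{\graphtele}$ equals $\left(\prod_{i}\teleprob_i\right)\selfcycle(\etot^{\outcome})$ up to a constant. The constant depends only on the chosen protocols and not on $\outcome$. Here $\etot^{\outcome}$ is the total map of $\cm_{\graphnamedir}$ from \cref{eq:def etot}. Once this is established, both claims follow at once. Summing over $\outcome$ gives $\successprob=\left(\prod_i\teleprob_i\right)\sum_{\outcome}\selfcycle(\etot^{\outcome})$. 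Since every $\teleprob_i\in(0,1]$, we get $\successprob\neq0$ if and only if $\sum_\outcome\selfcycle(\etot^\outcome)\neq 0$, and this condition does not refer to the particular teleportation graph. Taking the ratio, the conditional probability in \cref{eq: gen_prob_rule} equals $\selfcycle(\etot^\outcome)/\sum_{\outcome'}\selfcycle(\etot^{\outcome'})$, which is the cyclic probability of \cref{prop:probs as self cycles}. That expression is independent of $\graphtele$, so the values for $\graphnamedir_{\textsc{tp},1}$ and $\graphnamedir_{\textsc{tp},2}$ coincide.

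To compute the acyclic probability, we use the standard acyclic rule: compose the CPTP maps and instruments of all vertices along a topological order of $\graphtele$ and take the trace. Because the subgraph $\graphnamedir'$ is acyclic, this composition is well defined. The split edges $\splitedges{\graphtele}$ are exactly the edges that are not composed sequentially. For each split edge $(\vertname_i,\vertname_i')$, the state $\telestate^{(i)}$ of the pre-selection vertex feeds $\vertname_i'$, and the POVM element $\telepovm^{(i)}$ of the post-selection vertex acts on the output of $\vertname_i$. We can therefore write the joint probability as
\begin{equation}
\Tr\Big[\bigotimes_i\telepovm^{(i)}\;\big(\etot^{\outcome}_{\textup{split}}\big)\big(\textstyle\bigotimes_i\telestate^{(i)}\big)\Big],
\end{equation}
where $\etot^\outcome_{\textup{split}}$ denotes $\etot^\outcome$ with all non-split edges composed (self-cycled) and the split edges left open as inputs and outputs. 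The key step is then to use the defining property \cref{eq:ps teleportation condition} one split edge at a time. Contracting $\telepovm^{(i)}$ with $\telestate^{(i)}$ through the intermediate system $B_i$ yields a map which, on the $A_i$ and $C_i$ legs, acts as $\teleprob_i$ times the identity channel from $A_i$ to $C_i$. Inserting an identity channel between the output and the input of the same edge and then tracing is exactly the loop or self-cycle composition of \cref{def:selfcycle_v3}, for instance in the form of \cref{lemma: cycle_PCTC}. After all split edges are handled, every edge has been self-cycled, which gives $\prod_i\teleprob_i\,\selfcycle(\etot^\outcome)$.

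We expect the main obstacle to be the rigorous version of this step. \Cref{eq:ps teleportation condition} is stated only for product inputs $\rho_A$ on $A$. We need it for operators that are correlated with the rest of the network and may be non-positive, such as $\ketbraa{k}{l}$ tensored with other systems. First we would extend the condition by linearity to all of $\linops(\hilmaparg A)$, and then tensor it with the identity on the remaining systems. This works because the left-hand side of \cref{eq:ps teleportation condition} is linear in $\rho_A$ and acts trivially on everything else. With this extension, the contraction of $(\telepovm,\telestate)$ equals $\teleprob\,\Tr_{A}[\ketbra{\ubellstate}\cdot]$-type wiring. This is the same structure as the maximally entangled pre- and post-selection in \cref{lemma: cycle_PCTC}, and the self-cycle identity follows. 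A second, minor point is to check that the ordering of the acyclic composition does not matter. This holds because the self-cycles on distinct edges commute and compose into the global $\selfcycle$, the same property used in the proof of \cref{lem:signalling_CI}.
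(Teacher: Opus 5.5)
Your argument is correct, but it is organised differently from the paper. The paper does not prove this proposition itself. It imports it from \cite{Quantum_paper}, states the self-cycle formula only for the maximal teleportation graph (\cref{prop:probs as self cycles_v3}), and then obtains the self-cycle expression for an arbitrary $\graphtele$ as a corollary of the two. You run the logic the other way, computing
\[
\probacyc\left(\outcome,\{\postoutcome_i=\ok\}_{\postvertname_i\in\psvertset}\right)_{\graphtele}=\Big(\prod_{e\in\splitedges{\graphtele}}\teleprob^{(e)}\Big)\,\selfcycle(\etot^{\outcome})
\]
directly for every teleportation graph.

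This one computation gives the proposition, \cref{prop:probs as self cycles_v3} and the subsequent corollary together. It also shows explicitly that the choice of graph and of teleportation protocols enters only through a multiplicative constant that cancels in the conditional, so protocol-independence comes for free.

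The cost is that you rely on two facts from the loop-composition calculus:
\begin{itemize}
\item sequential composition along the acyclic part equals self-cycling those edges, which follows from $\selfcycle_B(\mathcal{F}_{C|B}\otimes\mathcal{E}_{B|A})=\mathcal{F}_{C|B}\circ\mathcal{E}_{B|A}$;
\item self-cycles over disjoint edge sets combine into the global one.
\end{itemize}
The paper's ordering instead keeps graph-independence as a primary, purely operational statement about acyclic post-selected models, which is the form that $p$-separation and the robustness discussion of \Cref{app:cyctoacyc} use. It then needs the self-cycle expression for one graph only.

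One point in your key step should be stated more precisely. The real subtlety is not non-positivity, since the definition of a post-selected teleportation protocol already quantifies over all of $\linops(\hilmaparg A)$. It is that, when a split edge closes a cycle, the input to $\telepovm$ is itself produced from the output of $\telestate$, so it is not a fixed operator you can plug into \cref{eq:ps teleportation condition}.

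To handle this, expand $\telestate_{BC}=\sum_{k,l}\bra{k}_C\telestate_{BC}\ket{l}_C\otimes\ketbraa{k}{l}_C$ and read the condition componentwise:
\[
\Tr\left[\telepovm_{AB}\left(\rho_A\otimes\bra{k}_C\telestate_{BC}\ket{l}_C\right)\right]=\teleprob\bra{k}\rho_A\ket{l}.
\]
This gives $\Tr[\telepovm_{AB}\,(\mathcal{E}_{A|C}\otimes\mathcal{I}_B)(\telestate_{BC})]=\teleprob\,\selfcycle_{A,C}(\mathcal{E}_{A|C})$ for any linear $\mathcal{E}$, with spectator systems carried along by linearity. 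That is exactly the per-edge identity your argument needs.
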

A choice of teleportation graph, which we call maximal teleportation graph, $\graphnamedir_{\textsc{tp}}^{\textup{max}}\in\graphfamily{\graphnamedir}$ associated to a causal graph $\graphnamedir=\graphexpl$ consists in choosing $\splitedges{\graphnamedir_{\textsc{tp}}^{\textup{max}}} = \edgesetdir$. The following proposition links the conditional probability in the teleportation causal model over $\graphnamedir_{\textsc{tp}}^{\textup{max}}$ to the cycle composition of~\cref{prop:probs as self cycles}. 

\begin{prop}[General probability rule in terms of self-cycle composition]
    \label{prop:probs as self cycles_v3}
    Consider a causal model $\cm_{\graphnamedir}$ on a causal graph $\graphnamedir$ and the teleportation causal model $\cmtele$ constructed on the maximal teleportation graph $\graphnamedir_{\textsc{tp}}^{\textup{max}}\in\graphfamily{\graphnamedir}$. Considering the total maps of the causal model $\cmG$
    \begin{align}
        \etot_\outcome = \bigotimes_{\vertname\in\overtset} \measmaparg{\outcome_\vertname}{\vertname} \bigotimes_{u\in\uvertset} \chanmaparg{u},
    \end{align}
    it holds for the success probability of post-selection, $\successprob = \left(\prod_{e\in\edgesetdir} \teleprob^{(e)}\right) \sum_{x} \selfcycle(\etot_x)$, and if $\successprob>0$
    \begin{equation}
        \prob(x)_{\graphnamedir} = \frac{\selfcycle(\etot_x)}{\sum_x \selfcycle(\etot_x)} = \probacyc\left(\outcome \middle| \{\postoutcome_i = \ok\}_{\postvertname_i\in\psvertset^{\textup{max}}}\right)_{\graphnamedir_{\textsc{tp}}^{\textup{max}}},
    \end{equation}
    where $\teleprob^{(e)}$ is the teleportation probability associated to the post-selected teleportation protocol implemented instead of the edge $e\in\edgesetdir$. 
\end{prop}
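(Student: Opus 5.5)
The plan is to prove the two claims (the success-probability identity and the equality of conditional probabilities) by writing the acyclic probability of the maximal teleportation model explicitly as a composition of maps, and then collapsing each teleportation structure into a loop. Since every edge of $\graphnamedir$ is split in $\graphnamedir_{\textsc{tp}}^{\textup{max}}$, every original vertex has all its inputs supplied by pre-selection vertices $\prevertname_e$ and all its outputs sent to post-selection vertices $\postvertname_e$. The teleportation graph therefore has depth two:
\begin{itemize}
\item first, all states $\telestate^{(e)}_{B_eC_e}$ are prepared;
\item next, the product map $\etot_x$ acts on the collection of $C_e$ systems, identified with $\hilmaparg{\edgesetdir_{\textup{in}}}$;
\item finally, the POVM elements $\telepovm^{(e)}_{A_eB_e}$ are measured, with $A_e$ identified with the corresponding output in $\hilmaparg{\edgesetdir_{\textup{out}}}$.
\end{itemize}
The acyclic (Born) rule of~\cite{Henson_2014} then gives
\begin{equation}
\probacyc\bigl(x,\{\postoutcome_e=\ok\}_e\bigr)=\Tr\Bigl[\bigotimes_e \telepovm^{(e)}_{A_eB_e}\,\bigl(\etot_x\otimes \mathcal{I}_{B}\bigr)\Bigl(\bigotimes_e \telestate^{(e)}_{B_eC_e}\Bigr)\Bigr].
\end{equation}

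The key step is to rewrite this trace as a self-cycle. Expand $\etot_x$ linearly via its action on operator basis elements, $\etot_x(\ketbraa{k}{l}_C)$, and expand $\bigotimes_e\telestate^{(e)}$ in the basis of $C=\bigotimes_e C_e$. Then contract with $\bigotimes_e\telepovm^{(e)}$ and take the partial trace over $B$. The teleportation condition~\cref{eq:ps teleportation condition}, applied by linearity to the operators $\ketbraa{k}{l}_A$ (not only to states), says that the functional
\begin{equation}
X_A\mapsto \Tr_{AB}\bigl[\telepovm_{AB}\,X_A\,\telestate_{BC}\bigr]
\end{equation}
equals $\teleprob$ times the identification $X_A\mapsto X_C$. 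Equivalently, the composite of preparing $\telestate$ and applying $\telepovm$ acts as $\teleprob$ times a ``wire'' closing output $A_e$ back onto input $C_e$. Substituting this edge by edge yields the loop contraction $\sum_{k,l}\bra{k}\etot_x(\ketbraa{k}{l})\ket{l}$, which is exactly $\selfcycle(\etot_x)$ in~\cref{def:selfcycle_v3}, multiplied by $\prod_e\teleprob^{(e)}$.

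Summing the resulting identity over $x$ gives the formula for $\successprob$. If $\successprob>0$, dividing the joint probability by $\successprob$ makes the factors $\prod_e\teleprob^{(e)}$ cancel, leaving $\selfcycle(\etot_x)/\sum_x\selfcycle(\etot_x)$, which is $\prob(x)_{\graphnamedir}$ by~\cref{prop:probs as self cycles}.

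The main obstacle is justifying the replacement of $\telestate$ followed by $\telepovm$ with a rescaled wire when it sits inside a larger network, rather than acting on a standalone state $\rho_A$. The statement of the teleportation condition only concerns states on $A$ alone, whereas here $A_e$ is correlated with other systems and the network carries a $C_e$ input elsewhere. I would first extend the condition to arbitrary operators $X_A$ by linearity, and then to $X_{AR}$ on an extended system by tensoring with the identity on $R$, since the condition is an identity between linear maps. With this extension, the process of preparing $\telestate_{BC}$ and applying $\telepovm_{AB}$ is the CP map $X_{AR}\mapsto \teleprob\, X_{CR}$. Inserting it into the composition for each edge in turn, with all other systems playing the role of $R$, reproduces the link-state contraction computed in~\cref{lemma: cycle_PCTC}. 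Some care is needed with index ordering and transposes in matching $\ketbraa{k}{l}$ to $\bra{k}\cdot\ket{l}$. Because the identity is basis-independent, this bookkeeping should follow directly from the definition of the self-cycle.
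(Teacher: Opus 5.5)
Your proposal is correct. The paper gives no proof of this proposition; it is imported from \cite{Quantum_paper}. Your direct computation from the defining condition of a post-selected teleportation protocol is therefore a self-contained alternative, not a reconstruction of an in-paper argument. It also has a side benefit: only the numbers $\teleprob^{(e)}$ survive the contraction, so for the maximal teleportation graph the conditional distribution is automatically independent of which protocol is used on each edge. The paper likewise only cites that fact.

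The step worth tightening is your ``main obstacle'' paragraph. The composite of preparation and post-selection cannot literally be inserted as a sequential map $X_{AR}\mapsto\teleprob X_{CR}$. The reason is that $C_e$ is consumed by $\etot_x$ before $A_e$ is produced, so ``inserting'' it is itself a loop closure. What you actually need is the component form of the teleportation condition, obtained by sandwiching both sides with $\bra{k}_C\cdot\ket{l}_C$:
\begin{equation*}
\Tr_{AB}\bigl[\telepovm_{AB}\,\bigl(X_A\otimes \bra{k}_C\telestate_{BC}\ket{l}_C\bigr)\bigr]=\teleprob\,\bra{k}X_A\ket{l}\quad\text{for all operators } X_A .
\end{equation*}
This holds verbatim for $X$ on $A_e$ tensored with other systems, with $\bra{k}\cdot\ket{l}$ read as a partial matrix element. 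Expand
\begin{equation*}
\bigotimes_e\telestate_{B_eC_e}=\sum_{\vec k,\vec l}\Bigl(\bigotimes_e\bra{k_e}_{C_e}\telestate_{B_eC_e}\ket{l_e}_{C_e}\Bigr)\otimes\ketbraa{\vec k}{\vec l}_C .
\end{equation*}
Applying the component identity edge by edge then collapses your trace to
\begin{equation*}
\prod_e\teleprob^{(e)}\sum_{\vec k,\vec l}\bra{\vec k}\etot_x(\ketbraa{\vec k}{\vec l})\ket{\vec l}=\prod_e\teleprob^{(e)}\selfcycle(\etot_x).
\end{equation*}
The indices pair exactly as in \cref{def:selfcycle_v3}, no transposes appear, and there is no need to pass through \cref{lemma: cycle_PCTC} or the Bell protocol.
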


As a corollary of~\cref{lemma: acyclic_prob_same_v3,prop:probs as self cycles_v3} we have 
\begin{equation}
    \prob(x)_{\graphnamedir} = \frac{\selfcycle(\etot_x)}{\sum_x \selfcycle(\etot_x)} = \probacyc\left(\outcome \middle| \{\postoutcome_i = \ok\}_{\postvertname_i\in\psvertset}\right)_{\graphnamedir_{\textsc{tp}}}
\end{equation}
for all teleportation causal model $\cmtele$ on $\graphtele\in\graphfamily\graphnamedir$ obtained from $\cm_{\graphnamedir}$.

In addition, one can prove that the distribution does not depend on which teleportation protocol is used for each edge (see~\cite{Quantum_paper}).

\subsection*{Robustness of operational causal influence in tensor networks}

Now we use the connection between tensor networks to cyclic causal model and further cyclic causal models to acyclic causal models with post-selection to further motivate and show the robustness of the operational causal influence of~\cref{def: causal influence}.

In~\cref{lem:signalling_CI}, we have shown that operational causal influence is preserved as signalling in any image causal model though~\cref{def: TNtoCMgeneral}. Specifically, if we consider causal influence in a tensor network $\tn$ on $\graphnameud$ with link state $\rhop$, because of~\cref{lem:signalling_CI} we have~\cref{eq:establishequallity} (see the proof of~\cref{lem:signalling_CI}):
\begin{equation}
    \begin{split}
        M(U^a:O^{y|b}|\mathcal{O}^b)&= \frac{\bra{L}(U^a_{\labarg{A}_\textup{out}}\otimes O^{y|b}_\labarg{B_\textup{out}})\rhop ({U^a_{\labarg{A_\textup{out}}}}^\dagger\otimes {O^{y|b}_\labarg{B_\textup{out}}}^\dagger)\ket{L}}{\sum_{y'}\bra{L}(U^a_{\labarg{A}_\textup{out}}\otimes O^{y'|b}_\labarg{B_\textup{out}})\rhop ({U^a_{\labarg{A_\textup{out}}}}^\dagger\otimes {O^{y'|b}_\labarg{B_\textup{out}}}^\dagger)\ket{L}}\\
        &=\frac{\selfcycle\left[(\mathcal{U}^a_{\labarg{A}_\textup{out}}\otimes \mathcal{O}^{y|b}_\labarg{B_\textup{out}})\circ\bar{\etot}_{\edgeset_{\textup{out}}A_V|\edgeset_{\textup{in}}A_V}\right]}{\sum_{y'}\selfcycle\left[(\mathcal{U}^a_{\labarg{A}_\textup{out}}\otimes \mathcal{O}^{y'|b}_\labarg{B_\textup{out}})\circ\bar{\etot}_{\edgeset_{\textup{out}}A_V|\edgeset_{\textup{in}}A_V}\right]} = \prob(y|a,b)_I,
    \end{split}
    \end{equation}
    where $\bar{\etot}_{\edgeset_{\textup{out}}A_V|\edgeset_{\textup{in}}A_V}$ is the marginalised total map of the image causal model $\cm = \mapTNtoCMgen_D(\tn)$ for some edge directions $D\in\{0,1\}^{|\edgesetud|}$.

    As discussed briefly in~\Cref{sec:signalling}, for each pair of labels $a$ and $b$, one can construct a causal model on a slightly modified graph, which we call intervention graph, and was already used in the examples of~\Cref{sec:signalling}. Without entering the details of such a causal model, one can think of it as being identical to the original one up to introducing vertices associated to $\labarg{A}$ and $\labarg{B}$ and respectively associating to them the unitary channel $\mathcal{U}^a$ and the instrument $\mathcal{O}^b=\{O^{y|b}\}_{y\in\mathcal{Y}}$. Then, the probability distribution of this generally cyclic causal model equals the statistics of intervention $\prob(y|a,b)_I$ given in~\cref{def:statisticsint}.

    Given the results of~\cite{Quantum_paper} reviewed before, the probability $\prob(y|a,b)_I$ can be equivalently be expressed as a distribution of an acyclic causal model, with an additional post-selection. Specifically, as outlined in~\cref{def: graph_family_v3,def:causal model of graphfamily_v3}, we can construct a family of acyclic causal models, each on a graph $\graphtele\in\graphfamily{\graphnamedir_D^{\circlearrowleft}}$, where
    \begin{equation}
        \prob(y|a,b)_I = \probacyc\left(y \middle|a,b, \{\postoutcome_i = \ok\}_{\postvertname_i\in\psvertset}\right)_{\graphtele}
    \end{equation}
    where $\psvertset$ is the set of post-selection vertices in $\graphtele$ each associated with a binary random variable $T_i$. Notice that the setting labels $a$ and $b$ are just kept to keep track of the specific intervention which we consider.

    In addition, in~\Cref{sec: new ci} we interpreted the form of correlations functions in terms of a virtual binary measurement on the link state. Indeed, we defined a binary measurement $\{\ketbra{L},\id-\ketbra{L}\}$, associated with outcomes $\{L,\neg L\}$, and reinterpreted, for instance, the tensor network contraction as a probability of observing outcome corresponding to the POVM element $\ketbra{L}$ on the state $\rhop$, i.e., $\bra{L}\rhop\ket{L}\propto P(L)_{\rhop}$. This, allowed us to understand the operational causal influence as a conditional probability (\cref{eq:virtualmmt}):
    \begin{equation}\begin{split}
     M(U_\region{A}:O^y_\region{B}|\mathcal{O}_\region{B}) &= \frac{\bra{L}(U_\region{A}\otimes O^y_\region{B})\rhop (U_\region{A}^\dagger\otimes {O^y_\region{B}}^\dagger)\ket{L}}{\sum_{y}\bra{L}(U_\region{A}\otimes O^y_\region{B})\rhop (U_\region{A}^\dagger\otimes  {O^y_\region{B}}^\dagger)\ket{L}}\\&=\frac{P(L,y|\mathcal{O}_{\region{B}})_{U_\region{A}\rhop U^\dagger_{\region{A}}}}{\sum_{y}P(L,y|\mathcal{O}_{\region{B}})_{U_\region{A}\rhop U^\dagger_{\region{A}}}}
     = P(y|L,\mathcal{O}_{\region{B}})_{U_\region{A}\rhop U^\dagger_{\region{A}}}.
\end{split}
\end{equation}

Now we can make this argument precise. Indeed, we have:
   \begin{equation}
   \begin{split}
     P(y|L,\mathcal{O}_{\region{B}})_{U_\region{A}\rhop U^\dagger_{\region{A}}}&=M(U^a_\region{A}:O^{y|b}_\region{B}|\mathcal{O}_\region{B}) \\
     &=\prob(y|a,b)_I = \probacyc\left(y \middle|a,b, \{\postoutcome_i = \ok\}_{\postvertname_i\in\psvertset}\right)_{\graphtele}.
\end{split}
\end{equation}
Thus, showing that conditioning on the outcome $L$ of the virtual measurements can be precisely understood as conditioning on the successful post-selection in the family of acyclic teleportation causal models derived from $\cm=\mapTNtoCMgen_D(\tn)$. This holds for all teleportation graphs in the family and for a more general class of pre and post selections than just the Bell ones\footnote{More precisely, all those pairs of pre and post-selections that constitute a post-selected teleportation protocol, \cref{def:ps teleportation}.} as implied by previous robustness results of \cite{Quantum_paper} reviewed in \Cref{app:cyctoacyc}. For the present case, it highlights that the causal influence quantity $M$,  in any tensor network, has a robust operational interpretation in terms of conditional probabilities across a range of related acyclic causal models. 

\section{Review of graph separation theorems}
\label{app:graphsep}

    Here, we review some well-known results in causal modelling which allow to determine no-signalling relations directly from the structure of the graph. Specifically, we these are properties of directed graphs which allow to ``read-off'' conditional independencies of probabilities of a causal model on such graph directly from the connectivity of it. 

    For completeness, let us first define conditional independence, then introduce two graph separation properties and the corresponding theorems.

\begin{definition}[Conditional independence]
\label{def:conditional independence}
    Let $V$ be a non-empty finite set, and
    let $P(x)$ be a joint probability distribution over a set $X =
    \{X_v\}_{v \in V}$ of random variables.
    Let $V_1$, $V_2$ and $V_3$ be three disjoint subsets of $V$, with $V_1$ and $V_2$ being non-empty. 
    We denote the corresponding sets of random variables as $X_i = \{ X_v \}_{v\in V_i}$ taking values corresponding values as $x_i = \{ x_v \}_{v\in V_i}$ for $i \in \{1,2,3\}$.
    We say that $X_1$ is conditionally independent of $X_2$ given $X_3$ and denote it as $(X_1\indep X_2|X_3)_{P}$ if, for all $x_1, x_2,x_3$, it holds that $P (x_1,x_2|x_3)=P(x_1|x_3)P(x_2|x_3)$.
\end{definition}

\subsection*{The acyclic case: $d$-separation}
    \begin{definition}[$Z$-$d$-open path]

Let $\graphnamedir=(\vertsetdir,\edgesetdir)$ be a directed graph and $Z\subseteq \vertsetdir$, consider a path $\pi: \vertname_1 \leftrightharpoons \dots \leftrightharpoons \vertname_n$ in $\graphnamedir$. Then, we say that $\pi$ is $Z$-$d$-open if
\begin{myitem}
    \item the endpoints are not in $Z$, i.e., $v_1,v_n\notin Z$;
    \item every triple of adjacent nodes is of one of the following forms:
    \begin{enumerate}
        \item \textup{collider:} $v_{i-1}\rightarrow v_i \leftarrow v_{i+1}$ with $v_i\in Z$ or $Z\cap \textup{Desc}(v_i)\neq \varnothing$;
        \item \textup{fork:} $v_{i-1}\leftarrow v_i \rightarrow v_{i+1}$ with $v_i\notin Z$;
        \item \textup{chain:} $v_{i-1}\rightarrow v_i \rightarrow v_{i+1}$ or $v_{i-1}\leftarrow v_i \leftarrow v_{i+1}$  with $v_i\notin Z$.
    \end{enumerate}
\end{myitem}
A path that is not $Z$-$d$-open is called $Z$-$d$-blocked.    
\end{definition}

\begin{definition}[$d$-separation]
\label{def: d-sep}
    Let $\graphnamedir=(\vertsetdir,\edgesetdir)$ be a directed graph and $V_1$, $V_2$ and $V_3$ three disjoint subsets of vertices of $\graphnamedir$ with $V_1$ and $V_2$ being non-empty, then we say that $V_1$ and $V_2$ are $d$-separated conditioned on $V_3$, and denote it as $(V_1\perp^d V_2|V_3)_{\graphnamedir}$ if and only if for all $v_1\in V_1$ and $v_2\in V_2$ all paths with endpoints $v_1$ and $v_2$ are $V_3$-$d$-blocked.
\end{definition}

Then the next theorem follows from the theory-independent $d$-separation theorem of~\cite{Henson_2014}, when restricted to the case of quantum theory.

\begin{theorem}[$d$-separation theorem for acyclic graphs]
\label{theorem: dsep theorem}
Consider a directed acyclic graph $\graphnamedir$ and let $V_1$, $V_2$ and $V_3$ be any three disjoint sets of the vertices of $\graphnamedir$ with $V_1$ and $V_2$ being non-empty. 
    Then, the following holds:
    \begin{itemize}
        \item[]\textbf{\textup{(Soundness)}} For any causal model $\cm$ on $\graphnamedir$ where the sets $V_i$ are observed, we have that $d$-separation between the vertex sets $V_i$ implies conditional independence for the corresponding sets of random variables $X_i:=\{X_{\vertname}\}_{\vertname\in V_i}$ where $i\in\{1,2,3\}$, i.e.,
        \begin{equation}
         (V_1\perp^d V_2|V_3)_{\graphname} \implies (X_1\indep X_2|X_3)_{\probfacyc_{\graphname} }.
        \end{equation} 
        \item[]\textbf{\textup{(Completeness)}} If the $d$-connection $(V_1\not\perp^d V_2|V_3)_{\graphname}$ holds in $\graphnamedir$, then there exists a causal model  $\cm$ on $\graphnamedir$ such that the sets $V_i$ are observed and ${(X_1\not \indep X_2|X_3)_{\prob{\graphname}}}$..
    \end{itemize}
    The above conditional (in)dependence statements are relative to the marginal $\probacyc(x_1,x_2,x_3)_{\graphnamedir}$.
\end{theorem}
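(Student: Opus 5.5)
The proof splits into soundness and completeness, and I would treat them by different means. \textbf{Completeness} is the easier half. Classical causal models are a special case of \cref{def:causal model}: every Hilbert space carries diagonal states, every CPTP map is a stochastic map in a fixed basis, and every POVM is a projective measurement in that basis. So it suffices to exhibit a classical model on $\graphnamedir$ with $(X_1\not\indep X_2|X_3)$. Given a $V_3$-$d$-open path $\pi$ between some $v_1\in V_1$ and $v_2\in V_2$, I would take the standard construction. Let a uniformly random bit originate at the fork (or source end) of $\pi$ and be copied along every chain and fork of $\pi$. At each collider $c$, output the parity of its two incoming path bits and propagate it along a directed path from $c$ to a descendant in $V_3$; if $c\in V_3$ itself, no propagation is needed. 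All other edges carry a constant. Observed vertices in $V_3$ output the relevant parity bits, and $v_1,v_2$ output their path bits. Conditioning on the collider parities then forces the endpoint bits to be correlated. This can be checked directly on a minimal $d$-open path, which I would choose so that the directed paths to descendants do not intersect $\pi$.

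For \textbf{soundness}, I would first reduce to the ancestral subgraph $\textup{An}(V_1\cup V_2\cup V_3)$. Every vertex outside this set has no outgoing edge into the set. Applying the acyclic probability rule in a topological order, these vertices can therefore be marginalised: their outputs are traced out, and since their maps are CPTP (observed ones after summing over outcomes), they contribute factors of one, starting from the sinks and working backwards. On the ancestral graph I would invoke the purely graph-theoretic fact (Lauritzen) that $d$-separation of $V_1,V_2$ given $V_3$ is equivalent to separation by $V_3$ in the moralised ancestral graph. Removing $V_3$ from the moral graph then partitions the remaining ancestral vertices into a part $W_1\supseteq V_1$ and a part $W_2\supseteq V_2$ with no moral edge between them.

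The key step is to show that the numerator $\selfcycle(\etot^x)$, which in the acyclic case is an ordinary sequential composition, factorises as $f(x_1,x_3)\,g(x_2,x_3)$. I would argue as follows. No edge joins $W_1$ and $W_2$, so all of their interaction passes through vertices of $V_3$. Because these are \emph{observed}, each one applies $\measmaparg{x_v}{v}$, which measures its inputs and emits only classical copies $\ketbra{x_v}$ on all output edges. Once the outcomes $x_3$ are fixed, each $V_3$ vertex therefore acts as a product of a measurement effect on its inputs and fixed classical states on its outputs. The moral condition guarantees that no $V_3$ vertex has parents in both $W_1$ and $W_2$ unless it is separated appropriately; more precisely, co-parents of a common child are moral neighbours. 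Hence the network of maps, with $x_3$ fixed, splits into two disconnected subnetworks, one on $W_1\cup V_3$-effects and one on $W_2\cup V_3$-effects, possibly with some $V_3$ effects whose inputs come entirely from one side. Contracting each subnetwork separately gives $P(x_1,x_2,x_3)=f(x_1,x_3)g(x_2,x_3)$, and conditional independence follows.

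I expect the main obstacle to be precisely this splitting step in the quantum setting. Unobserved quantum vertices cannot be conditioned on and can create entanglement, so I must check carefully that every unobserved vertex lies wholly in $W_1$ or $W_2$, and that a $V_3$ vertex with inputs from both sides would create a moral edge between $W_1$ and $W_2$. The latter would contradict the separation, so the measurement effects at $V_3$ never straddle the cut.
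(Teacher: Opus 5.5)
Your proposal is correct. The paper gives no argument of its own and simply cites Henson, Lal and Pusey, and your proof follows essentially the same route as that reference: completeness is inherited from classical Bayesian networks, which are special cases of the models here. Soundness follows from restricting to the ancestral set, applying the moralisation criterion, and using the fact that fixing the outcomes of the observed, classical-output $V_3$ vertices cuts the network into a $V_1$-side and a $V_2$-side whose contractions factorise.

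The completeness sketch needs a few small fixes:
\begin{itemize}
\item Give each fork of $\pi$ its own independent uniform bit.
\item An observed vertex broadcasts its outcome on all outgoing edges, so let receiving vertices ignore unused inputs rather than asking edges to carry constants.
\item If every vertex computes only its designated relay or parity functions, intersections of the collider-to-$V_3$ paths with $\pi$ are harmless, so you need not choose them disjoint.
\end{itemize}
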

\begin{proof}
    See~\cite{Henson_2014}.
\end{proof}

\subsection*{The cyclic case: $p$-separation}
In the cyclic case, the notion of $p$-separation is introduced as a sound and complete graph-separation property. This relies on the family of acyclic causal models defined in~\cref{def: graph_family_v3}.
\begin{definition}[$p$-separation]
\label{def: p-separation}
Let $\graphnamedir$ be a directed graph and $V_1$, $V_2$ and $V_3$ denote any three disjoint subsets of the vertices of $\graphnamedir$ with $V_1$ and $V_2$ being non-empty. Then, denoting $p$-separation as $\perp^p$ and $p$-connection as $\not\perp^p$
 \begin{equation}
 \begin{aligned}
(V_1\perp^pV_2|V_3)_{\graphnamedir} \iff \exists \graphtele \in \graphfamily{\graphnamedir} \st (V_1\perp^d V_2|V_3 \cup \psvertset)_{\graphtele}, \\     
(V_1\not\perp^pV_2|V_3)_{\graphnamedir} \iff \forall \graphtele \in \graphfamily{\graphnamedir} \st (V_1\not\perp^d V_2|V_3 \cup \psvertset)_{\graphtele}.
 \end{aligned}
 \end{equation}
\end{definition}

Then, the following theorem establishes soundness and completeness of $p$-separation for all finite dimensional cyclic causal models.
\begin{theorem}[$p$-separation theorem]
\label{theorem: psep_theorem}
Consider a directed graph $\graphnamedir$ and let $V_1$, $V_2$ and $V_3$ be any three disjoint sets of the vertices of $\graphnamedir$ with $V_1$ and $V_2$ being non-empty. 
    Then, denoting conditional independence with $\indep$ the following holds:
    \begin{itemize}
        \item[]\textbf{\textup{(Soundness)}} For any causal model $\cm$ on $\graphnamedir$ where the sets $V_i$ are observed and associated with random variables $X_i:=\{X_{\vertname}\}_{\vertname\in V_i}$, we have
        \begin{equation}
         (V_1\perp^p V_2|V_3)_{\graphnamedir} \implies (X_1\indep X_2|X_3)_{\prob_{\graphnamedir} }.
        \end{equation} 
        \item[]\textbf{\textup{(Completeness)}} If the $p$-connection $(V_1\not\perp^p V_2|V_3)_{\graphnamedir}$ holds in $\graphnamedir$, then there exists a causal model $\cm$ such that the sets $V_i$ are observed and the associated random variables satisfy ${(X_1\not \indep X_2|X_3)_{\prob_{\graphnamedir}}}$.
    \end{itemize}
\end{theorem}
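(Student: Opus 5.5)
The plan is to reduce both directions to the acyclic $d$-separation theorem (\cref{theorem: dsep theorem}) via the teleportation family $\graphfamily{\graphnamedir}$. The bridge is \cref{lemma: acyclic_prob_same_v3,prop:probs as self cycles_v3}: for any causal model $\cm$ on $\graphnamedir$ and any $\graphtele\in\graphfamily{\graphnamedir}$,
\begin{equation*}
\prob(x)_{\graphnamedir}=\probacyc\left(x \middle| \{\postoutcome_i=\ok\}_{\postvertname_i\in\psvertset}\right)_{\graphtele},
\end{equation*}
and the success probabilities vanish simultaneously. Throughout, I restrict to consistent models, i.e.\ those with $\sum_x\selfcycle(\etot^x)\neq0$, so that both sides are defined.

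\textbf{Soundness.} Assume $(V_1\perp^p V_2|V_3)_{\graphnamedir}$, and fix $\graphtele$ with $(V_1\perp^d V_2|V_3\cup\psvertset)_{\graphtele}$.
\begin{enumerate}
\item Given $\cm$ on $\graphnamedir$, build the teleportation model $\cmtele$ of \cref{def:causal model of graphfamily_v3} using, say, Bell protocols. This is an acyclic causal model in which $V_1,V_2,V_3$ and all $\postvertname_i$ are observed.
\item Soundness in \cref{theorem: dsep theorem} gives $(X_1\indep X_2|X_3,T)_{\probacyc}$, where $T=\{T_i\}_{\postvertname_i\in\psvertset}$.
\item Independence conditioned on all values of $T$ holds in particular for $T=\ok$, which has positive probability by consistency. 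This value is exactly the conditioning in the displayed identity.
\item Marginalising over the remaining observed vertices commutes with conditioning on $T$. Hence $(X_1\indep X_2|X_3)$ holds for $\prob_{\graphnamedir}$.
\end{enumerate}
This direction should be routine.

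\textbf{Completeness.} Assume $(V_1\not\perp^p V_2|V_3)_{\graphnamedir}$. Then in particular $(V_1\not\perp^d V_2|V_3\cup\psvertset)$ holds in the maximal teleportation graph $\graphnamedir^{\textup{max}}_{\textsc{tp}}$, and also in the teleportation graphs that split only the edges lying on cycles.
\begin{enumerate}
\item Pick a minimal $d$-connecting path $\pi$ in such a $\graphtele$, with one endpoint in $V_1$ and the other in $V_2$. Its colliders are either in $V_3$, or have descendants in $V_3\cup\psvertset$.
\item Mimic Pearl's completeness construction along $\pi$, choosing the mechanisms so that the induced cyclic model on $\graphnamedir$ witnesses the dependence.
\item Edges and vertices off $\pi$ carry trivial (one-dimensional) systems, or identity/discarding maps. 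Vertices on $\pi$ copy classical bits along chains and forks, and compute parity at colliders.
\item A collider whose active descendant is a post-selection vertex $\postvertname_i$ must be handled differently. There I route the collider's output through the split edge, so that the Bell post-selection $T_i=\ok$ (which enforces equality of the two teleported ends) acts as the conditioning that opens the collider.
\item Check that the resulting cyclic model is consistent, $\sum_x\selfcycle(\etot^x)\neq0$. I would verify this directly by computing the self-cycle on classical basis states, where it reduces to counting fixed points of a classical functional model.
\end{enumerate}

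\textbf{Main obstacle.} The hardest step is the completeness direction. The acyclic completeness theorem only produces \emph{some} model on $\graphtele$, not one in which the $\postvertname_i$ and $\prevertname_i$ form teleportation protocols. One must therefore work on $\graphnamedir$ itself and show two things at once: that the required post-selected colliders can be realised by genuine loops without making the model inconsistent, and that the path can be chosen uniformly in the teleportation graph.

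My first attempt would use classical deterministic mechanisms together with the fixed-point (functional) characterisation from the companion classical paper~\cite{Ferradini_2025C}. There, cyclic completeness is already available, and the quantum statement would be obtained by embedding classical mechanisms as diagonal CPTP maps.
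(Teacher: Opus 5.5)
The paper gives no argument of its own and defers to \cite{Quantum_paper}. Your soundness argument is the one that reference rests on: pass to a teleportation model, apply acyclic $d$-separation soundness with conditioning set $V_3\cup\psvertset$, specialise to $\postoutcome_i=\ok$ (which has positive probability by consistency), and use $\prob(x)_{\graphnamedir}=\probacyc(x\mid\{\postoutcome_i=\ok\}_{\postvertname_i\in\psvertset})_{\graphtele}$, which commutes with marginalisation.

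For completeness, the reduction in your last paragraph is the right argument and should replace steps 1--5 rather than serve as a fallback. It works as follows.
\begin{itemize}
\item A classical cyclic functional model is a causal model in this framework with diagonal maps: observed vertices measure in the computational basis and copy the outcome onto every outgoing edge.
\item For such maps, $\selfcycle(\etot^x)$ is the sum, over edge assignments consistent with the mechanisms, of the product of transition probabilities. Up to normalisation this is the probability rule of \cite{Ferradini_2025C}.
\item The family $\graphfamily{\graphnamedir}$, and hence $p$-separation, is defined identically in both settings.
\end{itemize}
Since completeness only asks for one witnessing model, the classical witness is also a quantum one.

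If you do keep the direct construction, two things need fixing.

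First, $d$-connection in $\graphnamedir^{\textup{max}}_{\textsc{tp}}$ is the wrong hypothesis. Take the DAG $a\to w\leftarrow b$, $w\to s$. In the maximal teleportation graph the collider $w$ acquires a post-selection child, so $a$ and $b$ are $d$-connected given $\psvertset$ there. Yet $X_a\indep X_b$ in every model, and the unsplit graph, which lies in the family, $d$-separates them. The right choice of graph follows from three observations:
\begin{itemize}
\item After contracting the teleportation gadgets, paths between original vertices and the chain/fork conditions are the same for every member of $\graphfamily{\graphnamedir}$; only collider openness varies.
\item Collider openness is smallest for the graphs that split only edges lying on directed cycles. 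There, a collider outside $V_3$ is open if and only if it reaches (possibly trivially), along edges lying on no directed cycle, a vertex of $V_3$ or a vertex lying on a directed cycle.
\item Hence $p$-connection is equivalent to $d$-connection in any one of these graphs, and that is where your path must be taken.
\end{itemize}

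Second, step 4 is not yet a mechanism. In the model on $\graphnamedir$, a collider whose opening descendant is a post-selection vertex is opened only if its value is fed into a directed cycle whose number of consistent assignments depends on that value. For instance, you could XOR it into a bit circulating around the cycle. You would then have to control how this cycle and the descendant path interact with the rest of the path and with the other colliders. That is exactly the nontrivial part the classical completeness proof supplies.
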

\begin{proof}
    See~\cite{Quantum_paper}.
\end{proof}

\end{document}